\newcommand{\qwblank}[1][-1]{\ar @[white]@{-} [0,#1]}
\newcommand{\ctrlblank}[1]{\qwx[#1] \qwblank}
\newcommand{\ctrlblankA}[1]{\control \qwx[#1] \qwblank}
\newcommand{\targblank}{*+<.02em,.02em>{\xy ="i","i"-<.39em,0em>;"i"+<.39em,0em> **\dir{-}, "i"-<0em,.39em>;"i"+<0em,.39em> **\dir{-},"i"*\xycircle<.4em>{} \endxy} \qwblank}
\newcommand{\etal}{\textit{et~al.}}
\newcommand{\ie}{\textit{i.e.}}
\newcommand{\etc}{etc.}
\newtheorem{theorem}{Theorem}[section]
\newtheorem{lemma}[theorem]{Lemma}
\newtheorem{corollary}{Corollary}[theorem]
\newtheorem{conjecture}{Conjecture}[section]
\newtheorem{definition}[theorem]{Definition}
\numberwithin{equation}{section}
\begin{document}

\title{On Groups in the Qubit Clifford Hierarchy}

\author{Jonas T. Anderson}
\address{}
\email{jonas.tyler.anderson@gmail.com}
\affiliation{Northrop Grumman Corporation}
\orcid{0000-0002-3230-6712}

\maketitle

\begin{abstract}

Here we study the unitary groups that can be constructed using elements from the qubit Clifford Hierarchy. We first provide a necessary and sufficient canonical form that semi-Clifford and generalized semi-Clifford elements must satisfy to be in the Clifford Hierarchy. Then we classify the groups that can be formed from such elements. Up to Clifford conjugation, we classify all such groups that can be constructed using generalized semi-Clifford elements in the Clifford Hierarchy. We discuss a possible minor exception to this classification in the appendix. This may not be a full classification of all groups in the qubit Clifford Hierarchy as it is not currently known if all elements in the Clifford Hierarchy must be generalized semi-Clifford. In addition to the diagonal gate groups found by Cui et al., we show that many non-isomorphic (to the diagonal gate groups) generalized symmetric groups are also contained in the Clifford Hierarchy. Finally, as an application of this classification, we examine restrictions on transversal gates given by the structure of the groups enumerated herein which may be of independent interest.
\end{abstract}

\section{Prologue: The Clifford Hierarchy}

The Pauli and Clifford groups are ubiquitous in quantum information processing. The Clifford Hierarchy, while less widely known, builds upon these groups (though it is not a group) and is frequently encountered in its own right. It typically rears its head in tasks centered on processing of stabilizers states (or groups) when processes involving Clifford operations are assumed to be free (or at least easy) to implement. The difficulty of the task typically increases with level in the Hierarchy. 

The Clifford Hierarchy ($\mathcal{CH}$) was originally defined \cite{Gottesman1999} as the set of gates that could be realized via gate teleportation with Pauli gate corrections though it has grown to encompass many areas of research. 

A (likely incomplete) list of myriad places where the Clifford Hierarchy is featured follows: In magic state distillation schemes \cite{Bravyi:2005a, Bravyi2012,Meier:2012a}, the state distilled is typically used as a resource state to apply a gate from the lower levels of the Clifford Hierarchy. Resource theories classifying the difficulty of preparing magic states find that higher levels in $\mathcal{CH}$ tend to correspond to more `valuable' resources \cite{Howard:2017a}. Transversal gates (and even low-depth circuits) on stabilizer codes have been shown \cite{OConnor:2018a} to only implement logical operators from finite levels in the $\mathcal{CH}$. Also, the local physical gates which can be applied to implement a transversal gate on a qubit stabilizer code have been shown \cite{Anderson:2016a,Englbrecht2020} in some cases to be in $\mathcal{CH}$. With few exceptions, local symmetries of stabilizer states and graph states are in the Clifford Hierarchy \cite{Englbrecht2020}. Many normal forms use unitary gate sets consisting of the Clifford group and some element(s) in the lower levels of the Clifford Hierarchy. These normal forms, in turn, provide the mathematical underpinning for exact (and approximate) gate compiling algorithms. Recently \cite{Frembs2022}, the $\mathcal{CH}$ was used in the classification of Boolean functions which can be computed using non-adaptive measurement-based quantum computation. There, the local operations available to act on the cluster state were contained in a finite level of the Clifford hierarchy.

With all these applications, it is no surprise that the Clifford Hierarchy has been studied abstractly in its own right. We briefly review some of the results here.

The $n$-qubit Clifford Hierarchy \cite{Gottesman1999, Zeng2008} is recursively defined as 

\begin{equation}\label{CHdef}
    \mathcal{CH}_{k} = \{U | U P U^\dagger \subseteq \mathcal{CH}_{k-1}, \forall P\in\mathcal{P}_n\}
\end{equation}
with the first level ($k=1$) defined as $\mathcal{CH}_1 \equiv \mathcal{P}_n$, the $n$-qubit Pauli group. $\mathcal{CH}_2$ is the $n$-qubit Clifford group. 

Throughout this paper we will refer to elements of the $n$-qubit Pauli group as Pauli strings and to elements consisting of tensor products of $X$ ($Z$) and Identity as Pauli $X$ ($Z$) strings. We will often refer to the unitary matrices defined up to a global $U(1)$ phase as gates. We will say that a gate $U$ which satisfies Eqn.~\ref{CHdef} is in the Clifford Hierarchy at the $k$th level. We will also say that a gate is in the Clifford Hierarchy (written $\mathcal{CH}$) if it is in $\mathcal{CH}_{k}$ at any level $k$. It is known that the Clifford Hierarchy is a finite set for any $k$. For $k=1$ we have the Pauli group, for $k=2$ the Clifford group, and for $k\ge 3$ the elements no longer form a group. Note the following relation holds: $\mathcal{CH}_1 \subset \mathcal{CH}_2 \subset ... \mathcal{CH}_k$ since each level contains all elements in the lower levels and contains distinct elements not present in the lower levels \cite{Zeng2008}. 

We will briefly consider how the elements in the set $\mathcal{CH}_k$ for $k\ge 3$ fail to be a group. Associativity is clear since all elements are faithfully represented by unitary operators. The identity operator is in $\mathcal{CH}_1$ and, therefore, all higher levels have identity. The product of elements in $\mathcal{CH}_{k\ge 3}$, however, is not typically in the same level or even in $\mathcal{CH}$ at any level! For example, $HT\in \mathcal{CH}_3$, but $(HT)^2 \notin \mathcal{CH}$. We will discuss some properties of inverses for the cases of semi-Clifford and generalized semi-Clifford later.

As mentioned, the Pauli and Clifford groups are in $\mathcal{CH}$. In Cui \etal~ \cite{Cui2017DiagonalHierarchy} all the diagonal gates in the qubit (and qudit) Clifford Hierarchy were classified. These gates were also shown to form groups at each level in $\mathcal{CH}$. 

Work to elucidate the structure of elements in $\mathcal{CH}$ was undertaken in \cite{Zeng2008, Beigi2010C3Operations,Bengtsson2014OrderHierarchy}. We will briefly introduce semi-Clifford and generalized semi-Clifford gates to better explain the structure of $\mathcal{CH}$ that has been discovered so far.

\begin{definition}[Semi-Clifford Gate]
A gate in $U(2^n)$ is semi-Clifford iff it can be written as $C_1 \prod_j\exp \left( i\theta_j Z_j \right) C_2$ where $C_1,C_2$ are $n$-qubit Clifford gates, $\theta \in [0,2\pi)$, and $Z_j$ are non-identity $Z$ Pauli strings.
\end{definition}

\begin{definition}[Generalized Semi-Clifford Gate]
A gate in $U(2^n)$ is generalized semi-Clifford iff it can be expressed as
\begin{equation}
 U = C_1 P \prod_j\exp \left( i\alpha_j Z_j \right) C_2   
\end{equation}
where $C_1,C_2$ are $n$-qubit Clifford gates, $\theta \in [0,2\pi)$, $Z_j$ are $Z$ Pauli strings, and $P$ is a permutation matrix in $U(2^n)$.
\end{definition}

The following table summarizes what is currently known about the structure of the qubit Clifford Hierarchy:

\begin{table}[h]
\centering
$
\begin{array}{l|c|c|c|c|c}
& k=1  & k=2  & k=3 & k=4 & k\ge5\\
\hline
n=1 & \mathcal{P} & \mathcal{C} & \mathcal{SC} & \mathcal{SC} & \mathcal{SC} \\
\hline
n=2 & \mathcal{P} & \mathcal{C} & \mathcal{SC}  & \mathcal{SC} & \mathcal{SC} \\
\hline
n=3 & \mathcal{P} & \mathcal{C} & \mathcal{SC}  & \neg \mathcal{SC} (\mathcal{GSC}?)  &  \neg \mathcal{SC} (\mathcal{GSC}?)\\
\hline
n=4 & \mathcal{P} & \mathcal{C} & \mathcal{GSC}  & \neg \mathcal{SC} (\mathcal{GSC}?)  & \neg \mathcal{SC} (\mathcal{GSC}?) \\
\hline
n\ge5 & \mathcal{P} & \mathcal{C} & \mathcal{GSC}  & \neg \mathcal{SC} (\mathcal{GSC}?)  & \neg \mathcal{SC} (\mathcal{GSC}?) \\
\end{array}
$
\caption{Here $k$ indicates the level in the Clifford Hierarchy, $n$ the number of qubits, $\mathcal{P}$ indicates the Pauli group, $\mathcal{C}$ indicates the Clifford group, $\mathcal{SC}$ indicates that the elements are semi-Clifford, $\mathcal{GSC}$ indicates that the elements are generalized semi-Clifford, and $\neg \mathcal{SC} (\mathcal{GSC}?)$ indicates that gates are known to exist which are not semi-Clifford (though they are generalized semi-Clifford), but it has not been proven that all such elements must be generalized semi-Clifford. The results presented in this table were proven in \cite{Zeng2008, Beigi2010C3Operations}.}\label{table:CH} 
\end{table}

Much has been accomplished in these studies, but a major conjecture remains open.

\begin{conjecture}[The Generalized semi-Clifford Conjecture]
All elements in the qubit Clifford Hierarchy are generalized semi-Clifford.
\end{conjecture}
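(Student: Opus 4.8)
The plan is to prove the conjecture by induction on the level $k$, with the cases recorded in Table~\ref{table:CH} as the base: $\mathcal{CH}_1,\mathcal{CH}_2$ are (generalized semi-)Clifford by definition, and all of $\mathcal{CH}_3$ is generalized semi-Clifford by the theorem of Beigi and Shor \cite{Beigi2010C3Operations}. Assuming every element of $\mathcal{CH}_{k-1}$ is generalized semi-Clifford, one must deduce it for $\mathcal{CH}_k$. It helps to recast the target: call an orthonormal basis of $\mathbb{C}^{2^n}$ a \emph{stabilizer basis} if it is the joint eigenbasis of a maximal abelian subgroup of $\mathcal{P}_n$. Comparing the two definitions in the excerpt, $U$ is generalized semi-Clifford if and only if $U$ carries some stabilizer basis onto a stabilizer basis (equivalently, $U$ conjugates the span of some maximal abelian subgroup of $\mathcal{P}_n$ onto the span of another); the semi-Clifford condition is the same statement with the additional demand that the induced identification of index sets be affine, and dropping affineness is precisely what the permutation matrix $P$ in the generalized semi-Clifford form buys. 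So the inductive step becomes: given $U\in\mathcal{CH}_k$, produce a maximal abelian subgroup $\mathcal{M}\le\mathcal{P}_n$ whose eigenbasis $U$ sends to a stabilizer basis.

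To find such an $\mathcal{M}$ I would use the conjugation data of $U$. For any maximal abelian $\mathcal{M}\le\mathcal{P}_n$ the operators $W_\sigma:=U\sigma U^\dagger$ with $\sigma\in\mathcal{M}$ all lie in $\mathcal{CH}_{k-1}$, they pairwise commute, they are involutions once phases are fixed so that the $\sigma$ are Hermitian, and $\sigma\mapsto W_\sigma$ is a homomorphism; their joint eigenbasis is $U$ applied to the eigenbasis of $\mathcal{M}$. By the inductive hypothesis each $W_\sigma$ is generalized semi-Clifford, and the explicit canonical form for generalized semi-Clifford elements of the Clifford Hierarchy proved earlier in the paper pins down these $\mathcal{CH}_{k-1}$ elements rather tightly. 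The crux of the step is then a structural lemma: a commuting family of involutions in $\mathcal{CH}_{k-1}$ that arises as $U\mathcal{M}U^\dagger$ for some Pauli $\mathcal{M}$ and some $U\in\mathcal{CH}_k$ must, for a suitable choice of $\mathcal{M}$, be simultaneously diagonalized in a stabilizer basis. Granting that, $U$ carries the eigenbasis of $\mathcal{M}$ to a stabilizer basis and is therefore generalized semi-Clifford, which closes the induction.

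The main obstacle --- and the reason the statement is still open --- is exactly this last lemma. For $k=3$ the $W_\sigma$ are Cliffords, so one has affine symplectic geometry at one's disposal (common maximal isotropic subspaces, Clifford normal forms) to force the required common stabilizer structure, which is essentially how \cite{Beigi2010C3Operations} proceeds. For $k\ge4$ the $W_\sigma$ are merely generalized semi-Clifford, and the data of such a gate includes an \emph{arbitrary} permutation matrix, which offers no linear structure to intersect or average over the various $\sigma$; it is not even clear a priori that the joint eigenbasis of the $W_\sigma$ need be a stabilizer basis, and if it is not, a genuinely new invariant would be called for --- the possible exceptional case flagged in the appendix should be read as a warning here. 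Accordingly, the realistic intermediate targets are (i) the smallest open case, $\mathcal{CH}_4$ for $n=3$ and $n=4$, treated by substituting the now-explicit canonical form for $\mathcal{CH}_3$ elements into the analysis of the $W_\sigma$ directly, and (ii) the conjecture under the extra hypothesis that $U$ conjugates some abelian Pauli subgroup of large rank into $\mathcal{P}_n$: this makes $U$ block-diagonal after Clifford conjugation and reduces it to a direct sum of gates on fewer qubits, opening the way to an auxiliary induction on $n$ whose base cases $n\le2$ are already in Table~\ref{table:CH}, and to the diagonal classification of Cui \etal~\cite{Cui2017DiagonalHierarchy} for the monomial pieces that remain.
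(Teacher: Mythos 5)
This statement is labelled a conjecture in the paper, and the paper explicitly offers no proof of it (``Regrettably, we make little progress towards settling this conjecture one way or the other''), so there is no in-paper argument to compare yours against. Your submission is, by your own account, a research plan rather than a proof, and it does not close the conjecture. The reformulation you use is sound and standard: $U$ is generalized semi-Clifford iff it maps the joint eigenbasis of some maximal abelian subgroup $\mathcal{M}\le\mathcal{P}_n$ to a stabilizer basis (up to phases), and the induction on the level $k$ via the commuting family $W_\sigma = U\sigma U^\dagger\in\mathcal{CH}_{k-1}$ is exactly the frame in which \cite{Beigi2010C3Operations} settle $k=3$. But the ``structural lemma'' you isolate --- that for a suitable $\mathcal{M}$ the commuting family $\{W_\sigma\}$ is simultaneously diagonalized in a stabilizer basis --- is not a lemma you prove or reduce to anything simpler; it \emph{is} the conjecture, merely restated one level down. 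The inductive hypothesis gives essentially no purchase on it: knowing that each $W_\sigma$ individually is generalized semi-Clifford tells you each one separately preserves \emph{some} stabilizer basis, but different $\sigma$ may be associated with different stabilizer bases, and nothing in the canonical form $C_1 P D C_2$ forces a commuting family of such gates to share one. Indeed the joint eigenbasis of the $W_\sigma$ is forced on you (it is $U$ applied to the eigenbasis of $\mathcal{M}$), so the freedom ``for a suitable choice of $\mathcal{M}$'' is the only lever, and you give no mechanism for exploiting it when $k\ge 4$.

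Two smaller cautions. First, your parenthetical gloss that $U$ ``conjugates the span of some maximal abelian subgroup of $\mathcal{P}_n$ onto the span of another'' is not what generalized semi-Clifford requires: the image group $U\mathcal{M}U^\dagger$ need not consist of Pauli operators at all (if it did, left Clifford multiplication would reduce $U$ to the semi-Clifford case); what is required is only that the image \emph{basis} be a stabilizer basis, i.e.\ the eigenbasis of some other maximal abelian Pauli subgroup, with $U\mathcal{M}U^\dagger$ merely contained in the group of monomial matrices over that basis. Second, your claim that dropping affineness of the induced index map ``is precisely what the permutation matrix $P$ buys'' conflates the semi-Clifford/generalized semi-Clifford distinction with Clifford-versus-general permutations in a way the paper's Appendix \ref{P3gates} shows is delicate: the permutations that can occur are constrained to lie in $\mathcal{CH}$ themselves, which is a strictly stronger condition than being an arbitrary non-affine permutation. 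Your intermediate targets (i) and (ii) are reasonable directions, but as written the proposal establishes nothing beyond what Table~\ref{table:CH} already records.
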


Regrettably, we make little progress towards settling this conjecture one way or the other.   

As a preview, Fig.~\ref{fig:CHGroups} shows how some of the groups studied in this paper fit into the Clifford Hierarchy.
\begin{figure}[!h]
     \centering
      \includegraphics[scale=0.8]{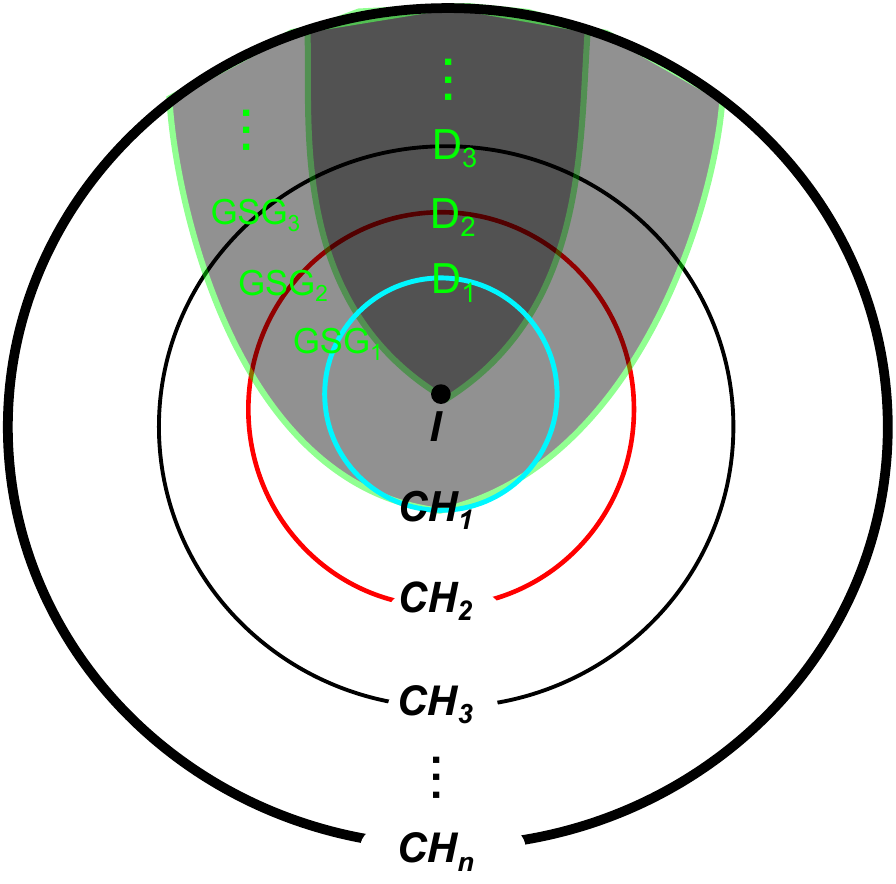}
     \caption{Groups in the Clifford Hierarchy. The blue circle indicates the group $\mathcal{CH}_1$ (the Pauli group), the red circle indicates the group $\mathcal{CH}_2$ (the Clifford group), the diagonal gate groups for each level, $k$, in $\mathcal{CH}$ (classified in \cite{Cui2017DiagonalHierarchy}) are indicated by $\mathcal{D}_k$, and the groups introduced in this paper, $C_{\Pi} \ltimes \mathcal{D}_k$ (a generalized symmetric group), are indicated by $GSG_k$. The set of all gates in the Clifford Hierarchy at level $k$, $\mathcal{CH}_{k\ge 3}$, does not form a group. At every level $\mathcal{D}_k \subset GSG_k$. Note that $GSG_1$ is the Pauli group and, therefore the full Pauli group is a subgroup of $GSG_k$.}
      \label{fig:CHGroups}
\end{figure}

We should also note that the structure of the qudit Clifford Hierarchy has recently been studied \cite{deSilva2021EfficientDimensions}. There, some evidence for similar structure as in the qubit Hierarchy was presented but less has been proven.

\section{Fixing Notation}
Let $\mathcal{P}_1 = \langle i, X, Z\rangle$ be the single-qubit Pauli group. The $n$-qubit Pauli group is then given by $\mathcal{P}_n = \mathcal{P}_1^{\otimes n}$. We refer to elements of the $n$-qubit Pauli group as Pauli strings and to elements in the subgroup of $\mathcal{P}_n$ consisting of tensor products of $X$s ($Z$s) and Identity matrices as Pauli $X$ ($Z$) strings.

The normalizer of the Pauli group, $\mathcal{P}_n$, over $SU(2^n)$ is the $n$-qubit Clifford group. Any element, $C$, in the $n$-qubit Clifford group maps Pauli strings $p\in \mathcal{P}_n$ to $p'\in \mathcal{P}_n$ under conjugation \ie~$CpC^\dagger = p'$. We say that a gate, $g$, (or group of gates, $G$) is `Clifford isomorphic' to another gate $g'$ (a group of gates, $G'$) if $CgC^\dagger = g'$ ($CGC^\dagger = G'$) for some Clifford gate $C$. Similarly we refer to the map, $C( \cdot)C^\dagger$, as a Clifford isomorphism.  
A stabilizer group, $\mathcal{S}$, is an abelian subgroup of $\mathcal{P}_n$ with $-\mathbb{I}\notin \mathcal{S}$. This group is generated by $l\le n$ independent generators (Pauli strings). The number of elements in a stabilizer group is $2^l$, and we will refer to the rank of a stabilizer group as $l$ in this case. When $l=n$, we say that the stabilizer group is full rank.

The diagonal gates in the Clifford Hierarchy, $\mathcal{D}$, were fully classified in Cui \etal~ \cite{Cui2017DiagonalHierarchy}. At any finite level $k$ in $\mathcal{CH}$, they form a finite abelian group denoted $\mathcal{D}_k$. This group is generated by $\frac{\pi}{2^k}$ rotations about all Pauli $Z$ axes. That is, rotations defined by $Z_j[\frac{\pi}{2^k}]\triangleq \exp (i\frac{\pi}{2^k}Z_j)$ where $Z_j$ is any non-identity Pauli $Z$ string. This group is also generated by well-known gates. For $n\ge k$ qubits, the following gates generate $\mathcal{D}_k$\cite{Zeng2008}: 
\begin{equation*}
\langle Z_i[\frac{\pi}{2^k}], \Lambda^1_{i_1,i_2}(Z[\frac{\pi}{2^{k-1}}]),...,\Lambda^{k-1}_{i_1,...,i_k}(Z[\frac{\pi}{2}])\rangle.
\end{equation*}
Here, $\Lambda^k(U)$ denotes the $k$-controlled $U$ gate, a gate with support on $k+1$ qubits; the index $i$ runs over all qubits; $i_1, i_2$ is over all pairs of qubits, \etc~ Note that the order of indices does not matter here since the gates are symmetric. For $n<k$ a similar set of generating gates can be written, but the set of generators must be truncated at $m=n-1$ controls since an $(m-1)$-controlled gate for $m>n-1$ is not supported on $n$ qubits. 

The group of $2^n \times 2^n$ diagonal unitary matrices with $2^l$ root of unity entries denoted $\mbox{Diag}^n_l$ is also useful in this work. We are actually interested in the group $\mbox{Diag}^n_l/U(1)$ since a global phase is unphysical. We `fix' this gauge by requiring the top-left entry of all matrices in the group to be 1. In a slight abuse of notation, we will denote this group by $\mbox{Diag}^n_l$ and only use the latter in the sequel. $\mbox{Diag}^n_l$ is generated by  $\langle Z_i[\frac{\pi}{2^k}], \Lambda^1_{i_1,i_2}(Z[\frac{\pi}{2^k}]),...,\Lambda^{n-1}_{i_1,...,i_n}(Z[\frac{\pi}{2^k}])\rangle$. For large enough $k$ and $l$ any diagonal gate in $\mathcal{CH}$ is an element of $\mathcal{D}_k$ and $\mbox{Diag}_l$, respectively (See Appendix \ref{sec:diag} for more details).

In this work, rotations about Pauli axes play a key role and spend some time introducing our notation below. A rotation about the Pauli axis $\sigma$ by an angle $\theta$ is given by
\begin{equation*}
    R_\sigma(\theta) \triangleq e^{i\sigma\theta}.
\end{equation*}

We denote conjugation by $U$ as
\begin{equation*}
    U R_\sigma(\theta) U^\dagger = R_{U\sigma U^\dagger}(\theta),
\end{equation*}
though the new axis $U\sigma U^\dagger$ is not generally Pauli. However, if $U=C$, a  Clifford unitary, then
\begin{equation*}
    C R_\sigma(\theta) C^\dagger = R_{C\sigma C^\dagger}(\theta) = R_{\sigma'}(\theta),
\end{equation*}
where $\sigma'$ is some Pauli rotation axis.

In this work, we will often encounter products of commuting Pauli rotations (each of which are in the Clifford Hierarchy). We denote a product of such rotations as
\begin{equation*}
    \prod_j R_{\sigma_j}\left(\frac{\alpha_j\pi}{2^{k_j}}\right)
\end{equation*}
where $\alpha_j \in \{0,\pm 1, ..., \pm 2^{k_j}\}$ and all $\sigma_j$ commute (by definition). Since $R_\sigma(\theta) = R_{-\sigma}(-\theta)$ we can always choose the $\sigma_j$ such that they are in some stabilizer group $S$.

We say a Clifford gate, $C$, preserves a stabilizer group, $S$, (and the underlying states) if
\begin{equation*}
    C R_{S_i}(\theta) C^\dagger = R_{S_i'}(\pm\theta),
\end{equation*}
for all $S_i \in S$. Here, $S_i'$ also denotes an element of $S$, and $\pm \theta$ indicates that $C$ may change the sign of $\theta$ and still preserve the stabilizer group.

For example, the stabilizer group of all Pauli $Z$ strings, $S_Z$, is preserved by any Clifford gate generated by $C_{\Pi} = \langle CNOT, X\rangle$ (we refer to these gates as Clifford permutations) and also by any diagonal Clifford gate $C_D = \langle CZ, S \rangle$. The group of all Clifford gates which preserve $S_Z$ is then given by $G_{S_Z}\triangleq C_{\Pi} \ltimes C_D$ where $\ltimes$ indicates a semi-direct product of groups. Elements in $C_D$ commute with all Pauli $Z$ strings (and therefore leave them invariant) and elements in $C_{\Pi}$ can permute Pauli $Z$ strings with other Pauli $Z$ strings in the stabilizer group. These are the only Clifford elements which (via conjugation) preserve the full-rank stabilizer group $S_Z$. 

For other full-rank stabilizer groups (stabilizer groups of maximum size ($|S|=2^n$) on $n$ qubits), $S$, we have that $CS_Z C^\dagger = S$. A Clifford circuit, $C$, always exists which maps all elements of $S_Z$ to another full-rank stabilizer group $S$ (see Lemma \ref{lem:encoding}) and the group of Clifford gates which preserve $S$ is given by $G_{S}=CG_{S_Z}C^\dagger$.  

If $C$ preserves $S$ (\ie~ $C\in G_S$), we have that
\begin{equation*}
    \prod_j C R_{S_j}\left(\frac{\alpha_j\pi}{2^{k_j}}\right)C^\dagger = \prod_j R_{S_j'}\left(\frac{\pm \alpha_j\pi}{2^{k_j}}\right)
\end{equation*}
and $S_j, S_j'\in S$ for all $j$. We can see that all $R_{S_j},R_{S_j'}$ must commute.

In addition, we also look at how (classical) permutations act on Pauli-$Z$ rotations $R_{\sigma_Z}(\theta)$. In what follows, $\Pi^n$ denotes the group of permutations on $2^n$ computational basis states defined on $2^n \times 2^n$ permutation matrices. Note these matrices (gates) are universal for reversible classical computation. A generating set for $\Pi^n$ on $n$ qubits is given by $\langle C^{n-1}(X), SWAP_{i,j}, X_i \rangle$ where $C^{n-1}(X)$ is the $(n-1)$-controlled NOT gate, $SWAP_{i,j}$ is the $SWAP$ gate between any two qubits $i$ and $j$ (though nearest-neighbor $SWAP$s would suffice), and $X_i$ is Pauli $X$ on any qubit $i$ (though a single $X_1$ would suffice). For $Z$ rotations in $\mathcal{CH}$, elements $P \in\Pi$ have the following effect:

\begin{equation*}
    P R_{S_{Z}}\left(\frac{\alpha\pi}{2^{k}}\right) P^\dagger = \prod_j R_{S_{Z_j}}\left(\frac{\alpha_j'\pi}{2^{k_j}}\right).
\end{equation*}
That is, permutations, under conjugation, take diagonal rotations in $\mathcal{CH}$ to products of diagonal rotations in $\mathcal{CH}$. Note that these products are also diagonal matrices. On a fixed number of qubits, $n$, a permutation can only take a diagonal gate at level $k$ in $\mathcal{CH}$ to a product of diagonal gates at level, at most, $k+n$ (see Appendix \ref{sec:diag}). Also, the number of terms $j$ in the product is bounded simply because there are only $2^n-1$ (non-trivial) commuting Pauli $Z$ strings. We note that this mapping is constrained in additional ways such as preserving the spectrum of $R_{S_Z}$, but we do not use these properties here. For clarity, in what follows we denote the classical permutations mentioned above as $\Pi_{S_Z}$ to distinguish them from other groups of permutations.   

Under conjugation by Clifford gates, elements of $C \Pi_{S_Z} C^\dagger = \Pi_{S}$ will act as `permutations' on a new state basis and act on Pauli rotations in $S=CS_Z C^\dagger$ as 
\begin{equation*}
    P R_S\left(\frac{\alpha\pi}{2^{k}}\right) P^\dagger = \prod_j R_{S_j}\left(\frac{\alpha_j'\pi}{2^{k_j}}\right)
\end{equation*}
where $P\in \Pi_{S}$.

Finally, we can take elements from the group of Clifford gates which preserve $S$, $C_S$, and elements from the group of permutation gates which preserve $S$, $\Pi_S$, and examine their combined effect when applied to $R_{S_i}$:
\begin{equation*}
    C P \prod_i R_{S_i}\left(\frac{\alpha\pi}{2^{k_i}}\right) P^\dagger C^\dagger = \prod_j R_{S_j}\left(\frac{\alpha_j'\pi}{2^{k_j}}\right)
\end{equation*}
where $C\in C_S$ and $P\in \Pi_S$.

As above, $S_i, S_j\in S$ for all $i,j$ and this implies that all $R_{S_i},R_{S_j}$ must also commute.

A brief warning: when defining rotations in this section and others, we use the convention that $R_\sigma(\theta) \triangleq e^{i\sigma\theta}$; however when writing circuit diagrams, we use the standard convention which differs from the rotations above by a global phase. For example, on a single qubit $R_Z\left(\frac{\pi}{8}\right) = e^{-i\pi/8}T$. Thus far, this seems like only a global phase, but we will sometimes add controls to these gates which can result in a gate not equal up to a global phase. We have made an effort to stick to the same convention in each section.

\section{Canonical Forms for semi-Clifford and Generalized semi-Clifford gates in $\mathcal{CH}$}

In this section, we introduce canonical forms for both semi-Clifford and generalized semi-Clifford gates in $\mathcal{CH}$. A gate is (generalized) semi-Clifford and in $\mathcal{CH}$ iff it can be written in this form.

A gate in $U(2^n)$ is semi-Clifford iff it can be written as $C_1 \prod_j\exp \left( i\theta_j Z_j \right) C_2$ where $C_1,C_2$ are $n$-qubit Clifford gates, $\theta \in [0,2\pi)$, and $Z_j$ are non-identity $Z$ Pauli strings. A semi-Clifford gate is sometimes said to be `diagonalizable' by left and/or right multiplication with Clifford gates. 

To be in $\mathcal{CH}$, a semi-Clifford gate, $U$, must have the following form:
\begin{equation}\label{eq:scform1}
    U = C_1 \prod_j\exp \left( i\frac{\alpha_j \pi}{2^{k_j}}Z_j \right) C_2
\end{equation}
where $\alpha_j$ are integers and $k_j$ are positive integers. We discuss additional assumptions about the form of $U$ at the end of this section. Equ.~\ref{eq:scform1} is necessary and sufficient since a semi-Clifford gate in the Clifford Hierarchy must be `diagonalizable' by Clifford multiplication, and the resulting diagonal gate must be in the Clifford Hierarchy.  All diagonal gates in the Clifford Hierarchy were classified by Cui \etal~ \cite{Cui2017DiagonalHierarchy}, and we use their classification for qubit diagonal gates here\footnote{We provide an alternative proof of this classification in Appendix \ref{shortProof}.}.

We will often use an equivalent expression for Equ.~\ref{eq:scform1}:
\begin{equation}\label{SCSF}
    U = C_1 C_2 C_2^\dagger \prod_j\exp \left( i\frac{\alpha_j \pi}{2^{k_j}}Z_j \right) C_2 = C \prod_j\exp \left( i\frac{\alpha_j' \pi}{2^{k_j}}S_j \right)
\end{equation}
where $S_j$ indicates a non-identity stabilizer element in a stabilizer group $S$ and $C$ is a Clifford gate. We refer to these elements as stabilizers since they are Pauli strings and must all commute. They are Pauli because Clifford gates map Pauli strings to Pauli strings and must commute since any isomorphism (here $C_2^\dagger(\hspace{1ex}) C_2$) must preserve commutation relations of elements. Note the $S_j$ terms do not need to include all elements in a stabilizer group, $S$; however a stabilizer group must exist which contains all $S_j$. Here, $\alpha_j' = \pm \alpha_j$ and we can always choose $\alpha_j'$ such that the stabilizer elements have the correct $\pm 1$ phase to be in the stabilizer group\footnote{If a stabilizer element requires multiplication by $-1$ to be in the stabilizer group, we can `interpret' a rotation about Pauli axis $\sigma$ by an angle $\alpha$, $R_\sigma(\alpha)$, as the equivalent rotation, $R_{-\sigma}(-\alpha)$, which now has the desired $-1$ factor on the Pauli string.}. 

A gate, $U \in U(2^n)$ is generalized semi-Clifford iff it can be expressed as
\begin{equation}
 U = C_1 P \prod_j\exp \left( i\alpha_j Z_j \right) C_2   
\end{equation}
where $C_1,C_2$ are $n$-qubit Clifford gates, $Z_j$ are $Z$ Pauli strings, and $P$ is a permutation matrix in $U(2^n)$. A generalized semi-Clifford gate is said to be `diagonalizable' by Clifford gates and a permutation. Note that all semi-Clifford gates are generalized semi-Clifford gates; simply set the permutation to Identity. 

To be in $\mathcal{CH}$, it is necessary and sufficient for a generalized semi-Clifford gate, $U$, to have the following form:
\begin{equation}
    U = C_1 P D C_2
\end{equation}
where $D=\prod_j\exp \left( i\frac{\alpha_j \pi}{2^{k_j}}Z_j \right)$ is a diagonal gate in $\mathcal{CH}$ (defined exactly the same as in the semi-Clifford case above) and $P$ indicates a permutation in $\mathcal{CH}$. 

Proof of the canonical form for generalized semi-Clifford gates is somewhat lengthy and it is deferred to Appendix \ref{sec:GsCProof}.

There are no other restrictions on $U$ and a generalized semi-Clifford gate is in $\mathcal{CH}$ iff it can be written in this form.

We will often use the equivalent expression:
\begin{equation} \label{GSCSF}
    U = C_1 C_2 C_2^\dagger P C_2 C_2^\dagger D C_2 = C \Tilde{P} \Sigma
\end{equation}
where $\Sigma=\prod_j\exp \left( i\frac{\pm \alpha_j \pi}{2^{k_j}}S_j \right)$ with all $S_j$ belonging to some stabilizer group $S$, $\Tilde{P}$ permutes basis states stabilized by $S$ as $P$ permuted basis states in the computation basis, and $C=C_1C_2$ is a Clifford gate. Note that $U$ as well as $C$, $\Tilde{P}$, and $S$ are all in $\mathcal{CH}$.

\subsection{Additional assumptions about the canonical form of $U$}
Throughout the paper, unless otherwise noted, we make the following assumptions in regard to (generalized) semi-Clifford gates in the Clifford Hierarchy. We will assume that rotation angles such as $\alpha_j/2^{k_j}$ are given as irreducible fractions and that any Clifford gate (multiples of $\pi/4$ rotations about a Pauli axis) in the diagonal part of $U$ ($\prod_j \exp i\frac{\alpha_j \pi}{2^{k_j}}Z_j$) has been moved to the Clifford part of $U$ (labelled by $C_1$, $C_2$, or $C$ in the equations above) unless otherwise noted. We will also assume that all non-Clifford rotations about the same Pauli axis are combined into a single rotation. After combining terms, the index $j$ in Equ.~\ref{eq:scform1} (or the number of commuting non-Clifford rotations in $S$ in Equ.~\ref{GSCSF}) can then go over at most $2^n -1$ terms (each corresponding to a commuting nontrivial Pauli string). Finally, we assume that no product of commuting non-Clifford rotations is proportional to the identity. If this occurs, the product of rotations can simply be replaced with the identity (possibly times a phase) resulting in an equivalent gate. These assumptions can all be made without loss of generality.  

In the next two sections we will look at the groups that can be formed with semi-Clifford and generalized semi-Clifford gates, respectively. 

\section{Groups of semi-Clifford gates in $\mathcal{CH}$}

While it is well known that gates exist in $\mathcal{CH}$ which are not semi-Clifford, all gates on $n\le 2$ qubits, as well as many other common gates, are semi-Clifford and they warrant some study before moving on to the more general case. Moreover, the structure of these groups is simple to state and comes with fewer caveats than the general case.

In what follows, we will classify the semi-Clifford matrix groups in $\mathcal{CH}$. We make use of many ideas of Englbrecht and Kraus \cite{Englbrecht2020} in this section and the next. 

First, let $U$ be a semi-Clifford gate in $\mathcal{CH}$. Hereafter, we will simply refer to these gates as semi-Clifford, keeping in mind that they are in $\mathcal{CH}$. $U$ can, therefore, be written in canonical form (see Eqn.~\ref{SCSF}). We will examine the additional requirements (beyond Eqn.~\ref{SCSF}) for gates to be generators of a group $G$ of semi-Clifford operators in $\mathcal{CH}$. As we will show, all elements of $G$ will be subject to these constraints. 

Writing $U$ in canonical form we have

\begin{equation*}
    U=C \prod_{j}\exp \left( i\frac{\alpha_{j} \pi}{2^{k_{j}}}S_{j} \right)
\end{equation*}
where $C$ is an $n$-qubit Clifford gate and the terms in the product are commuting non-Clifford gates. If $U\in G$, then $U^2$ must also be in $G$ and we have
\begin{equation*}
    U^2=C \prod_{j_1}\exp \left( i\frac{\alpha_{j_1} \pi}{2^{k_{j_1}}}S_{j_1} \right)C \prod_{j_2}\exp \left( i\frac{\alpha_{j_2} \pi}{2^{k_{j_2}}}S_{j_2} \right) = \Tilde{C} \prod_{j_3}\exp \left( i\frac{\alpha_{j_3} \pi}{2^{k_{j_3}}}S_{j_3} \right),
\end{equation*}
where the RHS is a semi-Clifford gate written in canonical form. This is required since (by our definition of $G$) all elements in $G$ must be semi-Clifford and all semi-Clifford gates can be written in this form.

But, $U^2$ can also be written as

\begin{multline*}
    U^2 = C C C^\dagger \prod_{j_1}\exp \left( i\frac{\alpha_{j_1} \pi}{2^{k_{j_1}}}S_{j_1} \right)C \prod_{j_2}\exp \left( i\frac{\alpha_{j_2} \pi}{2^{k_{j_2}}}S_{j_2} \right) = \\ 
    C^2 \prod_{j_1}\exp \left( i\frac{\alpha_{j_1}' \pi}{2^{k_{j_1}}}S_{j_1}' \right) \prod_{j_2}\exp \left( i\frac{\alpha_{j_2} \pi}{2^{k_{j_2}}}S_{j_2} \right)
\end{multline*}

which must also be semi-Clifford. And this requires that 
\begin{equation*}
     \Tilde{C} \prod_{j_3}\exp \left( i\frac{\alpha_{j_3} \pi}{2^{k_{j_3}}}S_{j_3} \right)= C^2 \prod_{j_1}\exp \left( i\frac{\alpha_{j_1}' \pi}{2^{k_{j_1}}}S_{j_1}' \right) \prod_{j_2}\exp \left( i\frac{\alpha_{j_2} \pi}{2^{k_{j_2}}}S_{j_2} \right).
\end{equation*}
Since all $\alpha_{j},\alpha_{j}'\ne 0$ and $k_j \ge 3$, we can only satisfy this equation when all $S_{j}$ and $S_{j}'$ commute. And by examining how $C$ acts under conjugation on the non-Clifford gates, we see that all $S_{j}$ and $S_{j}'$ commute iff $C^\dagger S_j C \in S$ for some stabilizer group $S$ which contains all $S_{j}$ and $S_{j}'$. In other words, we require that a stabilizer group exists which contains all $S_{j}$ and $S_{j}'$ and this is the case when $C$ is in the normalizer of $S$. The normalizer of $S$ in the Clifford group $C$ is
\begin{equation*}
    N_{C}(S) = \{ g \in C\hspace{1ex}|\hspace{1ex} gSg^{-1} = S\}.
\end{equation*}

We will revisit these constraints on $C$, but first we will show that all elements (not just powers of $U$) in $G$ must have commuting non-Clifford gates. 

Let $U,V$ be semi-Clifford gates (in $\mathcal{CH}$). They can, therefore, be written in canonical form (see Eqn.~\ref{SCSF}). 

Given semi-Clifford gates $U$ and $V$ in canonical form we have

\begin{equation*}
    U=C_1 \prod_{j_1}\exp \left( i\frac{\alpha_{j_1} \pi}{2^{k_{j_1}}}S_{j_1} \right)
\end{equation*}
and 
\begin{equation*}
    V=C_2 \prod_{j_2}\exp \left( i\frac{\alpha_{j_2} \pi}{2^{k_{j_2}}}S_{j_2} \right).
\end{equation*}

To form a group of semi-Clifford gates we must have that $UV\in G$. We can express this requirement as

\begin{equation*}
    UV=C_1 \prod_{j_1}\exp \left( i\frac{\alpha_{j_1} \pi}{2^{k_{j_1}}}S_{j_1} \right)C_2 \prod_{j_2}\exp \left( i\frac{\alpha_{j_2} \pi}{2^{k_{j_2}}}S_{j_2} \right) = C_3 \prod_{j_3}\exp \left( i\frac{\alpha_{j_3} \pi}{2^{k_{j_3}}}S_{j_3} \right),
\end{equation*}
where $C_3 \prod_{j_3}\exp \left( i\frac{\alpha_{j_3} \pi}{2^{k_{j_3}}}S_{j_3} \right)$ is some semi-Clifford gate written in canonical form. 

But, 

\begin{multline*}
    C_1 \prod_{j_1}\exp \left( i\frac{\alpha_{j_1} \pi}{2^{k_{j_1}}}S_{j_1} \right)C_2 \prod_{j_2}\exp \left( i\frac{\alpha_{j_2} \pi}{2^{k_{j_2}}}S_{j_2} \right) \\
    = C_1 C_2 C_2^\dagger \prod_{j_1} \left( i\frac{\alpha_{j_1} \pi}{2^{k_{j_1}}}S_{j_1} \right) C_2 \prod_{j_2}\exp \left( i\frac{\alpha_{j_2} \pi}{2^{k_{j_2}}}S_{j_2} \right) \\
    = C_1 C_2 \prod_{j_1} \left( i\frac{\pm\alpha_{j_1} \pi}{2^{k_{j_1}}}S_{j_1}' \right) \prod_{j_2}\exp \left( i\frac{\alpha_{j_2} \pi}{2^{k_{j_2}}}S_{j_2} \right). 
\end{multline*}

Matching terms, we again see that all non-Clifford rotations about $S_{j_1}'$ must commute with all non-Clifford rotations about $S_{j_2}$. This is equivalent to requiring that all $S_{j_1}'$ and $S_{j_2}$ be part of some stabilizer group $S$. Also, since $U^2$ must also be in $G$, we see that $S_{j_1}$ and $S_{j_1}'$ must commute and, therefore, have Pauli rotations in the same stabilizer group $S$. Similar arguments follow for other products of elements, and we see that all non-Clifford rotations in all elements of $G$ must commute and therefore, be part of some stabilizer group $S$.   

As mentioned earlier, to preserve the stabilizer group $S$, the Clifford part of each term ($C$ in the equations above) must be in the normalizer of $S$. To see why this is necessary, assume that the (potential) generators of a semi-Clifford group $G$ have non-diagonal rotations whose Pauli axes generate a full-rank stabilizer group $S$. Then, assume there exists a generator which in canonical form contains a Clifford gate, $C$, which is not in the normalizer of $S$. Therefore, $C$ must take some non-Clifford rotation by conjugation to a non-Clifford rotation about a Pauli axis not in $S$ and the product of the generators will not be semi-Clifford. We have arrived at a contradiction and must conclude that all $C$ are in the normalizer of $S$. Requiring the Clifford part, $C$, of each term to be in the normalizer of $S$ means that $C$ can only do the following: (1) commute with all elements in $S$ or (2) permute elements of $S$ (while commuting with the rest). Note that we ignore the $\pm 1$ that Clifford gates can apply since it can be absorbed into the rotation angle $\alpha$. 

For each element (expressible as a product of generators) in $G$, we now have shown that each element's commuting non-Clifford rotations must commute with the non-Clifford rotations of all other elements. We proved that this was equivalent to requiring that each Pauli rotation axis for all non-Clifford gates belongs to a common stabilizer group, $S$. Finally, we showed that the Clifford part of each element must be in the normalizer of that group.

In \cite{Cui2017DiagonalHierarchy}, all the qubit diagonal gates contained in $\mathcal{CH}_k$ were shown to form groups at each level, $k$. We denote these groups as $\mathcal{D}^n_k$. Since all the non-Clifford rotations in $G$ commute and are rotations about Pauli axes, we can conjugate each element in $G$ by the same $n$-qubit Clifford gate to diagonalize the non-Clifford rotations. Since these gates are now diagonal and in $\mathcal{CH}$, we can find a group (or subgroup of) $\mathcal{D}^n_k$ (for large enough $k$) such that all conjugated non-Clifford gates are elements of this group. We then refer to the non-Clifford gates in $G$ as `Clifford isomorphic' to a subset of $\mathcal{D}^n_k$. This is generally a subset since the group, as we shall soon see, is formed by Clifford permutations and diagonal gates which do not require that the diagonal gates form a group on their own. Under conjugation, the Clifford gates will now normalize a diagonal (Pauli $Z$ string elements) stabilizer group, but it will also normalize (under conjugation) the group $\mathcal{D}^n_k$. We sum up this discussion in the theorem below.

\begin{theorem}
A set of semi-Clifford gates in $\mathcal{CH}$ generates a group of semi-Clifford gates in $\mathcal{CH}$ iff each element, $l$, in the set, written in canonical form as $C_l \prod_{j_l}\exp \left( i\frac{\alpha_{j_l} \pi}{2^{k_{j_l}}}S_{j_l} \right)$, has all $S_{j_l}$ in a common stabilizer group, $S$, and has all $C_l$ in the normalizer of $S$ in the $n$-qubit Clifford group.
\end{theorem}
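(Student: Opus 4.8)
The plan is to prove both directions of the biconditional, with most of the work concentrated on necessity, since sufficiency is nearly immediate once the canonical form is established. For the sufficiency direction, suppose we have a set of semi-Clifford gates $\{l\}$ in $\mathcal{CH}$, each written as $C_l \prod_{j_l}\exp(i\frac{\alpha_{j_l}\pi}{2^{k_{j_l}}} S_{j_l})$ with all $S_{j_l}$ in a common stabilizer group $S$ and all $C_l \in N_C(S)$. I would take an arbitrary product of two such elements (or a product of an arbitrary word in the generators) and push all Clifford factors to the left using $C_l^\dagger S_{j} C_l \in S$, which holds because $C_l$ normalizes $S$. The result is a single leftover Clifford times a product of commuting Pauli-$Z$-axis (up to conjugation) non-Clifford rotations, all with axes still in $S$; collecting rotations about the same axis and invoking the Cui \etal~classification of diagonal gates in $\mathcal{CH}$ (which says $\mathcal{D}^n_k$ is a group, so sums of such angles stay in $\mathcal{CH}$), this is again a semi-Clifford gate in $\mathcal{CH}$ in canonical form. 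Closure under inverses is similar: $(C\Sigma)^{-1} = \Sigma^{-1} C^{-1} = C^{-1}(C\Sigma^{-1}C^{-1})$, and $C^{-1}$ normalizes $S$ so the rotation axes stay in $S$. Associativity and identity are free. Hence the set generates a group of semi-Clifford gates in $\mathcal{CH}$.

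For necessity, I would essentially assemble the argument already laid out in the body of the section into a single statement. Let $G$ be the group generated by the set, and assume every element of $G$ is semi-Clifford in $\mathcal{CH}$ (this is the hypothesis). First, for any two generators (or any two elements) $U = C_1\prod_{j_1}\exp(i\frac{\alpha_{j_1}\pi}{2^{k_{j_1}}}S_{j_1})$ and $V = C_2\prod_{j_2}\exp(i\frac{\alpha_{j_2}\pi}{2^{k_{j_2}}}S_{j_2})$, write $UV = C_1 C_2 \prod_{j_1}\exp(i\frac{\pm\alpha_{j_1}\pi}{2^{k_{j_1}}}S_{j_1}') \prod_{j_2}\exp(i\frac{\alpha_{j_2}\pi}{2^{k_{j_2}}}S_{j_2})$ with $S_{j_1}' = C_2^\dagger S_{j_1} C_2$. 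Since $UV$ must also be expressible in canonical form and all the angles are genuinely non-Clifford ($k_j \ge 3$, $\alpha_j \neq 0$, fractions in lowest terms by the standing assumptions), the only way the two products can coexist inside one canonical-form expression is for all the $S_{j_1}'$ and $S_{j_2}$ to commute pairwise; here I would cite the uniqueness/rigidity of the canonical form and the fact that a non-commuting pair of non-Clifford Pauli rotations cannot be rewritten as a product of commuting ones (their product has eigenvalues/structure incompatible with the diagonalizable-by-Clifford form). Running this over all pairs of generators and all powers $U^2$ (to also force $S_{j_1}$ and $S_{j_1}'$ to commute), every non-Clifford Pauli axis appearing in any element of $G$ lies in a common abelian Pauli group, hence — after fixing signs — in a common stabilizer group $S$. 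Finally, the normalizer condition: if some generator's Clifford part $C$ failed to normalize $S$, then $C^\dagger S_j C$ would be a Pauli string outside $S$ for some $j$, so a suitable product of generators would contain a non-Clifford rotation about an axis not commuting with the rest, contradicting that the product is semi-Clifford; therefore every $C_l \in N_C(S)$.

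The step I expect to be the genuine obstacle is the rigidity claim used in necessity: that when $UV$ is written with its Clifford part pulled out front, the remaining product of non-Clifford rotations $\prod \exp(i\theta_{j_1} S_{j_1}')\prod\exp(i\theta_{j_2}S_{j_2})$ can only be re-expressed in semi-Clifford canonical form (i.e., as a single Clifford times a product of \emph{commuting} Pauli-$Z$-axis rotations in $\mathcal{CH}$) when all those axes already commute. Intuitively this is because a product of commuting non-Clifford Pauli rotations is a specific kind of operator — diagonal in a stabilizer basis with prescribed root-of-unity spectrum — and introducing a non-commuting pair breaks diagonalizability by Clifford conjugation; but making this airtight requires either appealing to the Cui \etal~classification of exactly which diagonal gates lie in $\mathcal{CH}$ (so that "the diagonal part must be in $\mathcal{D}^n_k$" has real content) or giving a direct spectral/eigenbasis argument. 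I would lean on the Cui \etal~classification together with the observation that $C_2^\dagger(\cdot)C_2$ is an isomorphism preserving commutation relations (as already noted in the excerpt around Eqn.~\ref{SCSF}), so that "the $S_{j_1}'$ commute with the $S_{j_2}$" is forced by "the canonical-form $S_{j_3}$ all commute." The remaining parts — the normalizer argument and the passage to a single stabilizer group $S$ via closure under products and powers — are routine bookkeeping on top of this.
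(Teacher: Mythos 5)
Your proposal follows essentially the same route as the paper: the paper's proof is exactly the preceding discussion in that section (expand $U^2$ and $UV$, push the Clifford parts left via conjugation, argue the non-Clifford Pauli axes must pairwise commute and hence lie in a common stabilizer group, then derive the normalizer condition by contradiction), and your sufficiency argument matches the paper's implicit closure reasoning. The ``rigidity'' step you flag as the genuine obstacle is precisely the step the paper also asserts with only the brief justification that $\alpha_j\ne 0$ and $k_j\ge 3$ force commutation, so your treatment is, if anything, more candid about where the argument leans on the Cui \emph{et al.} classification.
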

This is a restatement of the constraints derived in this section.

The normalizer of the group of diagonal gates in $U(2^n)$ is the group of unitary generalized permutation matrices. These matrices have one non-zero entry (which must be some $e^{i\theta}$ since the matrix is unitary) in each row and column. The normalizer of the diagonal gates in the $n$-qubit Clifford group is the group of diagonal Clifford gates $C_D$ and its normalizer in the Clifford group, the Clifford permutations, $C_{\Pi}$. We can express this group as the semi-direct product $C_{\Pi}\ltimes C_D$. Elements of this group are Clifford gates which map diagonal gates to diagonal gates under conjugation. We say the non-Clifford rotations ($\prod_{j_l}\exp \left( i\frac{\alpha_{j_l} \pi}{2^{k_{j_l}}}S_{j_l}\right)$) in the elements of $G$ have full support on $n$ qubits when the only allowable Clifford gates for any element of $G$ are from a group Clifford-isomorphic to $C_{\Pi}\ltimes C_D$. In this case we also refer to the Clifford gates as restricted. Note that this Clifford group ($C_{\Pi}\ltimes C_D$) is in the normalizer of any diagonal matrix group. Sometimes, however, additional Clifford gates are in the normalizer and we discuss how this can occur below. 

If the non-Clifford gates in all elements in $G$ have no support on some qubit(s) then any Clifford gate can be applied to these qubits in the Clifford portion of any element in $G$. These unconstrained Clifford gates are not always so apparent, and generally a subsystem (of size equal to some number of qubits) can be left unconstrained by the non-Clifford rotations. Additionally, the product of non-Clifford rotations in each element of $G$ is not unique and other products of Pauli rotations in the same stabilizer group can equal the same product. This can further obfuscate the unconstrained subsystem. Checking if any unconstrained subsystems exist is very similar to checking for the existence of a noiseless subsystem \cite{Zanardi1997NoiselessCodes, Lidar1998a, Choi2006MethodSubsystems}. To see this, treat each non-Clifford rotation in each element of $G$ as a Kraus operator in an error channel and then check for the existence of a noiseless subsystem. If an unconstrained subsystem exists, then by Clifford conjugation of each element in $G$ (and possibly expressing some products of Pauli-axis rotations by equivalent products of Pauli-axis rotations), we can write each non-Clifford rotation such that it has no support (acts trivially) on some number of qubits.    

Now, we can use the following lemma to simultaneously diagonalize the non-Clifford rotations in each element of $G$.

\begin{lemma}[Encoding circuit Lemma]\label{lem:encoding}
For every stabilizer group on $n$ qubits, there exists an $n$-qubit Clifford circuit which by conjugation takes all elements of $S$ to Pauli $Z$ strings (a diagonal stabilizer group). Furthermore, if the stabilizer group is on $n$ qubits, but has $2^m$ elements for some $m<n$, then a Clifford circuit exists which by conjugation takes the elements of $S$ to Pauli $Z$ strings on $m$ qubits and to Identity on the remaining $n-m$ qubits. 
\end{lemma}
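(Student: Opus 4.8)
The plan is to prove the Encoding Circuit Lemma in two steps, first handling the full-rank case and then reducing the general case to it. The key classical fact underlying everything is that any set of $l$ independent commuting Pauli strings can be transformed, by a Clifford unitary acting by conjugation, into the $l$ single-qubit $Z$ operators $Z_1, \dots, Z_l$; this is standard in the stabilizer formalism (e.g.\ the symplectic Gram–Schmidt / tableau reduction argument), and I would either cite it or sketch it. So the proof is essentially bookkeeping around that fact.

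First I would treat a full-rank stabilizer group $S$ on $n$ qubits, generated by $n$ independent commuting Pauli strings $g_1, \dots, g_n$ (with $-\mathbb{I} \notin S$). I build the Clifford circuit generator by generator. Since $g_1$ is a nontrivial Pauli string, there is a single-qubit Clifford (Hadamard and/or phase) together with CNOTs that conjugates $g_1$ to $X_1$ — concretely, pick a qubit on which $g_1$ acts nontrivially, apply a local Clifford to make that factor $X$, then use CNOTs to clear the other qubits; finally apply a Hadamard to turn $X_1$ into $Z_1$. Now consider the images of $g_2, \dots, g_n$ under this partial circuit: they still commute with the image of $g_1$ (Clifford conjugation preserves commutation), so each must act as $I$ or $Z$ on qubit $1$; multiplying by the already-fixed generator $Z_1$ as needed, I may assume $g_2, \dots, g_n$ are supported on qubits $2, \dots, n$ only. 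Recurse on those $n-1$ qubits. After $n$ steps every $g_i$ has been mapped to $Z_i$ (up to sign), so the whole group $S$ is mapped into the group of Pauli $Z$ strings; since both groups have $2^n$ elements, $S$ is mapped \emph{onto} $S_Z$. A global sign issue ($g_i \mapsto -Z_i$) cannot occur because $-\mathbb{I} \notin S$ would be violated, but in any case one can absorb signs by noting $-Z_i$ is not needed since we're free to also conjugate by $X_i$.

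For the general case, $S$ has $2^m$ elements with $m < n$, generated by $m$ independent commuting Pauli strings $g_1, \dots, g_m$. I run exactly the same procedure: it terminates after $m$ steps having sent $g_i \mapsto Z_i$ for $i = 1, \dots, m$, hence $S$ onto the group of Pauli $Z$ strings supported on the first $m$ qubits, which act as Identity on qubits $m+1, \dots, n$. This is precisely the claimed form. The only subtlety I need to be slightly careful about is that at each recursive step the ``remaining'' generators really can be assumed supported on the untouched qubits — this follows because commutation with $Z_j$ forces each remaining generator to act as $I$ or $Z$ on qubit $j$, and the $Z$ case is removed by multiplying into the fixed generator, which does not change the group $S$.

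The main obstacle — such as it is — is not conceptual but presentational: making the inductive clearing argument airtight, in particular verifying that after fixing $g_1 \mapsto Z_1$ the remaining generators can indeed be made trivial on qubit $1$ without destroying independence or introducing $-\mathbb{I}$, and that the local-Clifford-plus-CNOT gadget that sends an arbitrary nontrivial Pauli string to $X_1$ genuinely exists (it does — this is the ``standard form'' of a stabilizer generator). Since all of this is textbook stabilizer-code theory, I would keep the write-up brief: state the single-generator gadget, invoke induction on $n$ (resp.\ on $m$ with $n$ as a parameter), and note the counting argument that upgrades ``into'' to ``onto'' in the full-rank case.
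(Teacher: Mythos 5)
Your proof is correct and takes essentially the same route as the paper, which does not argue the lemma from scratch but simply invokes the standard encoding-circuit fact (citing Nielsen--Chuang, Gottesman, and Witt's theorem) that any $m$ independent commuting generators can be mapped by Clifford conjugation to $Z_1,\dots,Z_m$ --- exactly the construction you carry out inductively. The one slip is your aside that $g_i \mapsto -Z_i$ ``would violate $-\mathbb{I}\notin S$'' (it would not), but your fallback of conjugating by $X_i$ to absorb signs is valid, so nothing breaks.
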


A Clifford circuit that performs the desired operation is very similar to an encoding circuit for a qubit stabilizer code (or state). In fact, a Clifford circuit exists which (by conjugation) takes any $m$ generators of a $n$-qubit stabilizer group, $S$, to any other $n$-qubit stabilizer group, $S'$, generated by $m$ generators. A Clifford circuit which performs this operation is clearly sufficient to prove our lemma. References to encoding circuits are found in \cite{Nielsen:2000a} and \cite{Gottesman:1997a}. This can also be seen as a consequence of Witt's Theorem as noted in \cite{Heinrich2021Thesis}. Additionally, efficient (in number of qubits $n$) algorithms for constructing such circuits can be found in \cite{Gottesman:1997a}. We will refer to the qubits after conjugation by this Clifford circuit as non-Clifford qubits if non-Clifford rotations have non-trivial support on them and to Clifford qubits otherwise.

Note this Clifford circuit is not unique since we haven't specified how the pure errors (anti-stabilizers) are mapped; however, we do not need uniqueness in any of our constructions, and existence of such a circuit suffices.  

Under conjugation by the Clifford circuit mentioned above, the Clifford part, $C$, of each element in $G$ will map to (1) a diagonal Clifford, (2) to a Clifford permutation, or, if it is unconstrained by the non-Clifford rotations, (3) to some Clifford gate. (1) or (2) are required for any part of $C$ with support on the non-Clifford qubits. Note, that $C$ can map to some combination of (1), (2), and (3) provided it only maps to a combination of (1) and (2) on the non-Clifford qubits.

In some cases Clifford permutations (CNOT gates) can have support on both the Clifford and non-Clifford qubits. This complicates the structure theorem below. We will assume, for now, that conjugation by the Clifford circuit above does not map a Clifford gate to a Clifford permutation with support on both Clifford and non-Clifford qubits. This is discussed further in Appendix \ref{PermsOnC}.

Now, we can state our first structure theorem. 

\begin{theorem}
Every group, $G$, of semi-Clifford gates on $n$ qubits in $\mathcal{CH}$ must be Clifford-isomorphic to a subgroup of \footnote{Up to certain permutations described in Appendix \ref{PermsOnC}.}

\begin{equation}\label{struct1}
    C_{\Pi}^n \ltimes \mathcal{D}_l^n \mbox{ or } (C_{\Pi}^{n-1} \ltimes \mathcal{D}_l^{n-1}) \times C^1  \mbox{ or } ... \mbox{ or } (C_{\Pi}^1 \ltimes \mathcal{D}_l^{1}) \times C^{n-1} \mbox{ or } C^n.
\end{equation}
\end{theorem}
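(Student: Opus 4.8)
The plan is to chain together the three tools already in hand: the generation criterion (the first theorem of this section), the Encoding circuit Lemma \ref{lem:encoding}, and the fact recorded in the Fixing Notation section that the Clifford stabilizer of the full-rank diagonal stabilizer group $S_Z$ is exactly $C_\Pi \ltimes C_D$. First I would fix a group $G$ of semi-Clifford gates in $\mathcal{CH}$. By the first theorem there is a common stabilizer group $S$ such that every element of $G$ has canonical form $C_l \prod_{j_l}\exp\!\left(i\frac{\alpha_{j_l}\pi}{2^{k_{j_l}}}S_{j_l}\right)$ with all axes $S_{j_l}\in S$ and all $C_l\in N_{C}(S)$. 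Let $m$ be the minimal number of qubits on which the collection of all non-Clifford rotations appearing in $G$ can be supported, minimized over Clifford conjugations and over rewritings of a product of commuting Pauli rotations by an equivalent product; concretely, $m$ is the rank of the stabilizer group generated by those rotation axes, and the ``hidden'' excess is detected by treating the non-Clifford rotations as Kraus operators and extracting the noiseless (unconstrained) subsystem, as described before the theorem.

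Next I would diagonalize. Apply Lemma \ref{lem:encoding} to obtain a Clifford circuit $C$ with $CSC^\dagger = S_Z^{(m)}$, the group of Pauli $Z$ strings on the first $m$ qubits and identity elsewhere. Conjugating $G$ by $C$, each element becomes $\tilde C_l D_l$, where $D_l = C\big(\prod_{j_l}\exp(i\frac{\alpha_{j_l}\pi}{2^{k_{j_l}}}S_{j_l})\big)C^\dagger$ is a diagonal gate of $\mathcal{CH}$ supported on the first $m$ qubits, and $\tilde C_l = CC_lC^\dagger$ normalizes $S_Z^{(m)}$. Since the Clifford normalizer of the full-rank diagonal stabilizer group on $m$ qubits is $C_\Pi^m\ltimes C_D^m$, the restriction of $\tilde C_l$ to the non-Clifford qubits lies in $C_\Pi^m\ltimes C_D^m$, while on the remaining $n-m$ Clifford qubits $\tilde C_l$ may be an arbitrary Clifford. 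Under the working assumption (deferred to Appendix \ref{PermsOnC}) that no Clifford permutation produced by $C$ straddles the Clifford/non-Clifford cut, $\tilde C_l$ factors as $\tilde C_l^{\mathrm{nc}}\otimes\tilde C_l^{\mathrm{c}}$ across that cut.

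Then I would assemble the pieces. The diagonal Clifford part $C_D^m = \mathcal{D}_2^m$ and the non-Clifford diagonal rotations $D_l$ are all diagonal gates of $\mathcal{CH}$ on $m$ qubits, hence lie in $\mathcal{D}_l^m$ for $l$ at least the largest level appearing (by the classification of Cui \etal); moreover $C_\Pi^m$ normalizes $\mathcal{D}_l^m$ under conjugation. Therefore every element of $CGC^\dagger$ lies in $(C_\Pi^m\ltimes\mathcal{D}_l^m)\times C^{n-m}$, and since $CGC^\dagger$ is a group it is a subgroup of this product. Letting $m$ range over $0,1,\dots,n$ — with $m=n$ giving $C_\Pi^n\ltimes\mathcal{D}_l^n$ and $m=0$ giving $C^n$ — recovers exactly the list in (\ref{struct1}).

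The main obstacle is not the algebra but the bookkeeping in identifying $m$ and obtaining a genuine tensor decomposition: non-Clifford rotations in different elements of $G$ need not share the same Pauli axes, the product of rotations within a single element is not unique, and an unconstrained subsystem can be masked by these ambiguities — hence the noiseless-subsystem reduction is essential, and without it one would overcount the non-Clifford qubits. The one genuinely deferred point is a Clifford permutation (a $CNOT$) with legs on both sides of the Clifford/non-Clifford cut: this breaks the tensor factorization in the previous step and is precisely the source of the ``up to certain permutations'' caveat in the footnote, handled in Appendix \ref{PermsOnC}.
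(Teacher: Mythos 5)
Your proposal is correct and follows essentially the same route as the paper: the generation criterion, the noiseless-subsystem identification of the unconstrained qubits, the Encoding Circuit Lemma to diagonalize onto $S_Z^{(m)}$, the identification of the normalizer with $C_\Pi^m\ltimes C_D^m$, and the deferral of straddling $CNOT$s to Appendix \ref{PermsOnC}. No substantive differences to report.
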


Here we have combined the diagonal Clifford gates into $\mathcal{D}_l$ and denoted the group of Clifford permutations on $n$ qubits by $C_{\Pi}^n$ and the Clifford group on $l$ qubits as $C^l$. The groups $C_{\Pi}^m \ltimes \mathcal{D}_l^{m}$ are sometimes called generalized symmetric groups \cite{Osima1954OnGroup,Puttaswamaiah1969UnitaryGroups}\footnote{Additional properties of these groups are briefly discussed in Appendix \ref{sec:GSG}}. We use the notation $C_{\Pi}^m \ltimes \mathcal{D}_l^{m}$ to indicate the semi-direct product of the $m$-qubit Clifford permutation group (a subgroup of the $2^m \times 2^m$ permutation group) with the diagonal gate group $\mathcal{D}_l^{m}$. It is easily verified that $C_{\Pi}^m$ is in the normalizer of $\mathcal{D}_l^{m}$ and that $C_{\Pi}^m \cap \mathcal{D}_l^{m} = \{I\}$ (the intersection of the groups is trivial).

Examples of semi-Clifford groups are provided in Appendix \ref{sec:examples}.

We conclude this section by giving a recipe that can be used to construct any semi-Clifford group in $\mathcal{CH}$. We will construct groups which have non-Clifford diagonal rotations and note that all other groups are Clifford isomorphic to these. 

\begin{enumerate}
    \item Fix the number of qubits, $0\le m \le n$, which can support the full Clifford group. We will call these the Clifford qubits and refer to the other qubits as non-Clifford qubits.
    \item Choose a set of generating elements $g_i = (\pi_i^{n-m}d_i^{n-m})\otimes c_i^m$ where each $\pi_i^{n-m}$ is a Clifford permutation on the $n-m$ non-Clifford qubits, each $d_i^{n-m}$ is a diagonal gate in $\mathcal{CH}$ on the $n-m$ non-Clifford qubits, and $c_i^m$ is a Clifford gate on the $m$ Clifford qubits. 
    \item If desired add any number of CNOTs with control on the non-Clifford qubits and target on the Clifford qubits to the existing generators or as new generators. See Appendix \ref{PermsOnC} for more details. 
\end{enumerate}

\section{Groups of Generalized semi-Clifford gates in $\mathcal{CH}$}

A generalized semi-Clifford gate is any unitary `diagonalizable' by left/right Clifford multiplication and a (classical) permutation. In this section we will proceed similarly to the last section except that now each element in $G$ is generalized semi-Clifford. We will derive constraints on Clifford and permutations gates set by the the non-Clifford rotations. 

Note conjugation by a non-Clifford permutation does not generally preserve the level of a diagonal gate in $\mathcal{CH}$. In fact, the full group of diagonal gates in $\mathcal{CH}_k$, denoted $\mathcal{D}_k^{m}$, is only preserved under conjugation by Clifford permutations. There are, however, groups of diagonal gates in $\mathcal{CH}$ which have the entire group of permutations (represented by $2^m \times 2^m$ permutation matrices) in their normalizer. These groups have elements represented by $2^m \times 2^m$ diagonal matrices which generate all diagonal matrices with $2^l$ root of unity entries. Since conjugation by a permutation cannot change the eigenvalues of a diagonal matrix and all diagonal matrices which are permutations of these eigenvalues are in this group, we see that the entire permutation group is in the normalizer of this group. It is easy to show that these diagonal matrices are in $\mathcal{CH}$ (see Appendix \ref{sec:diag}) and we denote them by $Diag_l^m$.  

Let $U,V$ be generalized semi-Clifford gates (in $\mathcal{CH}$). They can, therefore, be written in canonical  form (see Eqn.~\ref{GSCSF}). 

With $U$ and $V$ in canonical form we have

\begin{equation*}
    U=C_1 P_1 \prod_{j_1}\exp \left( i\frac{\alpha_{j_1} \pi}{2^{k_{j_1}}}S_{j_1} \right)
\end{equation*}
and 
\begin{equation*}
    V=C_2 P_2 \prod_{j_2}\exp \left( i\frac{\alpha_{j_2} \pi}{2^{k_{j_2}}}S_{j_2} \right)
\end{equation*}
where $P_1$ is a permutation on basis states stabilized by a stabilizer group $S_1$ and $P_2$ is a permutation on basis states stabilized by a stabilier group $S_2$.

Recall that $P_1$ is Clifford isomorphic to a (generally non-Clifford) permutation and it acts on some stabilizer basis $S_1$ (under conjugation) as a classical permutation acts on Pauli $Z$ strings. That is, it can take a single Pauli rotation $\exp \left( i\frac{\alpha_{j_1} \pi}{2^{k_{j_1}}}S_{j_1} \right)$ to a product of Pauli rotations. All Pauli rotations in this product, however, are required to be in $S_1$. All terms in the product of non-Clifford rotations in $U$ and $V$ are required to be in $S_1$ and $S_2$, respectively, since they are written in canonical form. Note that $S_1$ or $S_2$ might not be fully `pinned down' by the permutation; this is just shorthand for any stabilizer group that the permutation preserves.  

To form a group of generalized semi-Clifford gates we must have that all products of $U$ and $V$ are in $G$. We can express this requirement on $UV$ as

\begin{multline*}
    UV=C_1 P_1 \prod_{j_1}\exp \left( i\frac{\alpha_{j_1} \pi}{2^{k_{j_1}}}S_{j_1} \right)C_2 P_2 \prod_{j_2}\exp \left( i\frac{\alpha_{j_2} \pi}{2^{k_{j_2}}}S_{j_2} \right)\\
    = C_3 P_3\prod_{j_3}\exp \left( i\frac{\alpha_{j_3} \pi}{2^{k_{j_3}}}S_{j_3} \right),
\end{multline*}
where $C_3 P_3 \prod_{j_3}\exp \left( i\frac{\alpha_{j_3} \pi}{2^{k_{j_3}}}S_{j_3} \right)$ is some generalized semi-Clifford gate written in canonical form. 

We first look at the constraints on $C_2$ in the product $UV$: 

\begin{multline*}
  C_1 P_1\prod_{j_1}\exp \left( i\frac{\alpha_{j_1} \pi}{2^{k_{j_1}}}S_{j_1} \right)C_2 P_2\prod_{j_2}\exp \left( i\frac{\alpha_{j_2} \pi}{2^{k_{j_2}}}S_{j_2} \right) \\
= C_1 C_2 P_2  \left(P_2^{-1} C_2^\dagger P_1 C_2 P_2\right)\\  \left(P_2^{-1} C_2^\dagger\prod_{j_1} \left( i\frac{\alpha_{j_1} \pi}{2^{k_{j_1}}}S_{j_1} \right) C_2 P_2\right) \prod_{j_2}\exp \left( i\frac{\alpha_{j_2} \pi}{2^{k_{j_2}}}S_{j_2} \right).   
\end{multline*}

For this term to be generalized semi-Clifford, we must have that $C_2 P_2$ is in the normalizer of $S_1$. This requires that $C_2$ must be constrained as in the semi-Clifford case. There must also exist some stabilizer group $S$ which contains all $S_{j_2}$ and all terms $S_{j_1}' = P_2^{-1} C_2^\dagger S_{j_1} C_2 P_2$. Note that a permutation, $P$, by conjugation can generally take a unitary rotation to a product of unitary rotations (or vice versa). This means that the number of stabilizer elements in $S_{j_1}'$ needs not be the same as $S_{j_1}$. A more complicated constraint arises on $P_1' = P_2^{-1} C_2^\dagger P_1 C_2 P_2$. The permutation part of each generator is required (by being written in canonical form) to be a permutation-like gate in $\mathcal{CH}$; however, the set of all permutations in $\mathcal{CH}$ is not a group under matrix multiplication. We must therefore require that all products of permutations $P_2 P_1'$ are also permutations in $\mathcal{CH}$. 

To reiterate, we require that all $S_j$ in all elements in $G$ are in some (shared) stabilizer group $S$, that all permutations $P$ act on the same stabilizer group $S$ (that is are in the normalizer of $S$), and that all permutations also form a group with all elements in $\mathcal{CH}$. The groups of permutations that have this property, to our knowledge, have not been studied in the literature and membership in $\mathcal{CH}$ must be checked for the permutation part of each element in a generalized semi-Clifford group. Note that this is in contrast to the semi-Clifford case where we could enforce the restrictions on the generators and all elements of the group were guaranteed to be semi-Clifford and in $\mathcal{CH}$. We mention a few results in Appendices \ref{sec:examples} \& \ref{sec:GSG} to assist interested parties in constructing these groups.    

\begin{definition}[Maximal Permutation groups in $\mathcal{CH}$]
A permutation group on $n$ qubits, $\tilde{\Pi}^n$, is maximal in $\mathcal{CH}$ if all elements $P \in\tilde{\Pi}^n$ are also in $\mathcal{CH}$ and the addition of any permutation $P' \in \Pi^n$ and $P' \notin \tilde{\Pi}^n$ results in a group which is no longer contained in $\mathcal{CH}$. We refer to these groups as maximal permutation groups for short. 
\end{definition}

Again, we can use the Encoding Circuit Lemma to map (via Clifford conjugation) the commuting non-Clifford, Pauli rotations in our group to commuting Pauli $Z$ (diagonal) rotations. The permutations under this same mapping will become regular (classical) permutation matrices. This new group is Clifford isomorphic to the generalized semi-Clifford groups discussed.

Now, we state our group structure theorem for generalized semi-Clifford gates. 

\begin{theorem}
Every group, $G$, of generalized semi-Clifford gates on $n$ qubits in $\mathcal{CH}$ must be Clifford-isomorphic to a subgroup of \footnote{Up to certain permutations described in Appendix \ref{PermsOnC}.}

\begin{equation}\label{struct2}
    \tilde{\Pi}^n \ltimes \mathcal{D}_l^n \mbox{ or } (\tilde{\Pi}^{n-1} \ltimes \mathcal{D}_l^{n-1}) \times C^1  \mbox{ or } ... \mbox{ or } (\tilde{\Pi}^1 \ltimes \mathcal{D}_l^{1}) \times C^{n-1} \mbox{ or } C^n.
\end{equation}
\end{theorem}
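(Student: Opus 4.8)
The plan is to mirror the proof of the semi-Clifford structure theorem, adding the bookkeeping needed to track the permutation parts. First I would collect the constraints already derived in this section: for a group $G$ of generalized semi-Clifford gates in $\mathcal{CH}$, every generator written in canonical form (Eqn.~\ref{GSCSF}) as $C_l \tilde P_l \Sigma_l$ must have all Pauli axes of its non-Clifford rotations lying in one common stabilizer group $S$; every Clifford part $C_l$ must normalize $S$ (exactly as in the semi-Clifford case, else some product fails to be generalized semi-Clifford); and every permutation part $\tilde P_l$ must normalize $S$ and lie in $\mathcal{CH}$. Closure of $G$ forces, in addition, that the permutation parts generate a permutation group \emph{all} of whose elements remain in $\mathcal{CH}$, so this group is contained in some maximal permutation group $\tilde\Pi$ in the sense defined above. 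All of these facts follow from the ``match-the-terms'' arguments of this section applied to $U^2$, $UV$, etc., using that all $\alpha_j \neq 0$ and $k_j \geq 3$, so a product of the commuting non-Clifford rotations can never collapse unless all axes commute, i.e. lie in a common stabilizer group.

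Next I would invoke the Encoding Circuit Lemma (Lemma~\ref{lem:encoding}) for the common stabilizer group $S$: if $|S| = 2^m$, there is an $n$-qubit Clifford $C_{\mathrm{enc}}$ sending $S$ by conjugation to the Pauli $Z$ strings supported on a designated set of $m$ qubits and to identity on the other $n-m$ qubits. Conjugating every element of $G$ by $C_{\mathrm{enc}}$ yields a Clifford-isomorphic group $G'$ in which: (i) the non-Clifford rotations become diagonal Pauli-$Z$ rotations on the $m$ non-Clifford qubits, hence lie in $\mathcal{D}_l^m$ for large enough $l$ (by the classification of Cui \etal~ and Appendix~\ref{sec:diag}); (ii) the permutation parts become genuine $2^m\times 2^m$ classical permutation matrices on the non-Clifford qubits, collectively forming a permutation group in $\mathcal{CH}$, hence contained in a maximal one $\tilde\Pi^m$; and (iii) the Clifford parts, being in $N_{\mathcal C}(S)$, act on the non-Clifford qubits either diagonally or as Clifford permutations, while on the $n-m$ Clifford qubits — on which all non-Clifford rotations have trivial support — they are unconstrained, i.e. range over all of $C^{n-m}$. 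Absorbing the diagonal Cliffords on the non-Clifford qubits into $\mathcal{D}_l^m$ and the Clifford permutations there into $\tilde\Pi^m$, one sees $G' \subseteq (\tilde\Pi^m \ltimes \mathcal{D}_l^m)\times C^{n-m}$, which is the $m$-th entry of Eqn.~\ref{struct2}; the extreme cases $m=n$ and $m=0$ give the first and last entries, proving the theorem.

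The main obstacle, just as in the semi-Clifford case, is that conjugation by $C_{\mathrm{enc}}$ might turn some Clifford part into a CNOT-type permutation straddling both the Clifford and non-Clifford registers, which would break the clean factorization $(\cdots)\times C^{n-m}$. I would handle this exactly as stated above Theorem: assume no such straddling permutation occurs, so each Clifford part splits as a gate on the non-Clifford qubits times a gate on the Clifford qubits, and defer the general case — where such CNOTs are permitted and the structure acquires an extra twist — to Appendix~\ref{PermsOnC}. A secondary point requiring care is the correct identification of the $n-m$ unconstrained Clifford qubits: because products of Pauli-$Z$ rotations in $S$ need not be written uniquely and some subsystem may be left invariant by every non-Clifford rotation in every element of $G$, I would first pass (by a further Clifford conjugation, following the noiseless-subsystem discussion above) to a description in which every non-Clifford rotation acts trivially on a maximal such subsystem, ensuring $m$ is minimal and the $C^{n-m}$ factor is genuinely free.
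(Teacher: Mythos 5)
Your proposal is correct and follows essentially the same route as the paper: derive the closure constraints (a common stabilizer group $S$ for all non-Clifford rotation axes, Clifford and permutation parts in the normalizer of $S$, and the permutation parts forming a group wholly contained in $\mathcal{CH}$, hence inside some maximal permutation group $\tilde\Pi$), then apply the Encoding Circuit Lemma to conjugate to the diagonal/classical-permutation form and read off the factorization, deferring the straddling-CNOT caveat to Appendix~\ref{PermsOnC}. Your additional care about identifying the unconstrained subsystem via the noiseless-subsystem argument matches the paper's own discussion in the semi-Clifford section.
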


Here we have combined the diagonal Clifford gates into $\mathcal{D}_l$ and denoted any group of permutations in $\mathcal{CH}$ on $n$ qubits by $\tilde{\Pi}^n$ and the Clifford group on $l$ qubits as $C^l$. While the structure of these groups is written similarly to the semi-Clifford group case, there are many distinct groups, $\tilde{\Pi}^n$, which are not subgroups of some larger group of permutations in $\mathcal{CH}$. In other words, each $\tilde{\Pi}^n$ denotes many distinct groups; one for each maximal permutation group on $n$ qubits.    

Examples of generalized semi-Clifford groups are provided in Appendix \ref{sec:examples}. 

We conclude this section by giving a recipe that can be used to construct most generalized semi-Clifford groups in $\mathcal{CH}$. We will construct groups which have non-Clifford diagonal rotations, and note that all other groups are Clifford-isomorphic to these. 

\begin{enumerate}
    \item Fix the number of qubits, $0\le m \le n$, which can support the full Clifford group. We will call these the Clifford qubits and refer to the other qubits as non-Clifford qubits.
    \item Choose a set of generating elements $g_i = (\pi_i^{n-m}d_i^{n-m})\otimes c_i^m$ where each $\pi_i^{n-m}$ is a ($2^{n-m}\times 2^{n-m}$) permutation matrix in $\mathcal{CH}$ which acts on the $n-m$ non-Clifford qubits, each $d_i^{n-m}$ is a diagonal gate in $\mathcal{CH}$ which acts on the $n-m$ non-Clifford qubits, and $c_i^m$ is a Clifford gate on the remaining $m$ Clifford qubits. It is necessary though not sufficient to require that the generators have permutation matrices in $\mathcal{CH}$. In this case we must also require that all elements (products of generators) have permutations in $\mathcal{CH}$. 
    \item Additional $\Lambda^n(X)$ gates ($n$-controlled NOT gates) with control(s) on the non-Clifford qubits and target on the Clifford qubits can be added to the existing generators or as new generators, provided that when combined with the existing permutations they only generate permutations in $\mathcal{CH}$. See Appendix \ref{PermsOnC} for more details. 
\end{enumerate}

This recipe can be used to construct `most' generalized semi-Clifford groups since the Clifford group on the Clifford qubits could be further restricted, allowing for a (potentially) larger set of permutations between (restricted) Clifford and non-Clifford qubits.

\section{Applications: Transversal Gate Classification}

We use a result of O'Conner, Kubica, \& Yoder \cite{OConnor:2018a} as a starting pointing in our examination of possible transversal gates in a qubit QECC. There they show that all transversal gates implementable on a stabilizer code must be in $\mathcal{CH}$ at some finite level. They actually prove a stronger result which allows for low-depth circuits and includes permutations on qubits (relabeling of qubits). Here, we use their result that all such gates must be in $\mathcal{CH}$ and look at the additional constraints that arise from the requirement that all such gates (for a given code) must form a group. 

Without assuming the generalized semi-Clifford Conjecture, we can place restrictions on the permutation gates implementable as transversal gates. In Appendix \ref{P3gates} we classify (up to Clifford multiplication) all non-Clifford permutation gates on 3 qubits and show they are all equivalent to a single Toffoli gate or the Identity. Furthermore, any group of permutations implemented via transversal gates can only have one such Toffoli gate. In other words, for any subset of three logical qubits in an $[[n,k,d]]$ stabilizer code, at most a single distinct Toffoli gate can be implemented transversally.

In Limitations of Transversal Computation through Quantum Homomorphic Encryption, Newman and Shi \cite{Newman2018LimitationsEncryption} show that no stabilizer code can implement a classical reversible universal gate set\footnote{They actually show that (with few exceptions) this is the case for a general quantum error correcting code.}. In that paper, a corollary seems to imply that no such code can implement even a single Toffoli gate transversally. A more careful reading (or simply asking one of the authors) shows \cite{Newman2022PersonalCommunication} that their proof forbids transversal Toffoli gates on any set of three (or more) logical qubits where the target can be applied to any of the logical qubits. They refer to a gate (or set of gates) with this property as a uniform Toffoli gate. 

Stabilizer codes exist which have $\Lambda^2(Z)$ ($CCZ$) transversally and, therefore, a code with logical $X$ and $Z$ definitions `flipped' on one qubit would transversally implement $\Lambda^2(X)$ (Toffoli). Note this code would (likely) no longer implement the $\Lambda^2(Z)$ gate. This shows that, at least for non-Clifford permutation gates, the restrictions we derive are tight. Additional bounds on permutation gates on four or more qubits can likely be proven using the techniques introduced here, though the number of cases to consider increases quickly.

Note that the 15-qubit Reed-Muller code can implement the group $X \ltimes \langle T\rangle$ transversally. Here $X$ is the group $\{I,X\}$ and $\langle T\rangle$ is all products of the $T$ gate. This group is a dihedral group. For a single qubit, the gate $X$ is the only non-trivial permutation gate, and is therefore a maximal permutation group. An interesting question is: do other codes with more than one logical qubit exist which have a transversal gate group with both a maximal permutation group and non-Clifford diagonal gates? Furthermore, if qubit stabilizer codes do not exist which implement such transversal gate groups, what are the additional constraints?

\section{Applications: Efficiently Simulatable Gates}
{\it Note: this section is quite speculative and is included to, hopefully, inspire further research into these topics.}

The maximal groups in the Clifford Hierarchy---groups consisting of elements within the Clifford Hierarchy that cannot be made larger by adding additional elements within the Clifford Hierarchy---are interesting objects warranting future study. Specifically, studying their connection with easy-to-simulate gate sets may prove fruitful. One such maximal subgroup, the $n$-qubit Clifford group, is ubiquitous in quantum information theory and is universal for a complexity class ({\bf parity P} denoted {\bf$\oplus$P}) which is strongly believed to be strictly weaker than the complexity class of polynomial-time circuits, {\bf P} \cite{Aaronson:2004a}. It also has the powerful property that it is a maximal finite subgroup of $SU(2^n)$ meaning that the addition of any unitary gate to the Clifford group generates an infinite group. This raises the question: Do the other maximal groups in the Clifford Hierarchy have interesting properties? Specifically, are they easy to simulate?

In addition to the Clifford group the other class of groups we found in the Clifford Hierarchy were the generalized symmetric groups. Recall these groups have elements which are permutations multiplied by a diagonal gate. The XP stabilizer formalism \cite{Webster2022} can be used to describe single-qubit $X$ and diagonal gates in $\mathcal{CH}$ or even $n$-qubit Pauli X strings and $n$-qubit diagonal gates. General $n$-qubit permutations are efficient the simulate as they are equivalent to the complexity class {\bf P} since these $n$-bit permutations are precisely the universal gates of reversible classical computation. Since many permutations are not in $\mathcal{CH}$, the generalized symmetric groups in $\mathcal{CH}$ are potentially less powerful than {\bf P}. Understanding the complexity (or lack thereof) of simulating these groups is an interesting open question. The lack of a full classification of the permutation gates in $\mathcal{CH}$ is currently a hindrance to this problem.  

 Transversal gates can be thought of as the symmetries of the underlying code \cite{Liu2021}. The Eastin-Knill Theorem \cite{Eastin:2008a} already tells us that no encoded transversal gate set can be universal for quantum computation (the complexity class {\bf BQP}) additionally Newman and Shi \cite{Newman2018LimitationsEncryption} show that a universal reversible gate set cannot be transversal; proving that the transversal gates available to any stabilizer code generate a group of gates that is strictly weaker than {\bf P} could have implications for physical processes thought to be modeled by error correcting codes such as decoding Hawking radiation from black holes \cite{Hayden2007, Yoshida2017, Leone2022}. It should be noted, however, that proving this type of separation is likely to be extremely difficult.  

\section{Summary and Future Work}
We introduced a necessary and sufficient canonical form that semi-Clifford or generalized semi-Clifford elements must satisfy to be in the Clifford Hierarchy. We also identified families of semi-Clifford and later, generalized semi-Clifford gates, which are subgroups of certain generalized symmetric groups. The groups we identified were shown to contain the diagonal gates in the Clifford Hierarchy as a strict subgroup. While we found a complete set of constraints for the generalized semi-Clifford groups, a finer-grained characterization of the permutations in the Clifford Hierarchy would further elucidate their structure.  

Currently all known gates in the Clifford Hierarchy are generalized semi-Clifford, and it is conjectured that this holds for all gates in the Clifford Hierarchy. If true, we have classified (up to possible exceptions mentioned earlier) all groups contained in the Clifford Hierarchy. We note that even if this conjecture is proven false, all groups may still be of this form. To see why this could be true, note that many single elements in $\mathcal{CH}$ such as $HT$ cannot be members of any group in $\mathcal{CH}$ since they generate an infinite group on their own $G=\langle HT \rangle$. If elements exist in $\mathcal{CH}$ which are not generalized semi-Clifford, they may be forced to have such a structure. 

An interesting open problem is to classify groups akin to the generalized symmetric groups found here for qudit stabilizer code. 

\section{Acknowledgements} 

This work stems from a project I started long ago at the Universit\'e de Sherbrooke and was inspired by conversations with David Poulin. I dedicate this work to his memory. 

I'd like to thank Matthew Weippert for a careful reading of an earlier version of this manuscript, Bryan Eastin for providing feedback and for pointing out some subtleties of transversal gates, and Winton Brown for general feedback on the ideas discussed in this manuscript. The errors that remain are entirely my own.

\appendix

\section{Proof of the Generalized semi-Clifford Canonical Form}\label{sec:GsCProof}

The goal of this section is to prove that a permutation gate, $\pi$, and a diagonal gate, $d$, are in the Clifford Hierarchy iff $\pi$ and $d$ are each in the Clifford Hierarchy. Bear with us as we introduce notation and prove some useful lemmas; we will, eventually, prove the stated result.   

We define the group of $n$-qubit diagonal gates with $2^l$ root of unity entries (with $l\ge 1$) as $\mathrm{Diag}^n_l$. This is a finite subgroup of $U(2^n)$ and $\mathrm{Diag}^n_1 \subset \mathrm{Diag}^n_2 \subset \cdots \subset \mathrm{Diag}^n_l$. The permutation group, $\Pi^n$, is the group of $2^n \times 2^n$ permutation matrices (later these same matrices will act on $n$ qubits). $\Pi^n$ is in the normalizer of any group $\mathrm{Diag}^n_l$ and the group $G_l^n = \langle \Pi^n, \mathrm{Diag}^n_l \rangle$ is a semi-direct product of $\mathrm{Diag}^n_l$ and $\Pi^n$ which is typically written as $G_l^n = \Pi^n \ltimes \mathrm{Diag}^n_l$. Note that $\mathrm{Diag}^n_l \cap \Pi^n = \{e\}$ meaning their overlap is trivial. The semi-direct product ensures that any element of $G_l^n$ can be written as $\pi d$ where $\pi\in\Pi^n$ and $d\in \mathrm{Diag}^n_l$. Furthermore, $\pi_1 d_1 = \pi_2 d_2 \iff \pi_1 = \pi_2$ and $d_1 = d_2$. Therefore, any element of $G^n_l$ can be $uniquely$ written as $\pi d$. The inverse of $\pi d = d^\dagger \pi^\dagger = \pi^\dagger (\pi d^\dagger \pi^\dagger)$.

A gate, $U$, is in $\mathcal{CH}_{k+1}$, the Clifford Hierarchy at level $k+1$, if $UPU^\dagger \subseteq \mathcal{CH}_k$ for all Pauli strings $p\in \mathcal{P}$. A gate $U$ is in $\mathcal{CH}$, the Clifford Hierarchy at $any$ level, if $UPU^\dagger \subseteq \mathcal{CH}$ for all Pauli strings $p\in P$. If $UpU^\dagger \nsubseteq \mathcal{CH}$ for any Pauli string $p\in P$, then $U\notin \mathcal{CH}$.

\begin{lemma}
Let $U\in G_{l}^n$, then $UpU^\dagger \in G_{l}^n$ for any $n$-qubit Pauli string, $p$. 
\end{lemma}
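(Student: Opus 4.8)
The plan is to use the product structure $G_l^n = \Pi^n \ltimes \mathrm{Diag}^n_l$ together with the elementary fact that every Pauli string, viewed as a matrix, is a scalar phase times a permutation matrix times a diagonal $\pm1$ matrix. First I would write an arbitrary $n$-qubit Pauli string as $p = \omega\,P_X P_Z$, where $P_X$ is its $X$-type part (a tensor product of $X$'s and $I$'s, hence a permutation matrix in $\Pi^n$), $P_Z$ is its $Z$-type part (a tensor product of $Z$'s and $I$'s, hence a diagonal matrix with $\pm1$ entries, i.e.\ an element of $\mathrm{Diag}^n_1 \subseteq \mathrm{Diag}^n_l$), and $\omega \in \{1,i,-1,-i\}$; this decomposition is just the mixed-product identity $\bigotimes_j (A_j B_j) = (\bigotimes_j A_j)(\bigotimes_j B_j)$ applied qubitwise, together with $Y = iXZ$.

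With this in hand the statement is almost immediate. Set $q := P_X P_Z \in \Pi^n \cdot \mathrm{Diag}^n_l \subseteq G_l^n$. Since $G_l^n$ is a group and $U, U^\dagger \in G_l^n$, closure gives $U q U^\dagger \in G_l^n$, so $U p U^\dagger = \omega\, U q U^\dagger$ lies in $G_l^n$ up to the global phase $\omega$ (and literally in $G_l^n$ once $l \ge 2$, since then $\omega I \in \mathrm{Diag}^n_l$; recall also that gates are identified up to a global phase). If one prefers to see this without invoking group closure, write $U = \pi d$ with $\pi \in \Pi^n$, $d \in \mathrm{Diag}^n_l$, so that $U p U^\dagger = \pi\,(d p d^\dagger)\,\pi^\dagger$; since $d$ commutes with the diagonal matrix $P_Z$ one gets $d p d^\dagger = \omega\,(d P_X d^\dagger)\,P_Z$, and a one-line entrywise computation shows $d P_X d^\dagger = P_X D$ where $D$ is diagonal with entries $D_{yy} = d_{y\oplus a}\,\overline{d_y}$ ($a$ the support bit string of $P_X$), each again a $2^l$-th root of unity, so $d P_X d^\dagger \in \Pi^n\cdot\mathrm{Diag}^n_l = G_l^n$ and conjugating by $\pi$ finishes the argument.

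I do not expect a genuine obstacle here: the lemma is a structural triviality once the semi-direct product picture is available. The only two points needing a sentence of care are (i) the bookkeeping of the Pauli phase $\omega$, dispatched by the global-phase identification of gates (or by assuming $l\ge 2$), and (ii) the entrywise check that conjugating a permutation matrix by a diagonal matrix yields a monomial matrix whose nonzero entries remain in the group of $2^l$-th roots of unity. This closure property is the natural first step toward controlling how elements of $G_l^n$ behave under the recursive definition of $\mathcal{CH}$.
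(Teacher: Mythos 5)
Your proposal is correct and follows essentially the same route as the paper: decompose the Pauli string (up to phase) as a permutation times a diagonal $\pm1$ matrix, observe $p \in G_1^n \subseteq G_l^n$, and invoke closure of the group $G_l^n$ under multiplication. The extra care you take with the phase $\omega$ and the explicit entrywise verification are welcome but not beyond what the paper's one-paragraph argument already implicitly covers.
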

We can write $p$ as $p=p_X p_Z$ (up to a phase) and since $p_X$ is a permutation and $p_Z$ is a diagonal gate, we have that $U, U^\dagger$, and $p$ are all in $G_l^n$. $p$ is in $G_1^n$ which is contained in all $G_l^n$ for $l>1$. Since any $G_l^n$ is closed under multiplication, we have that $UpU^\dagger \in G_{l}^n$.  

\begin{lemma}
Let $\pi d \in G_l^n$ and $\pi d \in \mathcal{CH}_1$ (the Pauli group). Then, $\pi = p_X$ for some Pauli $X$ string and $d = p_Z$ for some Pauli $Z$ string.
\end{lemma}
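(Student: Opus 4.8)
The plan is to exploit the uniqueness of the $\pi d$ decomposition in $G_l^n$ together with the structure of the Pauli group as a subgroup of $G_1^n$. First I would observe that every Pauli string $p$ can be written (up to a global phase, which we ignore) as $p_X p_Z$ with $p_X$ a Pauli $X$ string (a permutation matrix) and $p_Z$ a Pauli $Z$ string (a diagonal matrix with $\pm 1$ entries). Thus $p = p_X p_Z$ is already in the normal form $\pi' d'$ for $G_1^n \subseteq G_l^n$, with $\pi' = p_X \in \Pi^n$ and $d' = p_Z \in \mathrm{Diag}^n_1 \subseteq \mathrm{Diag}^n_l$.

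Next, since we are given $\pi d \in \mathcal{CH}_1$, the element $\pi d$ equals some Pauli string $p$ (again ignoring the global phase, which we may absorb by the gauge-fixing convention that the top-left entry is $1$, or simply carry along as an allowed $U(1)$ factor). Writing $p = p_X p_Z$ as above, we then have two expressions for the same element of $G_l^n$ in the canonical form ``permutation times diagonal'': namely $\pi d = p_X p_Z$. The key input is the uniqueness clause of the semi-direct product decomposition stated just before the lemma: $\pi_1 d_1 = \pi_2 d_2 \iff \pi_1 = \pi_2$ and $d_1 = d_2$. Applying this with $\pi_1 = \pi$, $d_1 = d$, $\pi_2 = p_X$, $d_2 = p_Z$ immediately yields $\pi = p_X$ and $d = p_Z$, which is exactly the claim.

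The only genuine subtlety — and the step I expect to require the most care — is the bookkeeping of global phases. A Pauli string such as $iX$ or $-Z$ carries a phase that is not $\pm 1$, so strictly speaking the matrix $\pi d$ (which, as a product of a genuine permutation matrix and a $\mathrm{Diag}^n_l$ matrix, has entries that are roots of unity with a fixed gauge) might differ from the literal Pauli matrix by an overall scalar. I would handle this by noting that multiplying $d$ by a scalar keeps it in $\mathrm{Diag}^n_l$ provided the scalar is a $2^l$-th root of unity, and that $i$ and $-1$ are such roots for any $l \ge 1$; hence we may, without loss of generality, choose the representative of the phase-equivalence class of $p$ whose $X$-part is an honest permutation matrix and whose $Z$-part absorbs the remaining phase, landing in $\mathrm{Diag}^n_l$. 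Once the phase is parked on the diagonal side, the uniqueness argument goes through verbatim. Everything else is routine: the factorization $p = p_X p_Z$ is the standard symplectic normal form for Paulis, and the containment $\mathrm{Diag}^n_1 \subseteq \mathrm{Diag}^n_l$ is given.
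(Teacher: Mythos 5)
Your proof is correct and takes essentially the same route as the paper, which likewise factors the Pauli element as $p_X p_Z$ and then invokes the uniqueness of the semi-direct product decomposition $\pi_1 d_1 = \pi_2 d_2 \iff \pi_1=\pi_2,\ d_1=d_2$. (One small nit in your phase bookkeeping: $i$ is a $2^l$-th root of unity only for $l\ge 2$, not for all $l\ge 1$, but since the paper works modulo global phase this does not affect the argument.)
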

Note that every element in the $n$-qubit Pauli group can be expressed as a combination of a permutation, $p_X$, and a diagonal gate, $p_Z$. The lemma is then a consequence of the semi-direct product.

Since every element in the $n$-qubit Clifford group cannot be expressed as $\pi d$, it is not immediately clear whether $\pi d \in CH_2$ implies that $\pi$ is a Clifford permutation and that $d$ is a diagonal Clifford element. This however can be shown, and we do so in the `bonus lemma' at the end of this section. 

\begin{lemma}\label{lem:sauce}
Let $\pi_k$ be a permutation matrix in $\mathcal{CH}_k$, then $\pi_k p_X \pi_k^\dagger$ is a permutation matrix in $\mathcal{CH}_{k-1}$ for all Pauli $X$ strings, $p_X$. And, $\pi_k p_Z \pi_k^\dagger$ is a gate in $Diag_1$ in $\mathcal{CH}_{k-1}$ for all Pauli $Z$ strings, $p_Z$.
\end{lemma}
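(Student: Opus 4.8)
The plan is to unwind the defining recursion of $\mathcal{CH}$ and use the two preceding lemmas to control where the conjugated Pauli goes. First I would recall that $\pi_k \in \mathcal{CH}_k$ means $\pi_k p \pi_k^\dagger \in \mathcal{CH}_{k-1}$ for every Pauli string $p$. Apply this with $p = p_X$ and with $p = p_Z$ separately: we immediately get that $\pi_k p_X \pi_k^\dagger$ and $\pi_k p_Z \pi_k^\dagger$ are both (unitaries) in $\mathcal{CH}_{k-1}$. The remaining work is purely to identify what \emph{kind} of gate each of these is, i.e.\ to show the first is a permutation matrix and the second is diagonal (in $\mathrm{Diag}_1$).

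For that, the key observation is that conjugation by a permutation matrix cannot change whether a matrix is a permutation matrix or a diagonal matrix: $\pi_k p_X \pi_k^\dagger$ has exactly one nonzero entry (equal to $1$) in each row and column because $p_X$ does and $\pi_k$ merely relabels basis vectors; likewise $\pi_k p_Z \pi_k^\dagger$ is diagonal with $\pm 1$ entries because $p_Z$ is, and permuting the basis only permutes the diagonal entries. So $\pi_k p_X \pi_k^\dagger \in \Pi^n$ and $\pi_k p_Z \pi_k^\dagger \in \mathrm{Diag}^n_1$. Combining with the previous paragraph, $\pi_k p_X \pi_k^\dagger$ is a permutation matrix \emph{and} lies in $\mathcal{CH}_{k-1}$, and $\pi_k p_Z \pi_k^\dagger$ is a gate in $\mathrm{Diag}_1$ \emph{and} lies in $\mathcal{CH}_{k-1}$, which is exactly the claim.

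I expect the only subtlety — and the part worth stating carefully rather than the ``main obstacle'' — is the base case and the bookkeeping of phases: for $k=1$, $\pi_1$ is itself a Pauli (a Pauli $X$ string, by Lemma~\ref{lem:sauce}'s predecessors), and one should check the statement degenerates correctly, noting that $\mathcal{CH}_0$ is not defined so the lemma is really used for $k\ge 2$; and throughout one must remember that gates are defined up to a $U(1)$ phase, so ``permutation matrix'' and ``diagonal matrix'' are meant up to that global phase (a genuine permutation matrix already has the top-left gauge fixed, but a conjugated Pauli $Z$ string may pick up an overall sign that is harmless). Since both of these points are immediate once noted, the proof is short; the substance is entirely the two-line argument that conjugation by a permutation preserves the permutation/diagonal structure.
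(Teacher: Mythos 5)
Your proposal is correct and follows essentially the same route as the paper: apply the defining recursion of $\mathcal{CH}_k$ to the Pauli strings $p_X$ and $p_Z$ separately (the paper phrases this via $p = p_X p_Z$ and then sets one factor to the identity, which is equivalent), and then observe that conjugation by a permutation matrix preserves the permutation/diagonal structure. Your extra remarks on the $k=1$ edge case and global phases are harmless elaborations of points the paper leaves implicit.
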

$\pi_k \in \mathcal{CH}_k$ implies that $\pi_k p \pi_k^\dagger \in \mathcal{CH}_{k-1}$ for all Pauli strings $p$, but (up to a $\pm 1$ phase) we can write this as $\pi_k p_X p_Z \pi_k^\dagger = \pi_k p_X \pi_k^\dagger \pi_k p_Z \pi_k^\dagger \in \mathcal{CH}_{k-1}$ for all $p_X$ and $p_Z$. Since this must be true even if $p_X = I$ or $p_Z = I$, we conclude that $\pi_k p_X \pi_k^\dagger \in \mathcal{CH}_{k-1}$ for all $p_X$ and that $\pi_k p_Z \pi_k^\dagger \in \mathcal{CH}_{k-1}$ for all $p_Z$. The former is a permutation matrix and the latter is a diagonal gate in $Diag_1$.

From Zeng \etal \cite{Zeng2008} we have the following lemma:
\begin{lemma}\label{lem:CliffEquiv}
$U\in \mathcal{CH}_k$, iff $C_1 U C_2 \in \mathcal{CH}_k$, where $C_1, C_2$ are arbitrary Clifford gates.
\end{lemma}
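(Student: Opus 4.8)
The plan is to prove the non-trivial implication — that $U \in \mathcal{CH}_k$ forces $C_1 U C_2 \in \mathcal{CH}_k$ — and then to get the converse for free: if $C_1 U C_2 \in \mathcal{CH}_k$, write $U = C_1^\dagger (C_1 U C_2) C_2^\dagger$ and apply the forward implication with the Clifford gates $C_1^\dagger, C_2^\dagger$. I will take $k \ge 2$ throughout, which is the regime of interest; for $k=1$ the two-sided statement is plainly false, since e.g. $H = H\cdot\mathbb{I}\cdot\mathbb{I}$ is not a Pauli string while $\mathbb{I} \in \mathcal{CH}_1$ (for $k=2$ the claim is trivial, as a product of Clifford gates is Clifford).

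The engine is an auxiliary \emph{conjugation} fact, which I would establish by induction on the level: for every Clifford gate $C$ and every $m \ge 1$, $V \in \mathcal{CH}_m$ iff $C V C^\dagger \in \mathcal{CH}_m$. The base case $m=1$ is exactly the defining property of the Clifford group (conjugation permutes Pauli strings). For the inductive step, given $V \in \mathcal{CH}_m$ and an arbitrary Pauli string $P$, I would compute
\begin{equation*}
  (C V C^\dagger)\,P\,(C V C^\dagger)^\dagger = C\,\big(V\,(C^\dagger P C)\,V^\dagger\big)\,C^\dagger,
\end{equation*}
note that $C^\dagger P C$ is again a Pauli string, so $V(C^\dagger P C)V^\dagger \in \mathcal{CH}_{m-1}$ by hypothesis, and then apply the inductive hypothesis at level $m-1$ to conclude $C\,(\,\cdot\,)\,C^\dagger \in \mathcal{CH}_{m-1}$; as $P$ was arbitrary, $C V C^\dagger \in \mathcal{CH}_m$. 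The reverse direction of this auxiliary statement follows by conjugating back with $C^\dagger$.

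With the auxiliary fact in hand the main lemma requires no further induction. Given $U \in \mathcal{CH}_k$ and any Pauli string $P$, I would push the right-hand Clifford through the Pauli:
\begin{equation*}
  (C_1 U C_2)\,P\,(C_1 U C_2)^\dagger = C_1\,\big(U\,(C_2 P C_2^\dagger)\,U^\dagger\big)\,C_1^\dagger = C_1\,(U P' U^\dagger)\,C_1^\dagger,
\end{equation*}
where $P' := C_2 P C_2^\dagger \in \mathcal{P}_n$. Since $U \in \mathcal{CH}_k$ we have $U P' U^\dagger \in \mathcal{CH}_{k-1}$, and applying the auxiliary fact at level $k-1$ with the Clifford gate $C_1$ shows the outer conjugation keeps us in $\mathcal{CH}_{k-1}$. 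As $P$ ranges over all Pauli strings, $C_1 U C_2 \in \mathcal{CH}_k$, which finishes the forward direction and hence the lemma.

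The only place I expect any genuine care is the auxiliary conjugation lemma: one must keep track of which level of the recursive definition is being unwound and make sure the intermediate object ($C^\dagger P C$, resp. $C_2 P C_2^\dagger$) is honestly Pauli so the inductive hypothesis applies. Once that bookkeeping is set, the two-sided statement collapses to the one-sided (conjugation) statement essentially for free, because the right-hand Clifford can always be absorbed into the Pauli string it conjugates.
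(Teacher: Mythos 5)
Your proof is correct. The paper does not actually prove this lemma---it is imported verbatim from Zeng \emph{et al.}---so there is no in-paper argument to compare against; your route (an inductive conjugation lemma showing $V \in \mathcal{CH}_m \iff CVC^\dagger \in \mathcal{CH}_m$, followed by absorbing the right-hand Clifford into the Pauli string via $P' = C_2 P C_2^\dagger$) is the standard one and is carried out soundly, with the inductive hypothesis correctly invoked at level $m-1$ on a genuinely Pauli conjugate. Your observation that the unqualified statement fails at $k=1$ (e.g. $\mathbb{I} \in \mathcal{CH}_1$ but $H\,\mathbb{I}\,\mathbb{I} \notin \mathcal{CH}_1$) is a legitimate caveat that the paper's phrasing glosses over.
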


\begin{lemma}\label{l1}
Any $d\in\mathrm{Diag}^n_l$ is in the Clifford Hierarchy at some level. This is the case, since it can be expressed as $d=\prod_j\exp \left( i\frac{\alpha_j \pi}{2^{k_j}}Z_j \right)$ where $Z_j$ are Pauli $Z$ strings and $\alpha_j$ are integers. When terms with the same Pauli strings are combined and the fractions $\frac{\alpha_j}{2^{k_j}}$ are reduced, the level of $d$ is the maximum $k_j$.
\end{lemma}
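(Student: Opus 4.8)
The statement has two components: that every $d\in\mathrm{Diag}^n_l$ can be written as $d=\prod_j\exp\!\left(i\frac{\alpha_j\pi}{2^{k_j}}Z_j\right)$ with the $Z_j$ Pauli $Z$ strings and $\alpha_j\in\mathbb{Z}$, and that, after combining equal-axis terms and reducing fractions, this $d$ lies in $\mathcal{CH}$ at level $k=\max_j k_j$. The plan is to produce the expression by a Walsh--Hadamard expansion of the phase function, then prove membership in $\mathcal{CH}_k$ by induction on $k$, and finally comment on why $k$ is the exact level.

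First, for the existence of the expression: write the diagonal entries of $d$ as $e^{i\phi(x)}$ with $\phi(x)=2\pi m_x/2^l$, $m_x\in\mathbb{Z}$, for $x\in\{0,1\}^n$. Expanding $\phi$ in the characters $\chi_S(x)=\prod_{i\in S}(-1)^{x_i}$ of $\mathbb{Z}_2^n$ gives $\phi(x)=\sum_{S\subseteq[n]}\hat\phi_S\,\chi_S(x)$ with $\hat\phi_S=2^{-n}\sum_x\phi(x)\chi_S(x)=\pi M_S/2^{n+l-1}$ for integers $M_S=\sum_x m_x\chi_S(x)$. Since $Z_S=\bigotimes_{i\in S}Z_i$ acts on $|x\rangle$ by the scalar $\chi_S(x)$, the gate $\exp(i\hat\phi_SZ_S)$ acts by $e^{i\hat\phi_S\chi_S(x)}$, so $\prod_S\exp(i\hat\phi_SZ_S)$ reproduces $d$ up to the $S=\emptyset$ term, an irrelevant global phase. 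Hence $d$ has the stated form with all $k_j\le n+l-1$; in particular $\mathrm{Diag}^n_l\subseteq\mathcal{CH}_{n+l-1}$.

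Second, I would prove by induction on $k=\max_j k_j$ that any $d=\prod_j\exp\!\left(i\frac{\alpha_j\pi}{2^{k_j}}Z_j\right)$ in reduced form (equal-axis terms combined, each $\alpha_j$ odd, $k_j\ge1$) lies in $\mathcal{CH}_k$. In the base case $k=1$ each factor is $\big(\exp(i\tfrac{\pi}{2}Z_j)\big)^{\alpha_j}=(iZ_j)^{\alpha_j}=i^{\alpha_j}Z_j$, so $d$ is a Pauli $Z$ string up to a phase, hence in $\mathcal{CH}_1$. For the inductive step, take an arbitrary Pauli string $p=p_Xp_Z$ (up to a phase). Since $d$ is diagonal, $dpd^\dagger=(dp_Xd^\dagger)p_Z$. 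Commuting $p_X$ leftward through the mutually commuting factors of $d$ sends $\exp(i\theta_jZ_j)$ to $\exp(i\epsilon_j\theta_jZ_j)$ with $\epsilon_j=\pm1$ according to whether $p_X$ commutes or anticommutes with $Z_j$, so
\begin{equation*}
dp_Xd^\dagger=p_X\prod_{j:\,\epsilon_j=-1}\exp\!\Big(-i\frac{\alpha_j\pi}{2^{k_j-1}}Z_j\Big),
\end{equation*}
i.e.\ $p_X$ times a diagonal gate whose nontrivial $Z$-rotations now have denominators $2^{k_j-1}\le 2^{k-1}$. After recombining any collided terms (and dropping ones that became $\pm I$), this diagonal gate is of the same type with maximal level $\le k-1$, hence in $\mathcal{CH}_{k-1}$ by the induction hypothesis; left/right multiplication by the Cliffords $p_X$ and $p_Z$ keeps it in $\mathcal{CH}_{k-1}$ by Lemma~\ref{lem:CliffEquiv}. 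Thus $dpd^\dagger\in\mathcal{CH}_{k-1}$ for every $p$, so $d\in\mathcal{CH}_k$.

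The step I expect to require the most care is the converse half of the level claim, namely that $d\notin\mathcal{CH}_{k-1}$ when the reduced expansion genuinely carries a level-$k$ term; it is complicated by the non-uniqueness of the representation $\prod_j\exp(\cdot)$. I would pin down the canonical expansion coming from the representatives $0\le m_x<2^l$ above (so the $\hat\phi_S$ and their reduced denominators are well defined) and then appeal to the classification of diagonal gates in $\mathcal{CH}$ of Cui \etal~\cite{Cui2017DiagonalHierarchy} (reproved in Appendix~\ref{shortProof}): $\mathcal{CH}_{k-1}\cap\mathrm{Diag}$ is exactly the group generated by $\pi/2^{k-1}$ rotations about Pauli $Z$ axes, every element of which, expanded this way, has all reduced denominators at most $2^{k-1}$, so a $d$ with one equal to $2^k$ cannot lie in $\mathcal{CH}_{k-1}$. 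Only this last part leans on external input; the constructive upper bound above is the real content of the lemma and is all that is used downstream.
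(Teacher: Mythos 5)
Your proof is correct and follows essentially the same route the paper takes: the paper treats this lemma as immediate from the $Z$-rotation expansion of a diagonal matrix (Equ.~\ref{equ:genDiag} in Appendix~\ref{shortProof}, with the containment $\mathrm{Diag}^n_l\subseteq\mathcal{D}^n_{l+n}$ noted in Appendix~\ref{sec:diag}) together with the Cui~\etal~classification, and your character expansion plus the denominator-halving induction (which is exactly the computation in Lemma~\ref{conjp}) simply makes those steps explicit. Your deferral of the exact-level (lower-bound) claim to the classification of \cite{Cui2017DiagonalHierarchy} matches what the paper itself does.
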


In contrast, many elements of $\Pi^n$ for $n\ge 3$ are not in $\mathcal{CH}$ at any level. 

\begin{lemma} \label{lem:diagConj}
Conjugation of an element $d\in \mathrm{Diag}^n_l$ by an element $\pi\in\Pi^n$ yields a diagonal gate in $\mathcal{CH}$ at some (possibly higher) level. $\pi d \pi^{-1}$ must be in $\mathrm{Diag}^n_l$ since $\pi$ is in the normalizer of $\mathrm{Diag}^n_l$. Furthermore, it must be in $\mathcal{CH}$ at some level by Lemma \ref{l1}.  
\end{lemma}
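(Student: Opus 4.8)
The plan is to split the claim into the two assertions it already telegraphs: (i) $\pi d \pi^{-1}$ is again a diagonal matrix with $2^l$-th root of unity entries, so it lies in $\mathrm{Diag}^n_l$; and (ii) every element of $\mathrm{Diag}^n_l$ sits in $\mathcal{CH}$ at some finite level, which is precisely Lemma \ref{l1}. Step (ii) requires nothing new, so the only thing to verify is step (i), i.e.\ that $\Pi^n$ normalizes $\mathrm{Diag}^n_l$.

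For that I would argue at the level of matrix entries. Writing $\pi|x\rangle = |\sigma(x)\rangle$ for the underlying permutation $\sigma$ of $\{0,1\}^n$ and $d = \sum_x \omega_x |x\rangle\langle x|$ with each $\omega_x$ a $2^l$-th root of unity, a one-line computation gives $\pi d \pi^{-1} = \sum_x \omega_{\sigma^{-1}(x)} |x\rangle\langle x|$. Hence $\pi d \pi^{-1}$ is diagonal and its diagonal is merely a relabelling of the diagonal of $d$; in particular all entries are still $2^l$-th roots of unity, so $\pi d \pi^{-1}\in\mathrm{Diag}^n_l$ (after the usual rescaling that fixes the top-left entry to $1$, which is harmless). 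Invoking Lemma \ref{l1} then finishes the proof, since one obtains an explicit $\mathcal{CH}$ level by writing $\pi d \pi^{-1} = \prod_j \exp(i\alpha_j\pi/2^{k_j} Z_j)$, combining repeated Pauli strings, reducing the fractions, and taking the maximal $k_j$.

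I do not expect a genuine obstacle; the content is essentially bookkeeping. The only subtlety worth stating explicitly is the one the lemma hedges with ``possibly higher'': the reduced exponents $k_j$ of $\pi d\pi^{-1}$ need not coincide with those of $d$, because a permutation can entangle the diagonal entries in a way that forces multi-controlled $Z$ rotations at finer angles, so the level can strictly increase even though the multiset of eigenvalues is unchanged. A quantitative bound of the form ``at most level $\mathrm{level}(d)+n$'' can be extracted from the generating set of $\mathrm{Diag}^n_l$, but is not needed for the statement as given.
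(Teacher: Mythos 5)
Your proposal is correct and follows essentially the same route as the paper: the paper's proof is exactly the two-step observation that $\Pi^n$ normalizes $\mathrm{Diag}^n_l$ (which it asserts from the earlier setup, whereas you verify it by the explicit entry-permutation computation) followed by an appeal to Lemma~\ref{l1}. Your closing remark about the level possibly increasing by at most $n$ matches the paper's own comment in its appendix on diagonal gate groups.
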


\begin{lemma}\label{lem:pm}
\begin{equation*}
    P_1 e^{i\theta P_2} = 
    \begin{cases}
      e^{i\theta P_2} P_1, & \text{if}\ [P_1, P_2]=0 \\
      e^{-i\theta P_2} P_1, & \text{if}\ \{P_1, P_2\}=0
    \end{cases}
\end{equation*}
where $\theta \in [0,2\pi)$ and $P_1,P_2 \in \mathcal{P}$, the $n$-qubit Pauli group. 

Note that $e^{i\theta P} = I^{\otimes n} \cos{\theta} + iP\sin{\theta}$ and verification of the above equation is straightforward.  
\end{lemma}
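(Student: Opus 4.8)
The plan is to reduce the claim to a single algebraic identity, $P_2^2 = I^{\otimes n}$, which holds for every Hermitian element of $\mathcal{P}_n$, together with the elementary fact that any two elements of $\mathcal{P}_n$ either commute or anticommute. First I would establish the expansion already flagged in the statement: grouping the even and odd powers in the Taylor series for the exponential and using $P_2^{2m}=I^{\otimes n}$ and $P_2^{2m+1}=P_2$ gives $e^{i\theta P_2} = I^{\otimes n}\cos\theta + iP_2\sin\theta$ (convergence is immediate since $P_2$ is a finite unitary matrix). Left-multiplying by $P_1$ then gives $P_1 e^{i\theta P_2} = P_1\cos\theta + i(P_1 P_2)\sin\theta$.

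Next I would split into the two cases. If $[P_1,P_2]=0$ then $P_1 P_2 = P_2 P_1$, so $P_1 e^{i\theta P_2} = (I^{\otimes n}\cos\theta + iP_2\sin\theta)P_1 = e^{i\theta P_2}P_1$; if instead $\{P_1,P_2\}=0$ then $P_1 P_2 = -P_2 P_1$, so $P_1 e^{i\theta P_2} = (I^{\otimes n}\cos\theta - iP_2\sin\theta)P_1 = e^{-i\theta P_2}P_1$. Since no third case is possible for $P_1,P_2\in\mathcal{P}_n$, this exhausts the statement.

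There is no real obstacle here --- the verification is as routine as the lemma itself advertises. The only two points deserving a word of care are that the expansion of $e^{i\theta P_2}$ genuinely uses $P_2^2 = I^{\otimes n}$ (so $P_2$ is taken to be an unsigned Hermitian Pauli string, in keeping with the convention $R_\sigma(\theta)=e^{i\sigma\theta}$ used throughout; any overall $\pm1,\pm i$ phase carried by $P_1$ factors through harmlessly and does not affect whether $P_1$ commutes with $P_2$), and that one must invoke the Pauli commute-or-anticommute dichotomy rather than a generic commutator relation, so that the two listed branches are genuinely exhaustive.
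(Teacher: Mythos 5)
Your proof is correct and follows exactly the route the paper intends: it uses the expansion $e^{i\theta P_2}=I^{\otimes n}\cos\theta+iP_2\sin\theta$ (which relies on $P_2^2=I^{\otimes n}$) and then splits on the commute/anticommute dichotomy, which is precisely the ``straightforward verification'' the lemma's own remark gestures at. No issues.
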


\begin{lemma}\label{conjp}
Conjugation of an $n$-qubit Pauli string, $p$, by a diagonal gate (on $n$ qubits) $d_k$ in $\mathcal{CH}_k$ yields $d_kpd_k^{-1} = pd_{k-1} \in \mathcal{CH}_{k-1}$. To see this, note that $d=\prod_j\exp \left( i\frac{\alpha_j \pi}{2^{k_j}}Z_j \right)$ with $k_j \le k$ for all $j$. Also, note that $\prod_j\exp \left( i\frac{\alpha_j \pi}{2^{k_j}}Z_j \right) p = p\prod_j\exp \left( \pm i\frac{\alpha_j \pi}{2^{k_j}}Z_j \right)$ via Lemma \ref{lem:pm} where $+(-)$ indicates that $p$ commutes (anticommutes) with $Z_j$. However a commuting term will be cancelled by its corresponding term in $d^{-1}$ and an anticommuting term will combine with its corresponding term in $d^{-1}$ picking up a factor of 2. Since all $k_j$ are now reduced by at least one, we conclude that $pd_{k-1} \in \mathcal{CH}_{k-1}$. 
\end{lemma}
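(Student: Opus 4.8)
The plan is to reduce the conjugation to an elementary term-by-term computation, using the Cui \etal~ classification of diagonal gates (Lemma \ref{l1}) together with the Pauli commutation identity (Lemma \ref{lem:pm}). First I would write $d_k \in \mathcal{CH}_k$ in standard form $d_k = \prod_j \exp \left( i\frac{\alpha_j \pi}{2^{k_j}} Z_j \right)$, where the $Z_j$ are Pauli $Z$ strings, the $\alpha_j$ are integers, and $k_j \le k$ for all $j$ (the level being $\max_j k_j$ after reducing the fractions, by Lemma \ref{l1}). Because all the $Z_j$ are mutually commuting, every exponential factor appearing below commutes with every other, which is what makes the regrouping step clean.

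Next I would commute the Pauli string $p$ to the left through the product. Lemma \ref{lem:pm} gives, factor by factor, $p\, e^{i\theta Z_j} = e^{\varepsilon_j i\theta Z_j}\, p$ with $\varepsilon_j = +1$ when $[p,Z_j]=0$ and $\varepsilon_j = -1$ when $\{p,Z_j\}=0$; iterating yields $d_k p = p \prod_j \exp \left( \varepsilon_j i \frac{\alpha_j \pi}{2^{k_j}} Z_j \right)$. Multiplying on the right by $d_k^{-1} = \prod_j \exp \left( -i\frac{\alpha_j\pi}{2^{k_j}} Z_j\right)$ and pairing the $j$-th factors (legal since all the $Z_j$ commute), each term with $\varepsilon_j = +1$ collapses to the identity, while each term with $\varepsilon_j = -1$ becomes $\exp \left(-2 i\frac{\alpha_j\pi}{2^{k_j}} Z_j\right) = \exp \left(-i\frac{\alpha_j \pi}{2^{k_j-1}} Z_j\right)$. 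Hence $d_k p d_k^{-1} = p\, d_{k-1}$, where $d_{k-1}$ is the diagonal gate built from exactly the factors with $\{p,Z_j\}=0$, now with denominators $2^{k_j-1}\le 2^{k-1}$.

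It then remains to check $p\, d_{k-1}\in \mathcal{CH}_{k-1}$. By Lemma \ref{l1}, $d_{k-1}$ is a diagonal gate of level at most $k-1$, so $d_{k-1}\in\mathcal{CH}_{k-1}$ (using that the levels are nested). Since $p$ is a Pauli, hence a Clifford, Lemma \ref{lem:CliffEquiv} applied with $C_1 = p$ and $C_2 = I$ gives $p\, d_{k-1}\in\mathcal{CH}_{k-1}$, completing the proof. The computation itself is routine; the only genuinely load-bearing points are (i) the term-by-term regrouping, which rests entirely on the mutual commutativity of $Z$ strings, and (ii) this last step, where one must resist arguing from closure, since $\mathcal{CH}_{k-1}$ is not a group for $k-1\ge 3$, and instead invoke the Clifford-invariance of the hierarchy. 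It is also worth noting the harmless edge behavior when some $k_j\in\{1,2\}$: the doubled factor is then $\pm I$ or a Pauli $Z$ string respectively, consistent with the $k=2$ base case, where conjugating a Pauli by a diagonal Clifford yields a Pauli.
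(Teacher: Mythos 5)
Your proof is correct and follows essentially the same route as the paper: write $d_k$ as a product of commuting Pauli-$Z$ rotations, push $p$ through via Lemma \ref{lem:pm}, pair factors with $d_k^{-1}$ so that commuting terms cancel and anticommuting terms double their angle, thereby dropping every $k_j$ by one. Your explicit appeal to Lemma \ref{lem:CliffEquiv} to justify that left-multiplication by the Pauli $p$ keeps $d_{k-1}$ at level $k-1$ (rather than arguing from closure) is a small but welcome tightening of a step the paper leaves implicit.
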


We define the sets $G^n[k]$ by elements, $\pi d$ in $G^n$ with $\pi \in \mathcal{CH}_k$ and $d$ is any diagonal matrix in $\mathcal{CH}$. Since all permutations in $\mathcal{CH}_k$ are also in $\mathcal{CH}_{k+1}$ we have the following nested relation: $G^n[1]\subseteq G^n[2] \subseteq \cdots \subseteq G^n[k]$.

\begin{theorem}\label{thm:4}
If a permutation gate, $\pi\in\mathcal{CH}$, and a diagonal gate, $d \in \mathcal{CH}$, then  $\pi d \in \mathcal{CH}$.
\end{theorem}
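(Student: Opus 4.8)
The plan is to induct on the levels of the two factors, using the nested sets $G^n[k]$ and the commutation lemmas already established. Write $\pi \in \mathcal{CH}_{k_\pi}$ and $d \in \mathcal{CH}_{k_d}$, and set $k = \max(k_\pi, k_d)$; I will show $\pi d \in \mathcal{CH}_{k+1}$ (or perhaps $\mathcal{CH}_{k}$ with a little more care), which suffices. The base case is when both $\pi$ and $d$ sit at level $1$: then by Lemma~\ref{lem:sauce}'s companion (Lemma~2, characterizing $G^n_l \cap \mathcal{CH}_1$) $\pi = p_X$ and $d = p_Z$ are Pauli strings, so $\pi d$ is Pauli, hence in $\mathcal{CH}$. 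For the inductive step I want to conjugate $\pi d$ by an arbitrary Pauli string $p = p_X p_Z$ and show the result lies in a lower level of $\mathcal{CH}$; then apply the induction hypothesis.

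The computation to set up is
\begin{equation*}
    (\pi d)\, p\, (\pi d)^{-1} = \pi \left( d\, p\, d^{-1} \right) \pi^{-1}.
\end{equation*}
By Lemma~\ref{conjp}, $d p d^{-1} = p\, d'$ where $d' \in \mathcal{CH}_{k_d - 1}$ is a diagonal gate (and $d' \in \mathrm{Diag}^n_l$ for the appropriate $l$). Hence
\begin{equation*}
    (\pi d)\, p\, (\pi d)^{-1} = \left( \pi p \pi^{-1} \right)\left( \pi d' \pi^{-1} \right).
\end{equation*}
Now $\pi p \pi^{-1}$: since $\pi$ is a permutation matrix, $\pi p_X \pi^{-1}$ is a permutation and $\pi p_Z \pi^{-1}$ is diagonal, and by Lemma~\ref{lem:sauce} each is in $\mathcal{CH}_{k_\pi - 1}$; moreover $\pi p \pi^{-1}$ is itself an element of $G^n$ which is a product of a permutation in $\mathcal{CH}_{k_\pi-1}$ and a diagonal gate in $\mathcal{CH}_{k_\pi - 1} \subseteq \mathcal{CH}$, so it lies in $G^n[k_\pi - 1]$. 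The factor $\pi d' \pi^{-1}$ is, by Lemma~\ref{lem:diagConj}, a diagonal gate in $\mathrm{Diag}^n_{l}$ and in $\mathcal{CH}$ at some finite level; combining it with the diagonal part of $\pi p \pi^{-1}$ gives an element of $G^n$ whose permutation part is in $\mathcal{CH}_{k_\pi-1}$ and whose diagonal part is in $\mathcal{CH}$. So $(\pi d) p (\pi d)^{-1}$ is of the form ``(permutation in $\mathcal{CH}$) times (diagonal in $\mathcal{CH}$)'' with \emph{strictly smaller} level budget than $\pi d$ had — precisely the hypothesis needed to invoke induction and conclude it lies in $\mathcal{CH}$. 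Since this holds for every Pauli $p$, we get $\pi d \in \mathcal{CH}$.

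The main obstacle is bookkeeping the level of the \emph{diagonal} factor, which is not controlled by the induction variable $k_\pi$: conjugating $d'$ by the permutation $\pi$ can (per the discussion around Appendix~\ref{sec:diag}) raise its level by up to $n$, and multiplying by $\pi p \pi^{-1}$'s diagonal part can raise it further. The fix is to decouple the two inductions — run the induction on the permutation level $k_\pi$ only, and absorb the diagonal gate's level into a separate, already-finished fact. Concretely, I would first prove (or cite, via Lemma~\ref{l1} and Lemma~\ref{lem:diagConj}) that \emph{any} element $\pi d \in G^n$ with $\pi$ a Clifford permutation ($k_\pi \le 2$) and $d$ any diagonal gate in $\mathcal{CH}$ is itself in $\mathcal{CH}$ — this handles the base of the permutation induction and is where all the ``diagonal level can be large but is still finite'' content lives — and then the inductive step above only ever needs to reduce $k_\pi$, with the diagonal part freely allowed to be any diagonal gate in $\mathcal{CH}$. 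That is exactly why the sets $G^n[k]$ were defined with $d$ unrestricted, so I expect the write-up to proceed cleanly along those lines: show $G^n[2] \subseteq \mathcal{CH}$, then show $G^n[k] \subseteq \mathcal{CH} \implies G^n[k+1]\subseteq \mathcal{CH}$ using the conjugation identity above together with the already-established base case to handle the diagonal factor $\pi d' \pi^{-1}$.
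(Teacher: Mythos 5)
Your proposal is correct and, after you discard the initial idea of inducting on $\max(k_\pi,k_d)$, it lands on exactly the paper's argument: induction on the sets $G^n[k]$ with the diagonal factor left unrestricted, base case $G^n[2]$ handled by Clifford equivalence (Lemma~\ref{lem:CliffEquiv}), and the inductive step via the identity $(\pi d)p(\pi d)^\dagger = (\pi p_X\pi^\dagger)\bigl(\pi p_Z\pi^\dagger\,\pi d'\pi^\dagger\bigr)$ using Lemmas~\ref{conjp}, \ref{lem:sauce}, and \ref{lem:diagConj}. No substantive differences from the paper's proof.
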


We prove this theorem by induction on the sets $G^n[k]$. Since any permutation in $\mathcal{CH}$ is in $G^n[k]$ for large enough $k$ and all diagonal gates in $\mathcal{CH}$ are in each set, $G^n[k]$, proving the theorem for all $G^n[k]$ proves the stated theorem.

\begin{proof}

Base case $k=2$:
For $k\le 2$ all permutations are Clifford and by Lemma \ref{lem:CliffEquiv}, any Clifford times a (diagonal) gate in $\mathcal{CH}_k$ is in $\mathcal{CH}_k$. Therefore, all $\pi D \in G^n[2]$ are in $\mathcal{CH}$ since all $D \in \mathcal{CH}$ (at some level).

Assume, $\pi d \in G^n[k-1] \implies \pi d \in \mathcal{CH}$.

For elements $\pi_k D \in G^n[k]$ we have that $\pi_k p \pi_k^\dagger \in \mathcal{CH}_{k-1}$ and by Lemma \ref{lem:sauce} we know this implies that $\pi_k p_X \pi_k^\dagger$ is a permutation in $\mathcal{CH}_{k-1}$ for all $p_X$ and $\pi_k p_Z \pi_k^\dagger$ is a diagonal gate in $\mathcal{CH}_{k-1}$ for all $p_Z$.

To prove that $\pi_k D \in \mathcal{CH}$, we must show that $\pi_k D p D^\dagger \pi_k^\dagger$ is in $\mathcal{CH}$ for all $p$. We can write this as: $\pi_k D p D^\dagger \pi_k^\dagger = \pi_k p D' \pi_k^\dagger$ (where $D'$ is a diagonal gate in $\mathcal{CH}$) by Lemma \ref{conjp}. This equals, $\pi_k p_X \pi_k^\dagger (\pi_k p_Z \pi_k^\dagger \pi_k D' \pi_k^\dagger)$ where the product of all gates in the parentheses is a diagonal gate in $\mathcal{CH}$ (for all $p_Z$) by Lemma \ref{lem:diagConj} and by closure of the group $Diag_l^n$ (for some, large enough, $l$). We will denote this product as $D''$ in what follows. By the discussion in the previous paragraph, we know that $\pi_k p_X \pi_k^\dagger$ is a permutation in $\mathcal{CH}_{k-1}$ for all $p_X$. Finally, since $\pi_k p_X \pi_k^\dagger D'' = \pi_{k-1} D'' \in G^n[k-1]$ (for all $p_X$) which by our inductive hypothesis is in $\mathcal{CH}$, we conclude that any element $\pi_k D \in G^n[k]$ is in $\mathcal{CH}$.
\end{proof}

\begin{corollary}\label{Xonly}
When checking if a permutation, $\pi$, is in $\mathcal{CH}$, it suffices only to check if $\pi p_X \pi^{-1} \in \mathcal{CH}$ for all $p_X$. 
\end{corollary}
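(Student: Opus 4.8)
The plan is to reduce the general conjugation condition defining membership in $\mathcal{CH}$ to the $X$-type condition, by splitting each Pauli string into its $X$ and $Z$ parts and observing that the $Z$ part is handled for free. Recall that a permutation $\pi$ is in $\mathcal{CH}$ iff $\pi p \pi^{-1} \in \mathcal{CH}$ for every Pauli string $p \in \mathcal{P}_n$ (with levels bounded, which is automatic since $\mathcal{P}_n$ is finite). So it suffices to show that, under the hypothesis that $\pi p_X \pi^{-1} \in \mathcal{CH}$ for all Pauli $X$ strings $p_X$, we have $\pi p \pi^{-1} \in \mathcal{CH}$ for all $p$.

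First I would write an arbitrary Pauli string as $p = p_X p_Z$ up to a global phase, where $p_X$ is a Pauli $X$ string and $p_Z$ is a Pauli $Z$ string; the phase is irrelevant since gates are defined modulo $U(1)$. Then $\pi p \pi^{-1} = (\pi p_X \pi^{-1})(\pi p_Z \pi^{-1})$. Here $\pi p_X \pi^{-1}$ is again a permutation matrix (conjugation of a permutation by a permutation) and lies in $\mathcal{CH}$ by hypothesis. Meanwhile $\pi p_Z \pi^{-1}$ is a diagonal matrix with $\pm 1$ entries — conjugation of a diagonal gate by a permutation is diagonal, and a permutation only reshuffles the $\pm 1$ diagonal entries — so $\pi p_Z \pi^{-1} \in \mathrm{Diag}^n_1$, which by Lemma \ref{l1} is contained in $\mathcal{CH}$ at some level. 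Thus no hypothesis is needed to control the $Z$-part: it is automatically in the hierarchy.

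Next I would invoke Theorem \ref{thm:4}: the product of a permutation gate in $\mathcal{CH}$ and a diagonal gate in $\mathcal{CH}$ is again in $\mathcal{CH}$. Applying this with the permutation $\pi p_X \pi^{-1}$ and the diagonal gate $\pi p_Z \pi^{-1}$ gives $\pi p \pi^{-1} \in \mathcal{CH}$. Since this holds for every Pauli string $p$ and there are only finitely many of them, the levels are uniformly bounded, hence $\pi \in \mathcal{CH}$.

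I do not expect a serious obstacle here: the content has all been packed into Theorem \ref{thm:4} and Lemma \ref{l1}, and what remains is bookkeeping. The only points requiring a word of care are (i) that the $p = p_X p_Z$ decomposition introduces only a global phase, which is harmless, and (ii) that ``in $\mathcal{CH}$'' means ``in $\mathcal{CH}_k$ for some finite $k$'', so one should note explicitly that finiteness of $\mathcal{P}_n$ lets us pass from ``$\pi p \pi^{-1} \in \mathcal{CH}$ for each $p$'' to ``$\pi \in \mathcal{CH}_{K+1}$ for a single $K$''.
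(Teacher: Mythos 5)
Your proposal is correct and follows essentially the same route as the paper: decompose $p = p_X p_Z$ up to phase, observe that $\pi p_Z \pi^{-1}$ is automatically a diagonal gate in $\mathcal{CH}$, and invoke Theorem \ref{thm:4} to combine it with the permutation $\pi p_X \pi^{-1}$. Your added remarks on the global phase and on the uniform bound on levels coming from the finiteness of $\mathcal{P}_n$ are harmless bookkeeping that the paper leaves implicit.
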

From the discussion in Thm.~\ref{thm:4} we see that $\pi p_Z \pi^{-1}$ is always a diagonal gate in $\mathcal{CH}$ for any permutation $\pi$ and any Pauli $Z$ string, $p_Z$. Then, $\pi \in \mathcal{CH}$ iff $\pi p_X \pi^\dagger$ is in $\mathcal{CH}$. To see this, note that by Thm.~\ref{thm:4} if $\pi p_X \pi^\dagger \in \mathcal{CH}$, then $\pi p_X \pi^\dagger \pi p_Z \pi^\dagger \in \mathcal{CH}$ for all $p_Z$ and if $\pi p_X \pi^\dagger \notin \mathcal{CH}$, then by definition $\pi \notin \mathcal{CH}$.

This next proof is similar. We define the sets $\tilde{G}^n[k]$ by elements, $\pi d \in \Pi^n \ltimes \mbox{Diag}^n$ with $\pi \in \mathcal{CH}_k$ and $d$ is any diagonal matrix which is $not$ in $\mathcal{CH}$. Since all permutations in $\mathcal{CH}_k$ are also in $\mathcal{CH}_{k+1}$ we have the following nested relation: $\tilde{G}^n[1]\subseteq \tilde{G}^n[2] \subseteq \cdots \subseteq \tilde{G}^n[k]$.

\begin{theorem}\label{thm:5}
If a permutation gate, $\pi\in\mathcal{CH}$, and a diagonal gate, $d \notin \mathcal{CH}$, then  $\pi d \notin \mathcal{CH}$.
\end{theorem}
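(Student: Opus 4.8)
The plan is to prove Theorem~\ref{thm:5} by induction on the sets $\tilde{G}^n[k]$, mirroring the structure of the proof of Theorem~\ref{thm:4}, but reasoning contrapositively: we want to show that conjugation of a Pauli string by $\pi d$ eventually falls outside $\mathcal{CH}$ whenever $d \notin \mathcal{CH}$. The key structural fact to exploit is that $\Pi^n \ltimes \mathrm{Diag}^n$ is a semi-direct product in which the decomposition $\pi d$ is unique, so $\pi d \in \mathcal{CH}$ would have to be detected by looking at $(\pi d) p (\pi d)^{-1}$ for Pauli strings $p$, and in particular for Pauli $Z$ strings $p_Z$, where the permutation part acts trivially enough that we can isolate the diagonal part.

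First I would set up the base case: for $k \le 2$, $\pi$ is Clifford, so by Lemma~\ref{lem:CliffEquiv}, $\pi d \in \mathcal{CH}$ iff $d \in \mathcal{CH}$; since $d \notin \mathcal{CH}$ by hypothesis, $\pi d \notin \mathcal{CH}$. For the inductive step, suppose the claim holds for $\tilde{G}^n[k-1]$ and take $\pi_k d \in \tilde{G}^n[k]$. I would conjugate by a Pauli $Z$ string $p_Z$: using Lemma~\ref{conjp} (applied to $d$, or rather its $\mathcal{CH}$-part — see the obstacle below) and Lemma~\ref{lem:pm}, we get $\pi_k d\, p_Z\, d^{-1} \pi_k^\dagger = \pi_k p_Z\, \tilde{d}\, \pi_k^\dagger = (\pi_k p_Z \pi_k^\dagger)(\pi_k \tilde d \pi_k^\dagger)$ where $\pi_k p_Z \pi_k^\dagger$ is a diagonal gate (in $\mathcal{CH}$, in fact, since $\pi_k \in \mathcal{CH}$ and by Corollary~\ref{Xonly}'s surrounding discussion) and $\pi_k \tilde d \pi_k^\dagger$ lives in $\mathrm{Diag}^n$. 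The point is that conjugation of a \emph{Pauli} string by a diagonal gate cannot promote it into $\mathcal{CH}$ if the diagonal gate is not in $\mathcal{CH}$ — I would need a lemma dual to Lemma~\ref{conjp} saying that $d\, p_Z\, d^{-1}$, for $d \notin \mathcal{CH}$ and $p_Z$ a Pauli $Z$ string, is still a diagonal gate not in $\mathcal{CH}$ (for a suitable choice of $p_Z$), because an element of $\mathcal{CH}$ conjugated by... — no, rather, because if every $d p_Z d^{-1}$ were in $\mathcal{CH}$ then, combined with $d p_X d^{-1}$ being diagonal-times-Pauli-$X$ facts, one could bootstrap $d$ itself into $\mathcal{CH}$, contradicting the hypothesis. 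Then $(\pi_k p_Z \pi_k^\dagger)(\pi_k \tilde d \pi_k^\dagger)$ is of the form (permutation $\in \mathcal{CH}_{k-1}$ by Lemma~\ref{lem:sauce}, here trivial/identity permutation actually since $\pi_k p_Z \pi_k^\dagger$ is diagonal) times a diagonal gate not in $\mathcal{CH}$, hence lies in $\tilde{G}^n[k-1]$, and by the inductive hypothesis is not in $\mathcal{CH}$; therefore $\pi_k d \notin \mathcal{CH}$.

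The main obstacle I anticipate is making rigorous the claim that ``$d \notin \mathcal{CH}$ forces some $d p_Z d^{-1} \notin \mathcal{CH}$.'' This is delicate because $d$ is an arbitrary diagonal unitary (not necessarily with root-of-unity entries), so Lemma~\ref{l1} and Lemma~\ref{conjp} do not directly apply — those lemmas are about $\mathrm{Diag}^n_l$. One clean route: show that a diagonal gate $d$ is in $\mathcal{CH}$ iff $d \in \mathrm{Diag}^n_l$ for some $l$ (this is essentially Cui~\etal~\cite{Cui2017DiagonalHierarchy}, re-proved in Appendix~\ref{shortProof}), so $d \notin \mathcal{CH}$ means $d$ has an entry that is not a root of unity; then, writing $d = \prod_j \exp(i\theta_j Z_j)$ with some $\theta_j / \pi$ irrational, one checks that no finite sequence of conjugations by Pauli $X$ strings and the diagonal-reductions of Lemma~\ref{conjp} can clear the irrational angle, so the conjugation orbit escapes every $\mathcal{CH}_m$. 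Alternatively, and perhaps more cleanly, argue directly: if $\pi_k d \in \mathcal{CH}$ then for \emph{all} Pauli strings $p$, $(\pi_k d) p (\pi_k d)^{-1} \in \mathcal{CH}_{k-1} \subseteq \mathcal{CH}$, and by peeling off the permutation part (which is independently in $\mathcal{CH}$, using Corollary~\ref{Xonly}) we would deduce $d p d^{-1} \in \mathcal{CH}$ for all $p$, i.e., $d \in \mathcal{CH}$ — a contradiction; this reduces Theorem~\ref{thm:5} essentially to the statement that $d \in \mathcal{CH}$ is equivalent to all its Pauli conjugates lying in $\mathcal{CH}$, together with Theorem~\ref{thm:4} and Lemma~\ref{lem:CliffEquiv} to handle the factorization. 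I would pursue this second approach as the backbone and fall back on the explicit irrational-angle argument only if the peeling step proves unexpectedly stubborn.
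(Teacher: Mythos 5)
There is a genuine gap, and it sits at the heart of your inductive step. Your base case and the choice to induct on the sets $\tilde{G}^n[k]$ match the paper exactly, but your main line of attack conjugates by a Pauli $Z$ string $p_Z$, and this can never work: $d$ is diagonal, so $d\,p_Z\,d^{-1} = p_Z$ identically --- a Pauli string, hence in $\mathcal{CH}_1$ --- for \emph{every} diagonal $d$, whether or not $d\in\mathcal{CH}$. The ``dual lemma'' you hope for (that some $d\,p_Z\,d^{-1}$ is a diagonal gate outside $\mathcal{CH}$) is therefore false, not merely delicate. The paper's proof turns on exactly the opposite choice: precisely \emph{because} all $p_Z$ commute with $d$, the failure of $d$ to be in $\mathcal{CH}$ must be witnessed by some Pauli $X$ string, i.e.\ there exists $p_X$ with $d\,p_X\,d^\dagger = p_X d' \notin \mathcal{CH}$ (via Lemma~\ref{lem:pm}), whence $d'\notin\mathcal{CH}$, and then $\pi\,d\,p_X\,d^\dagger\,\pi^\dagger = (\pi p_X \pi^\dagger)(\pi d' \pi^\dagger)$ is a permutation in $\mathcal{CH}_{k-1}$ (Lemma~\ref{lem:sauce}) times a diagonal gate still outside $\mathcal{CH}$ (permutation conjugation preserves the spectrum), i.e.\ an element of $\tilde{G}^n[k-1]$ to which the inductive hypothesis applies.

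Your fallback (``peel off the permutation part'') is closer in spirit but is not a repair as stated: to pass from $(\pi p_X\pi^\dagger)(\pi \tilde d\,\pi^\dagger)\in\mathcal{CH}$ to $\pi \tilde d\,\pi^\dagger\in\mathcal{CH}$ you need precisely the statement of Theorem~\ref{thm:5} for the lower-level permutation $\pi p_X\pi^\dagger$ --- that is, the inductive hypothesis itself --- and neither Corollary~\ref{Xonly} nor Theorem~\ref{thm:4} supplies it (Lemma~\ref{lem:CliffEquiv} only lets you peel off \emph{Clifford} factors). A smaller point you would also need to make explicit: ``$d\notin\mathcal{CH}$ implies some single Pauli conjugate lies outside $\mathcal{CH}$'' uses the finiteness of the Pauli group, since otherwise each conjugate could sit at an unbounded level without any one of them escaping. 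If you replace the $p_Z$-conjugation with the $p_X$-conjugation above, your induction closes and coincides with the paper's argument.
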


\begin{proof}
Base case $k=2$:
Here, $\pi$ is Clifford and $d\notin \mathcal{CH}$. Assume that $g=\pi d \in \mathcal{CH}$. Then, $d = \pi^\dagger g$, but $\pi^\dagger$ is a Clifford gate since the Clifford gates form a group, and $\pi^\dagger g \in \mathcal{CH}$ since multiplication by Clifford gate does not change the level in $\mathcal{CH}$. We then have that $d\in \mathcal{CH}$, a contradiction. We conclude that all elements in $\tilde{G}^n[2]$ are not in $\mathcal{CH}$. 

Assume, $\pi d \in \tilde{G}^n[k-1] \implies \pi d \notin \mathcal{CH}$.

Let $\pi d$ be an element in $\tilde{G}^n[k]$. We want to show that there exists some Pauli string $p$ such that $\pi d p d^\dagger \pi^\dagger \notin \mathcal{CH}$. Since $d\notin \mathcal{CH}$, there must exist some $p_X$ (since all $p_Z$ commute with $d$) such that $dp_X d^\dagger \notin \mathcal{CH}$ and this equals $p_X d'$ for some diagonal matrix $d'$ (this is a consequence of Lemma.~\ref{lem:pm}). In what follows, $p_X$ will denote this specific Pauli $X$ string which we will use to show that $\pi d \notin \mathcal{CH}$. We can write $\pi d p_X d^\dagger \pi^\dagger = \pi p_X \pi^\dagger \pi d' \pi^\dagger$. But, $\pi \in \mathcal{CH}_k$ which implies that $\pi p_X \pi^\dagger$ is a permutation in $\mathcal{CH}_{k-1}$. We conclude that $\pi p_X \pi^\dagger \pi d' \pi^\dagger \in \tilde{G}^n[k-1]$ and by our inductive hypothesis, we have that $\pi p_X \pi^\dagger \pi d' \pi^\dagger \notin \mathcal{CH}$ and therefore $\pi d \notin \mathcal{CH}$.
\end{proof}

\begin{corollary}
By Theorems \ref{thm:4} and \ref{thm:5} we have that for $\pi \in \mathcal{CH}$, $\pi d \in \mathcal{CH} \iff d \in \mathcal{CH}$.
\end{corollary}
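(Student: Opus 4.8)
The plan is to obtain the biconditional directly from the two theorems just proved, each of which supplies one of the two implications. First I would fix an arbitrary permutation gate $\pi \in \mathcal{CH}$ and an arbitrary diagonal gate $d$, and observe that exactly one of the two mutually exclusive cases $d \in \mathcal{CH}$ or $d \notin \mathcal{CH}$ holds, so that Theorems \ref{thm:4} and \ref{thm:5} between them exhaust all possibilities.

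For the implication $d \in \mathcal{CH} \implies \pi d \in \mathcal{CH}$, I would simply apply Theorem \ref{thm:4}, whose hypotheses---$\pi \in \mathcal{CH}$ and $d \in \mathcal{CH}$---are precisely what is assumed; no further work is needed. For the converse, $\pi d \in \mathcal{CH} \implies d \in \mathcal{CH}$, I would argue by contraposition: if $d \notin \mathcal{CH}$, then Theorem \ref{thm:5} (again using $\pi \in \mathcal{CH}$) yields $\pi d \notin \mathcal{CH}$; contrapositively, $\pi d \in \mathcal{CH}$ forces $d \in \mathcal{CH}$. Chaining the two implications gives the equivalence $\pi d \in \mathcal{CH} \iff d \in \mathcal{CH}$.

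Since both theorems are already in hand, there is no real obstacle here; the only points worth verifying are that the standing hypothesis $\pi \in \mathcal{CH}$ is the common premise of both theorems (so it is legitimately carried through both directions), and that the case split on $d$ is genuinely exhaustive, which makes the resulting statement a true biconditional rather than two separate one-directional facts.
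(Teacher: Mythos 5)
Your proposal is correct and matches the paper's intent exactly: the forward implication is Theorem \ref{thm:4} applied verbatim, and the reverse implication is the contrapositive of Theorem \ref{thm:5}, both under the shared hypothesis $\pi \in \mathcal{CH}$. The paper states this corollary without further elaboration precisely because the argument is the immediate combination you describe.
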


\begin{theorem}\label{thm:noPi}
If $\pi \notin \mathcal{CH}$ then $\pi d \notin \mathcal{CH}$.
\end{theorem}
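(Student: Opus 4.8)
The plan is to prove the contrapositive: if $\pi d \in \mathcal{CH}$, then $\pi \in \mathcal{CH}$. Since membership in $\mathcal{CH}$ means membership in some $\mathcal{CH}_k$, I would set this up as an induction on $k$, paralleling the treatment of the sets $\tilde{G}^n[k]$ used for Theorem~\ref{thm:5}: the inductive claim is that whenever $\pi d \in \mathcal{CH}_k$ with $\pi \in \Pi^n$ and $d$ a diagonal gate, one has $\pi \in \mathcal{CH}$. The key tools are the uniqueness of the $\Pi^n \ltimes \mathrm{Diag}^n$ factorization, Lemma~\ref{lem:pm}, and Corollary~\ref{Xonly}, which lets me restrict attention to Pauli $X$ strings.

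For the base case $k=1$, $\pi d$ is a Pauli string; writing it (up to phase) as $p_X p_Z$ and invoking uniqueness of the semidirect-product factorization forces $\pi = p_X$, a Pauli $X$ string, so $\pi \in \mathcal{CH}_1 \subseteq \mathcal{CH}$. For the inductive step, assume the claim at level $k-1$ and take $\pi d \in \mathcal{CH}_k$, so that $(\pi d)\,p\,(\pi d)^{-1} \in \mathcal{CH}_{k-1}$ for every Pauli $p$. By Corollary~\ref{Xonly} it suffices to show $\pi p_X \pi^{-1} \in \mathcal{CH}$ for an arbitrary Pauli $X$ string $p_X$. Since $d$ is diagonal, Lemma~\ref{lem:pm} gives $d\,p_X\,d^{-1} = p_X \tilde{d}$ for some diagonal gate $\tilde{d}$, hence
\begin{equation*}
(\pi d)\,p_X\,(\pi d)^{-1} \;=\; (\pi p_X \pi^{-1})\,(\pi \tilde{d} \pi^{-1}),
\end{equation*}
and the right-hand side is an element of $\Pi^n \ltimes \mathrm{Diag}^n$ with permutation part $\pi p_X \pi^{-1}$ (a permutation-conjugate of the diagonal $\tilde{d}$ is again diagonal). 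This element lies in $\mathcal{CH}_{k-1}$, so the inductive hypothesis applies to it and yields $\pi p_X \pi^{-1} \in \mathcal{CH}$. As $p_X$ was arbitrary, Corollary~\ref{Xonly} gives $\pi \in \mathcal{CH}$, closing the induction and proving the contrapositive.

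I expect the only points needing care to be bookkeeping ones: that the induction is run on the hierarchy level of the composite gate $\pi d$ rather than of $\pi$, and that uniqueness of the permutation/diagonal decomposition is precisely what legitimizes ``reading off the permutation part'' of a $\mathcal{CH}_{k-1}$ element and handing exactly that to the inductive hypothesis. Both are immediate from the semidirect-product structure already established earlier in this appendix, so there is no genuinely hard analytic step; the argument is essentially the mirror image of the one proving Theorem~\ref{thm:5}.
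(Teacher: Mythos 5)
Your argument is correct and is essentially the paper's proof recast as a clean induction on the level of $\pi d$: both rely on Lemma~\ref{lem:pm}/Lemma~\ref{conjp} to write $(\pi d)p_X(\pi d)^\dagger = (\pi p_X \pi^\dagger)(\pi \tilde{d} \pi^\dagger)$, descend one level in $\mathcal{CH}$ per conjugation, and finish with the uniqueness of the permutation--diagonal splitting at the Pauli level together with Corollary~\ref{Xonly}. The paper organizes this as a proof by contradiction tracking an explicit chain of witness strings $p_X^{t,t+1}$ with each $\pi_t \notin \mathcal{CH}$, whereas you quantify over all $p_X$ and let the inductive hypothesis absorb that bookkeeping; the mathematical content is the same.
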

\begin{proof}
We prove this theorem by contradiction. 

Let $\pi_1 \notin \mathcal{CH}$. Assume that $\pi_1 d_1 \in \mathcal{CH}_k$ for an arbitrary diagonal gate, $d_1$.  

Since $\pi_1 \notin \mathcal{CH}$ we have that there exists a $p^{1,2}_X$ such that $\pi_1 p^{1,2}_X \pi_1^\dagger = \pi_2 \notin \mathcal{CH}$ where $\pi_2$ is a permutation not necessarily distinct from $\pi_1$. Such a $p^{1,2}_X$ and $\pi_2$ must exist or $\pi_1$ would be in $\mathcal{CH}$. And since $\pi_2 \notin \mathcal{CH}$ we also have that $\pi_2 p^{2,3}_X \pi_2^\dagger = \pi_3 \notin \mathcal{CH}$. And more generally, $\pi_t p^{t,t+1}_X \pi_t^\dagger = \pi_{t+1} \notin \mathcal{CH}$. These permutations are not necessarily distinct and the progression may `loop', but we can continue the progression (even if a loop occurs) to arbitrary length $t$, and each $\pi_t$ will not be in $\mathcal{CH}$.

Now, if $\pi_1 d_1 \in \mathcal{CH}_k$ we have that $\pi_1 d_1 p d_1^\dagger \pi_1^\dagger \in \mathcal{CH}_{k-1}$ for all Pauli strings $p$. But $p^{1,2}_X$ is one such Pauli string and therefore
\begin{equation*}
\pi_1 d_1 p^{1,2}_X d_1^\dagger \pi_1^\dagger = \pi_1 p^{1,2}_X \pi_1^\dagger \pi_1 d_1' \pi_1^\dagger = \pi_2 d_2   
\end{equation*}
where $d_1'$ is given by Lemma \ref{conjp} and $d_2 = \pi_1 d_1' \pi_1^\dagger$. And by assumption $\pi_2 d_2 \in \mathcal{CH}_{k-1}$. We can proceed in this manner choosing $p_X^{t,t+1}$ since it must be true for all Pauli strings, $p$. Whatever level $k$ we started with, we will eventually get to level $k=1$ and there we have that $\pi_t d_t \in \mathcal{CH}_1$. Since this must be a Pauli group element, $P$, which can be expressed (up to a phase) as a product of $p_X p_Z$, where $p_Z$ is a diagonal Pauli gate, and $p_X$ is a permutation (Pauli) gate, we have that $\pi_t d_t = p_X p_Z$ and via the splitting property $\pi_t = p_X$ and $d_t = p_Z$. But $\pi_t \notin \mathcal{CH}$ (at any level) and $p_X \in \mathcal{CH}_1$. We have a contradiction and our assumption that $\pi_1 d_1 \in \mathcal{CH}_k$ must not be true. Therefore, $\pi_1 d_1 \notin \mathcal{CH}_k$. But since this proof works for arbitrary $k$, we conclude that $\pi_1 d_1 \notin \mathcal{CH}$.   
\end{proof}

\begin{corollary}
By Thms \ref{thm:noPi}, \ref{thm:4}, and \ref{thm:5} we conclude that $\pi d \in \mathcal{CH}$ iff $\pi \in \mathcal{CH}$ and $d \in \mathcal{CH}$.
\end{corollary}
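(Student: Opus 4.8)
The plan is to obtain the equivalence purely by assembling the three preceding theorems, which between them already cover every case, so no new induction or case analysis is required. First I would dispose of the reverse implication: assuming $\pi \in \mathcal{CH}$ and $d \in \mathcal{CH}$, Theorem \ref{thm:4} gives $\pi d \in \mathcal{CH}$ immediately, and there is nothing more to do in this direction.

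For the forward implication I would argue by contraposition in two stages, and the \emph{order} of these stages is the one point that needs care. Suppose $\pi d \in \mathcal{CH}$. If $\pi \notin \mathcal{CH}$, then Theorem \ref{thm:noPi} (which makes no assumption on the diagonal factor) forces $\pi d \notin \mathcal{CH}$, contradicting the hypothesis; hence $\pi \in \mathcal{CH}$. Only now, having established that $\pi$ itself lies in the hierarchy, am I entitled to invoke Theorem \ref{thm:5}, whose hypothesis is precisely $\pi \in \mathcal{CH}$: were $d \notin \mathcal{CH}$, it would give $\pi d \notin \mathcal{CH}$, again a contradiction, so $d \in \mathcal{CH}$. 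Combining the two directions yields $\pi d \in \mathcal{CH} \iff \pi \in \mathcal{CH} \text{ and } d \in \mathcal{CH}$.

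There is no genuinely hard step here; the real content has already been extracted into Theorems \ref{thm:noPi}, \ref{thm:4}, and \ref{thm:5}, whose inductions on the nested families $G^n[k]$ and $\tilde{G}^n[k]$ (together with the splitting property of the semi-direct product $\Pi^n \ltimes \mathrm{Diag}^n_l$ and Lemmas \ref{lem:sauce}, \ref{conjp}, \ref{lem:pm}) do all the work. The only pitfall to avoid is applying Theorem \ref{thm:5} before Theorem \ref{thm:noPi} has ruled out $\pi \notin \mathcal{CH}$, since the former presupposes $\pi \in \mathcal{CH}$; once that ordering is respected the argument is a two-line deduction.
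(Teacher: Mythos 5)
Your proposal is correct and matches the paper's intent exactly: the paper simply cites Theorems \ref{thm:noPi}, \ref{thm:4}, and \ref{thm:5} without spelling out the case analysis, and your two-stage contrapositive argument (using Theorem \ref{thm:noPi} first to secure $\pi \in \mathcal{CH}$ before invoking Theorem \ref{thm:5}) is precisely the right way to assemble them. Your observation about the ordering of the two stages is a worthwhile clarification, but the approach is the same.
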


This establishes the canonical form for generalized semi-Clifford gates used in this paper. We can add left and right Clifford operators to any $\pi d \in \mathcal{CH}$ since by Lemma \ref{lem:CliffEquiv}, this is also in $\mathcal{CH}$ at the same level as $\pi d$.

\begin{corollary}\label{cor:pi_k}
If $\pi d \in \mathcal{CH}_k$, then $\pi \in \mathcal{CH}_k$.
\end{corollary}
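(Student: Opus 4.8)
The plan is to argue by induction on $k$, with the preceding corollary ($\pi d\in\mathcal{CH}\iff\pi\in\mathcal{CH}$ and $d\in\mathcal{CH}$) already in hand, so that we know $\pi\in\mathcal{CH}$ and $d\in\mathcal{CH}$ at \emph{some} level and only the level of $\pi$ must be pinned down. For the base cases, when $k=1$ the $\mathcal{CH}_1$ splitting lemma above forces $\pi$ to be a Pauli $X$ string, and when $k=2$ the bonus lemma at the end of this section forces $\pi$ to be a Clifford permutation; in either case $\pi\in\mathcal{CH}_k$.

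For the inductive step I would fix a Pauli string $p=\omega\,p_Xp_Z$ (with $p_X$ a Pauli $X$ string, $p_Z$ a Pauli $Z$ string, $\omega$ a phase) and expand
\[
(\pi d)\,p\,(\pi d)^\dagger=\pi\,(dpd^\dagger)\,\pi^\dagger=(\pi p_X\pi^\dagger)\,\big[(\pi p_Z\pi^\dagger)(\pi\hat d\pi^\dagger)\big]\,\omega ,
\]
where $dpd^\dagger=p\hat d$ with $\hat d$ a diagonal gate in $\mathcal{CH}$ by Lemma~\ref{conjp}. The first factor is a permutation matrix, and the bracketed factor is a product of diagonal gates in $\mathcal{CH}$ (Lemma~\ref{lem:diagConj} for $\pi\hat d\pi^\dagger$, together with $\mathrm{Diag}_1^n\subseteq\mathcal{CH}$), hence a single diagonal gate in $\mathcal{CH}$. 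Since $(\pi d)p(\pi d)^\dagger\in\mathcal{CH}_{k-1}$, this displays a generalized semi-Clifford gate in $\mathcal{CH}_{k-1}$ whose permutation part is $\pi p_X\pi^\dagger$, so the inductive hypothesis gives $\pi p_X\pi^\dagger\in\mathcal{CH}_{k-1}$ for every $p_X$. Separately, since $d$ commutes with every Pauli $Z$ string we get $(\pi d)\,p_Z\,(\pi d)^\dagger=\pi p_Z\pi^\dagger\in\mathcal{CH}_{k-1}$, so $\pi p_Z\pi^\dagger\in\mathcal{CH}_{k-1}$ for every $p_Z$.

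To finish I would need $\pi p\pi^\dagger=(\pi p_X\pi^\dagger)(\pi p_Z\pi^\dagger)\in\mathcal{CH}_{k-1}$ for \emph{every} Pauli $p$, which then yields $\pi\in\mathcal{CH}_k$ and closes the induction. For $k=3$ this is automatic because $\mathcal{CH}_2$ is a group; for $k\ge4$ it is the crux of the argument, since $\mathcal{CH}_{k-1}$ is no longer a group and one must show directly that the product of a permutation in $\mathcal{CH}_{k-1}$ with a $\pm1$-diagonal gate in $\mathcal{CH}_{k-1}$ again lies in $\mathcal{CH}_{k-1}$. I expect the clean way to handle this is to strengthen the induction, carrying along the auxiliary statement ``a permutation $\sigma\in\mathcal{CH}_j$ times a $\pm1$-diagonal gate $e\in\mathcal{CH}_j$ lies in $\mathcal{CH}_j$''; its inductive step runs exactly as above, with the discrete derivative $\tilde e=(p_X^\dagger e\,p_X)\,e$ playing the role of $\hat d$, the point being that $\tilde e$ is a $\pm1$-diagonal gate whose level has dropped by one (multiplying a $\mathbb{Z}_2$-valued phase polynomial by its $X$-shift kills the top-degree part, using the classification of diagonal gates in $\mathcal{CH}$ from Cui~\etal~\cite{Cui2017DiagonalHierarchy}), so that the bracketed diagonal factor again lands one level lower and the inner hypothesis applies. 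The delicate point, and the place I expect to have to work hardest, is controlling the level of $\sigma\tilde e\sigma^\dagger$: conjugating a diagonal gate by a permutation can raise its level in general, so one must exploit that $\sigma\in\mathcal{CH}_j$ is a low-``degree'' permutation to keep $\sigma\tilde e\sigma^\dagger$ inside $\mathcal{CH}_{j-1}$.
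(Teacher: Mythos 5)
Your base cases and the first half of the inductive step are sound: the splitting lemma handles $k=1$, Lemma~\ref{lem:bonus} handles $k=2$, and the derivation of $\pi p_X\pi^\dagger\in\mathcal{CH}_{k-1}$ (via the unique decomposition of $(\pi d)p(\pi d)^\dagger$ into a permutation times a diagonal gate, plus the inductive hypothesis) and of $\pi p_Z\pi^\dagger\in\mathcal{CH}_{k-1}$ is correct. But for $k\ge 4$ the step you yourself identify as the crux --- recombining these into $\pi p\pi^\dagger=(\pi p_X\pi^\dagger)(\pi p_Z\pi^\dagger)\in\mathcal{CH}_{k-1}$ --- is a genuine gap, not a routine verification. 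The paper's own circuit identity at the end of Appendix~\ref{sec:GsCProof} shows that the unrestricted recombination principle fails (a permutation and a diagonal gate, both in $\mathcal{CH}_3$, whose product lies in $\mathcal{CH}_4$), so everything rests on your restricted auxiliary claim that a permutation in $\mathcal{CH}_j$ times a $\pm1$-valued diagonal gate in $\mathcal{CH}_j$ remains in $\mathcal{CH}_j$. You do not prove that claim: its own inductive step reproduces the same recombination problem one level down, and additionally requires bounding the level of $\sigma\tilde e\sigma^\dagger$, where conjugation by a permutation can raise the level of a diagonal gate by as much as $n$. You explicitly leave both points open, so the argument does not close.

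The paper takes a different route that sidesteps this issue. It argues by contradiction along the descending chain of Thm.~\ref{thm:noPi}: assume $\pi\notin\mathcal{CH}_k$ while $\pi\in\mathcal{CH}$ at some higher level (guaranteed by Thm.~\ref{thm:noPi}); at each stage choose a Pauli $X$ witness for which the conjugated permutation fails to drop a level, while the conjugate of $\pi d$ does drop a level; after finitely many steps the conjugate of $\pi d$ lands in the Pauli group, where the unique semidirect-product splitting forces its permutation part to be a Pauli $X$ string, contradicting that it is still one level too high. Because that argument only ever \emph{reads off} the permutation part of an element already known to lie in some $\mathcal{CH}_j$, it never has to re-multiply a permutation by a diagonal gate and control the level of the product. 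To salvage your direct induction you would need to actually prove the $\pm1$-diagonal recombination lemma, including the level bound on $\sigma\tilde e\sigma^\dagger$; otherwise the natural fix is to recast the induction in the contrapositive, descending-chain form, which is essentially the paper's proof.
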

This follows from a similar proof as in Thm.~\ref{thm:noPi}. Let $\pi d \in \mathcal{CH}_k$. Assume that $\pi \notin \mathcal{CH}_k$. It must be in $\mathcal{CH}$ by Thm.~\ref{thm:noPi} (or else we would have already reached a contradiction). Assume $\pi$ is in $\mathcal{CH}$ at some level higher than $k$. We will use $k+1$, but any higher level will work. Then, by a similar  argument as Thm.~\ref{thm:noPi}, we can always choose $p_X^{k+1,k}$ such that $\pi p_X^{k+1,k} \pi^\dagger \in \mathcal{CH}_k$. We eventually get to the case where $\pi p_X^{3,2} \pi^\dagger$ is Clifford (or higher if not using $k+1$), but must be in the Pauli group since $\pi d \in \mathcal{CH}_k$. We arrive at a similar contradiction which implies that $\pi \in \mathcal{CH}_k$.

\begin{lemma}[Bonus Lemma]\label{lem:bonus}
If $\pi d \in \mathcal{CH}_2$, then $\pi \in \mathcal{CH}_2$ and $d \in \mathcal{CH}_2$.
\end{lemma}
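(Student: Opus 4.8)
The plan is to leverage Corollary~\ref{cor:pi_k}, which already gives us $\pi \in \mathcal{CH}_2$ from the hypothesis $\pi d \in \mathcal{CH}_2$; the only real work is showing $d \in \mathcal{CH}_2$. Since $\pi \in \mathcal{CH}_2$, the permutation $\pi$ is a Clifford gate, hence invertible within the Clifford group, so $d = \pi^{-1}(\pi d)$. Now $\pi d \in \mathcal{CH}_2$ and $\pi^{-1}$ is Clifford, so by Lemma~\ref{lem:CliffEquiv} (left multiplication by an arbitrary Clifford gate preserves the level in $\mathcal{CH}$), we conclude $d \in \mathcal{CH}_2$. That is the entire argument for the diagonal factor.

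First I would invoke Corollary~\ref{cor:pi_k} verbatim to extract $\pi \in \mathcal{CH}_2$. Second, I would note that $\mathcal{CH}_2$ is the Clifford group, which is closed under inverses, so $\pi^{-1} \in \mathcal{CH}_2$ is Clifford. Third, I would write $d = \pi^{-1} (\pi d)$ and apply Lemma~\ref{lem:CliffEquiv} with $C_1 = \pi^{-1}$, $C_2 = I$, and $U = \pi d \in \mathcal{CH}_2$, yielding $d \in \mathcal{CH}_2$. This completes the proof since both factors are then shown to lie in $\mathcal{CH}_2$.

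I do not anticipate a genuine obstacle here, since the heavy lifting was done in Theorem~\ref{thm:noPi} and Corollary~\ref{cor:pi_k}; the Bonus Lemma is essentially a clean-up observation that patches the gap flagged in the paragraph just before it (``it is not immediately clear whether $\pi d \in \mathcal{CH}_2$ implies that $\pi$ is a Clifford permutation and that $d$ is a diagonal Clifford element''). The only point requiring a sentence of care is confirming that Corollary~\ref{cor:pi_k}'s proof, which is stated for general $k$, indeed applies at $k=2$ — it does, since the inductive descent in Theorem~\ref{thm:noPi} terminates at the Pauli level $k=1$ regardless of the starting level. If one wanted a fully self-contained argument avoiding Corollary~\ref{cor:pi_k}, one could instead observe directly: were $\pi \notin \mathcal{CH}_2$, then (since $\pi \in \mathcal{CH}$ by Theorem~\ref{thm:noPi}) the conjugation progression $\pi_t p_X^{t,t+1}\pi_t^\dagger = \pi_{t+1}$ would force $\pi d$ into a level strictly above $2$, contradicting $\pi d \in \mathcal{CH}_2$; but citing the corollary is cleaner.
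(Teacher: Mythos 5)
Your proposal is correct and matches the paper's own proof essentially verbatim: the paper likewise invokes Corollary~\ref{cor:pi_k} to get $\pi \in \mathcal{CH}_2$ and then writes $d = \pi^\dagger (\pi d)$, concluding $d$ is Clifford because the Clifford gates form a group (your appeal to Lemma~\ref{lem:CliffEquiv} is an equivalent way to phrase that same step). No gaps.
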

Let $\pi d \in \mathcal{CH}_2$, then $\pi d p d^\dagger \pi^\dagger \in \mathcal{CH}_1$ for all Pauli strings $p$. Then, by \ref{cor:pi_k} $\pi \in \mathcal{CH}_2$. Since $\pi d = C$, a Clifford gate, we have that $d = \pi^\dagger C$ which is also Clifford on account of the Clifford elements forming a group. This implies that $d$ is also Clifford which completes our proof.

Note that $\pi \in \mathcal{CH}_k$ and $d \in \mathcal{CH}_k$ do not imply that $\pi d \in \mathcal{CH}_k$ as the following circuit identity shows:

\begin{equation*}
\begin{tabular}{c}
\vspace{.2em}
\Qcircuit @C=0.5em @R=1.3em {
&\qw & \ctrl{1} & \qw & \qw & \qw & \ctrl{1} & \qw \\
&\qw & \ctrl{1} & \qw & \qw & \qw & \ctrl{1} & \qw \\
&\qw & \targ & \gate{T} & \gate{X} & \gate{T^\dagger} & \targ & \qw \\
}
\vspace{.2em}\hspace{1.2em}
\\
\end{tabular}
=
\begin{tabular}{c}
\vspace{.2em}
\Qcircuit @C=0.5em @R=1.3em {
&\qw & \ctrl{1} & \qw & \ctrl{1} & \qw \\
&\qw & \ctrl{1} & \qw & \ctrl{1} & \qw \\
&\qw & \targ & \gate{S} & \targ & \gate{X} \\
}
\vspace{.2em}\hspace{1.2em}
\\
\end{tabular}
=
\begin{tabular}{c}
\vspace{.2em}
\Qcircuit @C=0.5em @R=1.3em {
&\qw & \ctrl{1} & \ctrl{1} & \qw \\
&\qw & \gate{S} & \ctrl{1} & \qw \\
&\qw & \gate{S} & \ctrl{-1} & \gate{X} \\
}
\vspace{.2em}\hspace{1.2em}
\\
\end{tabular}
\notin \mathcal{CH}_2. 
\end{equation*}

Here both $T$ and Toffoli are in $\mathcal{CH}_3$, but the product as illustrated in the circuit above is in $\mathcal{CH}_4$. Also, while we have shown that $\pi d \in \mathcal{CH}_k$ implies $\pi \in \mathcal{CH}_k$, we have not proven that $\pi d \in \mathcal{CH}_k$ implies $d\in \mathcal{CH}_k$, only that it implies $d\in \mathcal{CH}$ (at some level). We leave it as an open problem to prove this or to find a counterexample. A counterexample would consist of a gate $\pi d \in \mathcal{CH}_k$ where $d$ is a diagonal gate in $\mathcal{CH}$, but not in $\mathcal{CH}_{k}$. 

\section{Some Examples of Groups in the Clifford Hierarchy}\label{sec:examples}
For one and two qubits we can use the constraints derived in this paper to find all groups in $\mathcal{CH}$ satisfying the constraints. 

\subsection*{Case: $n=1$}

For a single qubit the only permutation (Clifford or otherwise) is Pauli $X$ and the only semi-Clifford groups in $\mathcal{CH}$ are Clifford-isomorphic to subgroups of 
\begin{equation*}
    \left\langle X, Z\left [\frac{\pi}{2^k} \right]\right\rangle \mbox{ or } C^1.
\end{equation*}
The former is a generalized symmetric group which is also a dihedral group corresponding to the symmetries of a $2^k$-sided polygon. The latter is the one-qubit Clifford group. It has been proven that all gates in $\mathcal{CH}$ on one-qubit are semi-Clifford and therefore all single-qubit groups in $\mathcal{CH}$ must be Clifford-isomorphic to a subgroup of one of these groups. It is also known\cite{OConnor:2018a} that all transversal logical gates in qubit stabilizer codes must be in $\mathcal{CH}$. Since transversal gates must also form a group, the only transversal logical gates for an $[[n,k=1,d\ge 2]]$\footnote{The code should also be Bell-pair free and free of trivially-encoded qubits.} qubit stabilizer code must be Clifford-isomorphic to a subgroup of one of the groups shown above. This is similar to the result proven in \cite{Wirthmuller2011AutomorphismsCodes}. Note that Wirthm\"uller's result is more elementary in that it does not use disjointness results (or the cleaning lemma) to prove their claims.

\subsection*{Case: $n=2$}
    
    For two qubits, all elements in $\mathcal{CH}$ are semi-Clifford and all groups are Clifford-isomorphic to a subgroup of one of the following groups
    \begin{equation*}
    \left\langle C^2_{\Pi}, Z_1\left [\frac{\pi}{2^{k_1}}\right], Z_2\left [\frac{\pi}{2^{k_2}}\right], Z_1Z_2\left [\frac{\pi}{2^{k_{12}}} \right]\right\rangle \mbox{ or } \left\langle CX_{(1,2)},X_1, Z_1\left [\frac{\pi}{2^k}\right], H_2, S_2 \right\rangle \mbox{ or } C^2.
    \end{equation*}
    Here $C^2_{\Pi}=\langle CX_{(1,2)}, CX_{(2,1)}, X_1, X_2\rangle$ is the two-qubit Clifford-permutation group, $H$ and $S$ generate the one-qubit Clifford group, and $C^2$ is the two-qubit Clifford group.

\subsection*{Case: $n\ge 3$}

For $n\ge 3$, we have non-Clifford permutations and the constraints become more complicated. Furthermore, it has not been proven that all elements in $\mathcal{CH}$ are generalized semi-Clifford. Even if we assume the generalized semi-Clifford conjecture, the number of distinct groups grows rapidly with $n$ and finding (and listing) them all would quickly become unsustainable. We can, however, still provide some examples of generalized semi-Clifford groups.  

First, we identify some permutation groups in $\mathcal{CH}$ which have non-Clifford permutations. In the groups presented below, $C_{\Pi}^{[a\mbox{-}b]}$ indicates the full group of Clifford permutations on qubits $a$-$b$ (with 1 being the topmost qubit and proceeding downwards). It can be verified (with some work) that all permutations generated by these groups are in $\mathcal{CH}$.

\begin{center}
\begin{equation*}
G_{\tilde{\Pi}_3} = \left \langle
\begin{tabular}{c}
\Qcircuit @C=0.5em @R=1.3em {
&\ctrlblankA{1} \\
&\ctrlblankA{1} \\
&\targblank \\
}
\end{tabular},
C_{\Pi}^{[1\mbox{-}2]}
\right \rangle
\end{equation*}
\vspace{1ex}\\
\begin{equation*}
G_{\tilde{\Pi}_{4a}} = \left \langle
\begin{tabular}{c}
\Qcircuit @C=0.5em @R=1.3em {
& \ctrlblankA{1} \\
& \ctrlblankA{1} \\
& \ctrlblankA{1} \\
& \targblank \\
}
\end{tabular},
\begin{tabular}{c}
\Qcircuit @C=0.5em @R=1.3em {
& \ctrlblankA{1} \\
& \ctrlblankA{1} \\
& \targblank \\
 \\
}
\end{tabular},
C_{\Pi}^{[1\mbox{-}2]}
\right \rangle
\end{equation*}
\vspace{1ex}\\
\begin{equation*}
G_{\tilde{\Pi}_{4b}} = \left \langle
\begin{tabular}{c}
\Qcircuit @C=0.5em @R=1.3em {
& \ctrlblankA{1} \\
& \ctrlblankA{1} \\
& \ctrlblankA{1} \\
& \targblank \\
}
\end{tabular},
C_{\Pi}^{[1\mbox{-}3]}
\right \rangle
\end{equation*}
\end{center}

We also give some examples of a set of (gate) generators for $Diag_l^n$ the $2^n \times 2^n$ diagonal unitary matrices with $2^l$ root of unity entries:  

\begin{equation*}
    Diag_3^1 = \langle T \rangle,
\end{equation*}

\begin{equation*}
    Diag_3^2 = \langle T_1, T_2, \Lambda^1(T) \rangle,
\end{equation*}

\begin{equation*}
    Diag_3^3 = \langle T_1, T_2, T_3, \Lambda^1_{1,2}(T), \Lambda^1_{1,3}(T), \Lambda^1_{2,3}(T), \Lambda^2(T)  \rangle,
\end{equation*}
and finally
\begin{equation*}
    Diag_3^4 = \langle T_i, \Lambda^1_{i,j}(T), \Lambda^2_{i,j,k}(T), \Lambda^3(T)  \rangle.
\end{equation*}
Here $i$ indicates any single qubit, $i,j$ any pair of qubits, \etc 

We can construct new groups with the semi-direct product; since these groups (and elements) satisfy the constraints to be a group in $\mathcal{CH}$, we have that any subgroup of the following are in $\mathcal{CH}$:  
\begin{equation*}
    G_{\tilde{\Pi}_3} \ltimes Diag_3^3,\hspace{1ex}
    G_{\tilde{\Pi}_{4a}} \ltimes Diag_3^4,\hspace{1ex}\mbox{ or }\hspace{1ex}
    G_{\tilde{\Pi}_{4b}} \ltimes Diag_3^4.
\end{equation*}

These are just examples which can be extended to $Diag_l$ with higher (larger $l$) roots of unity or to different groups of permutations or to permutations on more qubits.  

\section{Permutation Gates in $\mathcal{CH}$ on 3 qubits}\label{P3gates}

Earlier, we showed that a generalized semi-Clifford gate must have a permutation in the Clifford Hierarchy to be in the Clifford Hierarchy itself. To get an idea of how restrictive this requirement on permutations is, we find all the classical permutations on three qubits (that is $8\times 8$ permutation matrices) which are in $\mathcal{CH}$. Note that all Clifford permutations must be in $\mathcal{CH}$ and we will look only at the non-Clifford permutations here. For permutations on three qubits, these are generated by Toffoli gates. We will associate the Toffoli with target on qubit 1 with generator `$a$', the Toffoli with target on qubit 2 with generator `$b$', and the Toffoli with target on qubit 3 with the generator `$c$'. With this association, we have the following group presentation:
\begin{equation*}
    \langle a,b,c \hspace{0.5ex}|\hspace{0.5ex} g_i^2 = 1, (g_i g_j)^3 = 1, (g_i g_j g_k)^4 = 1, g_i g_j g_i g_k = g_k g_i g_j g_i \rangle.
\end{equation*}

Here $g_i,g_j,g_k$ can represent any generator, but $g_i \ne g_j \ne g_k \ne g_i$. The group relations are given by easily verified Toffoli gate identities. This group is finite and all distinct words are given below:

\begin{multline*}
    \{1, a, b, c, ab, ac, ba, bc, ca, cb, aba, abc, aca, acb, bac, bca, bcb, cab, cba,\\ 
    abac, abca, abcb, acab, acba, bacb, bcab, cabc, cbac, abcab, acbac, bacba, bcabc,\\ 
    cabca, cbacb, abcabc, acbacb, bacbac \}.
\end{multline*}

Many of these words are equivalent up to Clifford operations and we have the following classes of non-Clifford (and non-trivial) 3-qubit permutations: 
\begin{equation*}
a, ab, aba, abc, abac, abca, abcab, abcabc.    
\end{equation*}

To show that a gate, $U$, is not in $\mathcal{CH}$ at any level it suffices to show that 
\begin{itemize}
    \item (1) $UpU^\dagger = C_L U C_R$ for at least one Pauli string $p$ \\
    or that
    \item (2) $UpU^\dagger = C_L V C_R$ for at least one Pauli string $p$ where $V$ is a gate not in $\mathcal{CH}$. 
\end{itemize}

We already know that a single Toffoli is in $\mathcal{CH}_3$. Then, using the relation:

\begin{center}
\begin{tabular}{c}
\Qcircuit @C=1.5em @R=1.3em {
& \ctrl{1} \qw & \ctrl{1} \qw & \ctrl{1} \qw & \qw  \\
& \ctrl{1} \qw & \targ & \ctrl{1} \qw & \qw  \\
& \targ & \ctrl{-1} \qw & \targ & \qw  \\
}
\end{tabular}
 = 
\begin{tabular}{c}
\Qcircuit @C=1.5em @R=1.3em {
& \qw & \ctrl{1} \qw & \qw & \qw  \\
& \ctrl{1} \qw & \targ & \ctrl{1} \qw & \qw  \\
& \targ & \ctrl{-1} \qw & \targ & \qw  \\
}
\end{tabular}
=
\begin{tabular}{c}
\Qcircuit @C=1.4em @R=2em {
& \ctrl{1} \qw & \qw \\
& \qswap \qw & \qw  \\
& \qswap \qwx & \qw  \\
}
\end{tabular},
\end{center}

we see that permutations in $aba$ class are also in $\mathcal{CH}_3$ since they are Clifford-equivalent to a single Toffoli. In Zeng \etal \cite{Zeng2008} they showed that

\begin{center}
\begin{tabular}{c}
\Qcircuit @C=1.5em @R=1.3em {
& \ctrl{1} \qw & \ctrl{1} \qw & \gate{X} & \ctrl{1} \qw & \ctrl{1} \qw & \qw \\
& \targ  & \ctrl{1} \qw & \qw & \ctrl{1} \qw & \targ & \qw  \\
& \ctrl{-1} & \targ & \qw & \targ & \ctrl{-1} \qw & \qw \\
}
\end{tabular}
 = 
\begin{tabular}{c}
\Qcircuit @C=1.5em @R=1.3em {
& \gate{X} & \qw & \ctrl{1} \qw & \ctrl{1} \qw & \qw\\
& \targ  & \ctrl{1} \qw & \targ & \ctrl{1} \qw & \qw  \\
& \ctrl{-1} & \targ & \ctrl{-1} \qw & \targ & \qw \\
}
\end{tabular}
\end{center}

and 

\begin{center}
\begin{tabular}{c}
\Qcircuit @C=0.7em @R=1.3em {
& \qw & \targ & \ctrl{1} \qw & \ctrl{1} \qw & \gate{X} & \ctrl{1} \qw & \ctrl{1} \qw & \targ & \qw \\
& \qw & \ctrl{-1} \qw & \targ  & \ctrl{1} \qw & \qw & \ctrl{1} \qw & \targ & \ctrl{-1} \qw & \qw  \\
& \qw & \ctrl{-1} \qw & \ctrl{-1} & \targ & \qw & \targ & \ctrl{-1} \qw & \ctrl{-1} \qw & \qw \\
}  
\end{tabular}
=
\begin{tabular}{c}
\Qcircuit @C=0.7em @R=1.3em {
& \qw & \qw & \targ & \qw & \targ & \ctrl{1} \qw & \ctrl{1} \qw & \targ & \targ & \gate{X} & \qw\\
& \qw & \targ  & \qw & \ctrl{1} \qw & \qw & \targ & \ctrl{1} \qw & \qw & \ctrl{-1} \qw & \qw & \qw  \\
& \qw & \ctrl{-1} \qw & \ctrl{-2} \qw & \targ & \ctrl{-2} \qw & \ctrl{-1} \qw & \targ & \ctrl{-2} \qw & \qw & \qw & \qw \\
}
\end{tabular}
\end{center}

which shows that $ab$ and $abc$ classes are not in $\mathcal{CH}$. 

Note that the example for class $abc$ also proves that class $abca$ is not in $\mathcal{CH}$ since

\begin{center}
\begin{tabular}{c}
\Qcircuit @C=0.7em @R=1.3em {
& \qw & \targ & \ctrl{1} \qw & \ctrl{1} \qw & \targ & \gate{X} & \targ & \ctrl{1} \qw & \ctrl{1} \qw & \targ &  \qw \\
& \qw & \ctrl{-1} \qw & \targ  & \ctrl{1} \qw & \ctrl{-1} \qw & \qw & \ctrl{-1} \qw & \ctrl{1} \qw & \targ & \ctrl{-1} \qw & \qw  \\
& \qw & \ctrl{-1} \qw & \ctrl{-1} & \targ & \ctrl{-1} \qw & \qw & \ctrl{-1} \qw & \targ & \ctrl{-1} \qw & \ctrl{-1} \qw & \qw \\
}
\end{tabular}
 = 
\begin{tabular}{c}
\Qcircuit @C=0.7em @R=1.3em {
& \qw & \targ & \ctrl{1} \qw & \ctrl{1} \qw & \gate{X} & \ctrl{1} \qw & \ctrl{1} \qw & \targ & \qw \\
& \qw & \ctrl{-1} \qw & \targ  & \ctrl{1} \qw & \qw & \ctrl{1} \qw & \targ & \ctrl{-1} \qw & \qw  \\
&\qw & \ctrl{-1} \qw & \ctrl{-1} & \targ & \qw & \targ & \ctrl{-1} \qw & \ctrl{-1} \qw & \qw \\
}
\end{tabular}.
\end{center}

For class $abac$ we have the following circuit identity:

\begin{center}
\begin{tabular}{c}
\Qcircuit @C=0.7em @R=1.3em {
&\qw & \ctrl{1} \qw & \targ & \ctrl{1} \qw & \ctrl{1} \qw & \gate{X} & \ctrl{1} \qw & \ctrl{1} \qw & \targ & \ctrl{1} \qw &  \qw \\
&\qw & \targ & \ctrl{-1} \qw  & \targ & \ctrl{1} \qw & \qw & \ctrl{1} \qw & \targ & \ctrl{-1} \qw & \targ & \qw  \\
&\qw & \ctrl{-1} \qw & \ctrl{-1} & \ctrl{-1} \qw & \targ & \qw & \targ & \ctrl{-1} \qw & \ctrl{-1} \qw & \ctrl{-1} \qw & \qw \\
} 
\end{tabular}
 = 
\begin{tabular}{c}
\Qcircuit @C=0.7em @R=1.3em {
& \qw & \gate{X} & \qw & \qswap & \targ & \ctrl{1} \qw & \ctrl{1} & \targ \qw & \qswap & \targ & \qw \\
& \qw & \ctrl{1} \qw & \targ & \qw \qwx & \ctrl{-1} \qw & \targ & \ctrl{1} \qw & \ctrl{-1} \qw & \qw \qwx & \qw & \qw  \\
& \qw & \targ & \ctrl{-1} & \qswap \qwx &  \ctrl{-1} \qw & \ctrl{-1} \qw & \targ & \ctrl{-1} \qw & \qswap \qwx & \ctrl{-2} \qw & \qw \\
}
\end{tabular}
\end{center}

For class $abcab$ we have the following identity:

\begin{tabular}{c}
\Qcircuit @C=0.5em @R=1.3em {
& \qw & \targ & \ctrl{1} \qw & \ctrl{1} \qw & \targ & \ctrl{1} \qw & \qw & \ctrl{1} \qw & \targ & \ctrl{1} \qw & \ctrl{1} \qw & \targ &  \qw \\
& \qw & \ctrl{-1} \qw & \targ  & \ctrl{1} \qw & \ctrl{-1} \qw & \targ & \qw & \targ & \ctrl{-1} \qw & \ctrl{1} \qw & \targ & \ctrl{-1} \qw & \qw  \\
& \qw & \ctrl{-1} \qw & \ctrl{-1} & \targ & \ctrl{-1} \qw & \ctrl{-1} \qw & \gate{X} & \ctrl{-1} \qw & \ctrl{-1} \qw & \targ & \ctrl{-1} \qw & \ctrl{-1} \qw & \qw \\
} 
\end{tabular}
 = 
\begin{tabular}{c}
\Qcircuit @C=0.5em @R=1.3em {
& \qw & \ctrl{1} \qw & \targ & \qw & \qw & \ctrl{1} \qw & \targ & \ctrl{1} \qw & \ctrl{1} \qw & \qw & \qw & \targ & \qw \\
& \qw & \targ & \ctrl{-1} \qw & \ctrl{1} \qw & \qswap & \targ & \ctrl{-1} \qw & \targ & \ctrl{1} \qw & \qswap & \ctrl{1} \qw & \ctrl{-1} \qw & \qw  \\
& \qw & \gate{X} & \qw & \targ & \qswap \qwx & \ctrl{-1} \qw & \ctrl{-1} \qw & \ctrl{-1} \qw & \targ & \qswap \qwx & \targ & \qw & \qw \\
}
\end{tabular}

\vspace{1em}
And finally, for class $abcabc$ the example from class $abcab$ also shows it is not in the Clifford Hierarchy.

\begin{tabular}{c}
\Qcircuit @C=0.4em @R=1.3em {
& \qw & \targ & \ctrl{1} \qw & \ctrl{1} \qw & \targ & \ctrl{1} \qw & \ctrl{1} \qw & \qw & \ctrl{1} \qw & \ctrl{1} \qw & \targ & \ctrl{1} \qw & \ctrl{1} \qw & \targ &  \qw \\
& \qw & \ctrl{-1} \qw & \targ  & \ctrl{1} \qw & \ctrl{-1} \qw & \targ & \ctrl{1} \qw & \qw & \ctrl{1} \qw & \targ & \ctrl{-1} \qw & \ctrl{1} \qw & \targ & \ctrl{-1} \qw & \qw  \\
& \qw & \ctrl{-1} \qw & \ctrl{-1} & \targ & \ctrl{-1} \qw & \ctrl{-1} \qw & \targ & \gate{X} & \targ & \ctrl{-1} \qw & \ctrl{-1} \qw & \targ & \ctrl{-1} \qw & \ctrl{-1} \qw & \qw \\
} 
\end{tabular}
= 
\begin{tabular}{c}
\Qcircuit @C=0.4em @R=1.3em {
& \qw & \targ & \ctrl{1} \qw & \ctrl{1} \qw & \targ & \ctrl{1} \qw & \qw & \ctrl{1} \qw & \targ & \ctrl{1} \qw & \ctrl{1} \qw & \targ &  \qw \\
&\qw & \ctrl{-1} \qw & \targ  & \ctrl{1} \qw & \ctrl{-1} \qw & \targ & \qw & \targ & \ctrl{-1} \qw & \ctrl{1} \qw & \targ & \ctrl{-1} \qw & \qw  \\
& \qw & \ctrl{-1} \qw & \ctrl{-1} & \targ & \ctrl{-1} \qw & \ctrl{-1} \qw & \gate{X} & \ctrl{-1} \qw & \ctrl{-1} \qw & \targ & \ctrl{-1} \qw & \ctrl{-1} \qw & \qw \\
}
\end{tabular}

\vspace{1em}
We see that the only classical permutations on three qubits in $\mathcal{CH}$ are Clifford permutations and (up to Clifford permutations) a single Toffoli gate.  

\section{Identifying Permutations in $\mathcal{CH}$}

We have shown that for a generalized semi-Clifford gate to be in the Clifford Hierarchy, its permutations must be in the Clifford Hierarchy. While it is easy to check whether a diagonal gate is an element in $\mathcal{CH}$ (verify that all entries are $2^k$ roots of unity), determining which permutations are in $\mathcal{CH}$ appears to be more challenging. 

Generally, a gate, $U$, is at level $k$ in the Clifford Hierarchy, denoted $\mathcal{CH}_k$, iff $UpU^\dagger = V_{k-1}$ for all $4^n$ Pauli strings $p$ and $V_{k-1}$ is an element of $\mathcal{CH}_{k-1}$. Assuming that membership in $\mathcal{CH}_2$, the Clifford group, can be easily verified, we see that checking if $U$ is in $\mathcal{CH}_3$ requires checking $2n$ Pauli group generators and verifying that each $UpU^\dagger$ is in $\mathcal{CH}_2$. Then, since $Up_1p_2U^\dagger = Up_1U^\dagger Up_2U^\dagger$ and since $\mathcal{CH}_2$ is a group, we can see that if it is true for the generators it must be true for all elements. We can imagine a black box which checks if $UpU^\dagger$ is in $\mathcal{CH}_k$. Then, $2n$ queries must be made to check if $U\in \mathcal{CH}_3$. For levels $k>3$, all $4^n$ strings must be checked (queried) generally. This gives an upper bound on the number of queries to be made of $4^{n(k-3)}2n$. From, Corollary \ref{Xonly}, we see that permutation gates require only $2^{n(k-3)}n$ queries since only Pauli $X$ strings must be checked. This is still exponential. Furthermore, each query generally involves multiplication of $2^n \times 2^n$ matrices. 

Still, there are some classes of permutations which can be shown to be in $\mathcal{CH}$. Here we identify one such class. We will look at circuits of multi-controlled NOTs, denoted $C^n(X)$ for $n\ge 1$. These include CNOT and Toffoli. Note that each gate consists of $n$ controls (in the computational basis) and one target. We will define a time slice of a circuit as some set of $C^n(X)$ gates which can be implemented simultaneously in the circuit. That is, their support is on different wires. We will define the Control-Target ($CT$) mismatch between time slices as the number of controls(targets) of permutation gates in one time slice of a circuit which share the same wire with targets(controls) in another time slice.  

\begin{theorem}\label{thm:2}
If a permutation can be written as a circuit, $U$, such that the CT mismatch is zero between any two time slices, then $U$ is in $\mathcal{CH}$.
\end{theorem}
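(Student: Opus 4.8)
The plan is to show that the zero-CT-mismatch hypothesis forces $U$ into a normal form that is Clifford-equivalent to a diagonal gate with $\pm 1$ entries, which is automatically in $\mathcal{CH}$ by the diagonal-gate classification (Lemma~\ref{l1}).

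\emph{Step 1: the wires split into control wires and target wires.} Within a single time slice all gates have disjoint support, and between any two slices the CT mismatch is zero; hence no wire can ever serve as a control of one gate and as a target of another. So the $n$ wires partition into a set $W_c$ of wires that are used only as controls, a set $W_t$ of wires used only as targets, and a set of unused wires (which we discard). In particular, \emph{no gate in the circuit ever flips a wire in $W_c$}, so the $W_c$ register is a classical control register throughout the whole circuit.

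\emph{Step 2: the normal form.} Since the $W_c$ register is never modified, $U$ is block-diagonal in the computational basis of $W_c$: $U=\sum_{x}|x\rangle\!\langle x|_{W_c}\otimes U_x$. For a fixed control value $x$, each gate $C^{m}(X)$ with control set $S\subseteq W_c$ and target $t\in W_t$ acts as $X_t$ if $\prod_{i\in S}x_i=1$ and as the identity otherwise; products of such $X_t$'s commute, so $U_x=\bigotimes_{t\in W_t}X_t^{c(x)_t}$, where $c(x)_t=\bigoplus_{g:\,\mathrm{target}(g)=t}\prod_{i\in S_g}x_i$ is a multilinear $\mathbb{F}_2$-polynomial in the bits of $x$. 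Now conjugate by the Clifford gate $H_{W_t}=\bigotimes_{t\in W_t}H_t$, which commutes with the $W_c$ register. Since $HX H=Z$,
\begin{equation*}
H_{W_t}\,U\,H_{W_t}=\sum_{x}|x\rangle\!\langle x|_{W_c}\otimes\bigotimes_{t\in W_t}Z_t^{c(x)_t}=:D,
\end{equation*}
which is diagonal in the computational basis with every entry $\pm 1$, so $D\in\mathrm{Diag}^n_1$ and therefore $D\in\mathcal{CH}$ by Lemma~\ref{l1}. Clifford conjugation preserves membership in $\mathcal{CH}$ (Lemma~\ref{lem:CliffEquiv}), so $U=H_{W_t}DH_{W_t}\in\mathcal{CH}$. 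Expanding each monomial $\bigl(\prod_{i\in S_g}x_i\bigr)y_{t_g}$ shows that $D$, hence $U$, in fact lies in $\mathcal{CH}_{k}$ with $k=1+\max_g m_g$, where $m_g$ is the number of controls of gate $g$ (consistent with a CNOT circuit being Clifford and a single Toffoli being in $\mathcal{CH}_3$).

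\emph{Where the work is.} Every computation here is elementary; the only real content is the observation in Step~1 that zero CT mismatch makes $W_c$ a purely classical register, which is exactly what produces the $\sum_x|x\rangle\!\langle x|\otimes X^{c(x)}$ normal form. After that the Hadamard trick reduces the claim to the already-established fact that $\pm 1$ diagonal gates are in the Clifford Hierarchy. An alternative to Step~2's Hadamard argument is a direct induction on $k$ via Corollary~\ref{Xonly}: for a Pauli-$X$ string $p$ acting on $W_c$ as $X^a$, one checks that $UpU^\dagger = p\,V$ with $V=\sum_x|x\rangle\!\langle x|_{W_c}\otimes X^{c(x)\oplus c(x\oplus a)}$, i.e. $V$ has the same normal form but with $c$ replaced by its discrete derivative, which is of strictly lower degree (and if $p$ is trivial on $W_c$ then $UpU^\dagger$ is itself Pauli); the Hadamard route is cleaner, so I would present that one.
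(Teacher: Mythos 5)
Your proof is correct and is essentially the same argument as the paper's: zero CT mismatch forces each wire to be globally a control or globally a target, so conjugating every target wire by $H$ simultaneously diagonalizes all time slices, reducing the claim to the known fact that diagonal gates with $2^k$-root-of-unity entries are in $\mathcal{CH}$. Your Step 2 merely makes the paper's one-line diagonalization explicit (and your closing remark recovers the level bound the paper states separately as Theorem~\ref{MM0}).
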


\begin{proof}
Note that $CT$ mismatch of zero is equivalent to requiring that all time slices commute. Then, since any time slice can be diagonalized by conjugation with local Clifford gates, simply conjugate each target by $H$. And, since the $CT$ mismatch is zero, all time slices can be diagonalized by the same local Clifford gates; $U$ is therefore Clifford-isomorphic to a diagonal gate element in $\mathcal{CH}$ which shows that $U$ must be in $\mathcal{CH}$.
\end{proof}

\begin{lemma}\label{passthru}
A Pauli $X$ string can be `passed through' a permutation gate, $C^n(X)$, without increasing the CT mismatch in the circuit.

We will prove the result for a weight-one Pauli $X$ string, and note that by repeating the result (with other weight-one Pauli $X$ strings), it can be applied to an arbitrary $X$ string. Using the following circuit identity:
\begin{center}
\begin{tabular}{c}
\Qcircuit @C=0.5em @R=0em @!R {
& \gate{X} & \qw & \ctrl{1} & \qw & & & \qw & \ctrl{1} & \gate{X} & \qw \\
& \qw & \qw & \ctrl{1} & \qw & & & \qw & \ctrl{1} & \ctrl{1} & \qw \\
& \qwblank & \qwblank & \qwblank & \qwblank & & & \qwblank & \qwblank & \qwblank & \qwblank \\
& & & \vdots & & \push{\rule{.3em}{0em}=\rule{.3em}{0em}} & & & \vdots & \vdots &  \\
& \qwblank & \qwblank & \ctrlblank{1} & \qwblank & & & \qwblank & \ctrlblank{1} & \ctrlblank{1} & \qwblank \\
& \qw & \qw & \ctrl{1} & \qw & & & \qw & \ctrl{1} & \ctrl{1} & \qw \\
& \qw & \qw & \targ & \qw & & &  \qw & \targ & \targ & \qw 
}
\end{tabular}

\end{center}

we see that the Pauli $X$ string can be passed through any $C^n(X)$ gate without increasing the CT mismatch. The additional $C^{n-1}(X)$ gate has targets and controls at the same locations as the $C^n(X)$ gate and will have the same or lower CT mismatch with all other gates in the circuit. Note that this circuit identity, while derived independently, is similar to the `Pauli Sandwich Trick' of O'Connor and Yoder\cite{OConnor:2021a}.  
\end{lemma}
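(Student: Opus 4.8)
The plan is to reduce the claim to a single-qubit Pauli $X$ and then carry out a short case analysis on where that $X$ sits relative to the $C^n(X)$ gate. First I would observe that any Pauli $X$ string is a product of weight-one $X$'s, so it is enough to pass a single $X_i$ through $C^n(X)$; the general string then follows by iterating, provided each step only ever adds gates whose support is contained in the support of gates already present (which the single-qubit analysis will confirm).

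For the single-qubit case there are three subcases. If wire $i$ is not in the support of $C^n(X)$, then $X_i$ commutes with it and nothing is introduced. If $i$ is the target wire, then $X_i$ again commutes with $C^n(X)$, since conjugating the controlled-$X$ action on the target by an uncontrolled $X$ on the target is the identity; again no new gate appears. The only substantive subcase is $i$ a control wire, where I would verify, by evaluating both sides on an arbitrary computational basis vector $|x_1\cdots x_n\rangle|y\rangle$, the circuit identity displayed in the statement: commuting $X_i$ past $C^n(X)$ produces exactly one correction gate, a $C^{n-1}(X)$ whose target is the original target and whose controls are the remaining $n-1$ controls of the original gate. (The underlying fact is $x_1\cdots x_n \oplus \bar{x}_1 x_2\cdots x_n = x_2\cdots x_n$.)

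With the identity established, the CT-mismatch accounting is the real content and is quick. The correcting $C^{n-1}(X)$ has its target on precisely the wire that carried the original target, and its controls on a subset of the wires that carried the original controls; hence for every other gate in every other time slice, the number of control/target wire coincidences it makes is bounded by the number the original $C^n(X)$ already made. The factor $X_i$ that has been moved across is a weight-one $X$, that is, a $C^0(X)$ gate: it carries a single target and no controls, and it sits on wire $i$, which was already a control wire of the gate it passed — so it cannot produce a control/target coincidence that was not already possible there. Putting these together, the total CT mismatch of the circuit does not increase.

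I expect the main obstacle to be purely bookkeeping rather than anything conceptual: one must check that iterating the weight-one move, both across the qubits of a multi-qubit $X$ string and across the gates of a whole circuit, never lets the accumulated correction gates escape the union of the original gate supports (each correction keeps its target on an original target wire and its controls among original control wires, so this is inductive), and that the definition of CT mismatch is applied consistently to the bare $X$ factors as well as to the $C^m(X)$ corrections. Once that is pinned down the lemma is immediate from the displayed identity.
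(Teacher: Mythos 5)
Your proposal is correct and follows essentially the same route as the paper: reduce to a weight-one $X$, establish the displayed identity (the substantive case being $X$ on a control wire, yielding a correcting $C^{n-1}(X)$ whose target and controls sit on wires already occupied by the original gate's target and controls), and conclude that no new control/target coincidences can arise. Your explicit treatment of the trivial subcases and the basis-state verification merely fill in details the paper leaves implicit.
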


\begin{corollary}
Since multiplication of $U$ on the left and right by Clifford permutations ($X$s and CNOTs) does not change the level in $\mathcal{CH}$ we can extend Theorem \ref{thm:2} to allow arbitrary Clifford permutations on the left and right of $U$. Also, since Pauli $X$ strings can be `passed through' a permutation circuit without increasing the CT mismatch, we can allow arbitrary $X$ strings between time slices.  
\end{corollary}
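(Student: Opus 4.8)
The plan is to reduce the stated extension to Theorem~\ref{thm:2} together with Lemmas~\ref{lem:CliffEquiv} and~\ref{passthru}, which between them already supply every ingredient. For the first half of the statement (arbitrary Clifford permutations flanking $U$), I would just observe that Clifford permutations --- products of $X$ strings and CNOTs --- are Clifford gates, so if the ``inner'' permutation circuit $U$ has zero CT mismatch between any two of its time slices then $U\in\mathcal{CH}$ by Theorem~\ref{thm:2}, and hence $C_L\,U\,C_R\in\mathcal{CH}$ at the very same level by Lemma~\ref{lem:CliffEquiv}. This half is immediate.

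For the second half (Pauli $X$ strings allowed between time slices), I would start from a circuit of the form $C_L\,\mathrm{TS}_1\,P_1\,\mathrm{TS}_2\,P_2\cdots\mathrm{TS}_{m-1}\,P_{m-1}\,\mathrm{TS}_m\,C_R$, where each $\mathrm{TS}_i$ is a time slice, consecutive time slices have zero CT mismatch, each $P_i$ is a Pauli $X$ string, and $C_L,C_R$ are Clifford permutations. The idea is to use Lemma~\ref{passthru} repeatedly to commute every $P_i$ rightward past $\mathrm{TS}_{i+1},\dots,\mathrm{TS}_m$ until all the $X$ strings have been collected at the right edge, just before $C_R$. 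Pushing an $X$ string past a single $C^n(X)$ either commutes it (when its support meets only the target wire, or wires outside the gate) or, for each control wire it meets, leaves behind one extra multi-controlled NOT supported on a subset of that gate's wires with controls mapped to controls and target to target --- precisely the circuit identity of Lemma~\ref{passthru}. Because the spawned gates have support contained in that of the original gate, with the control/target roles preserved, they can be absorbed into the same (enlarged) time slice --- multi-controlled NOTs sharing a common target pairwise commute, so conjugating the targets by $H$ diagonalizes the whole slice at once --- and the CT mismatch of that enlarged slice with every other time slice is still zero. Iterating over all $P_i$, we are left with $C_L\,\widetilde{\mathrm{TS}}_1\cdots\widetilde{\mathrm{TS}}_m\,P\,C_R$, where $P$ is a single Pauli $X$ string (a product of the unchanged $P_i$) and the $\widetilde{\mathrm{TS}}_i$ pairwise have zero CT mismatch.

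To finish, I would apply Theorem~\ref{thm:2} to the bare circuit $\widetilde{\mathrm{TS}}_1\cdots\widetilde{\mathrm{TS}}_m$, concluding it lies in $\mathcal{CH}$, and then invoke Lemma~\ref{lem:CliffEquiv} once more to absorb the Clifford permutation $C_L$, the Pauli $X$ string $P$, and the Clifford permutation $C_R$ without changing the level. The only point that needs genuine care --- and the step I expect to be the main obstacle --- is the bookkeeping in the pass-through: verifying that when an $X$ string meets several controls of one gate, or is pushed successively through several gates of a slice, the accumulated extra multi-controlled NOTs all really sit on control/target wires consistent with the original slice, so that zero CT mismatch is preserved and no new non-commuting pair is created. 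This is a finite, mechanical check resting on the identity already established in Lemma~\ref{passthru}, so I expect it to cost only careful indexing rather than any new idea.
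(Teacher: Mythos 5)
Your proposal is correct and follows essentially the same route as the paper: the paper's justification for this corollary is exactly the two observations you use --- Clifford multiplication does not change the level in $\mathcal{CH}$ (Lemma~\ref{lem:CliffEquiv}) for the flanking Clifford permutations, and Lemma~\ref{passthru} to push interleaved Pauli $X$ strings to the edge of the circuit without increasing the CT mismatch, after which Theorem~\ref{thm:2} applies. Your extra bookkeeping on how the spawned $C^{n-1}(X)$ gates are absorbed into the enlarged time slices is a faithful elaboration of what the paper leaves implicit.
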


\begin{theorem}\label{MM0}
Any permutation circuit with CT mismatch zero is in $\mathcal{CH}$ at the level of the highest-level gate in the circuit. 
\end{theorem}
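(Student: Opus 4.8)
The plan is to sharpen Theorem~\ref{thm:2}: there we only extracted membership in $\mathcal{CH}$, whereas the extra input we now exploit is that the single local Clifford used in that proof — Hadamards on the target wires — simultaneously diagonalizes the \emph{whole} circuit, not just one time slice at a time. Once $U$ has been conjugated into a diagonal gate, its level is controlled by the Cui~\etal\ classification of diagonal gates together with the fact, crucial here, that the diagonal gates at a fixed level form a \emph{group} $\mathcal{D}_k$ (whereas $\mathcal{CH}_k$ does not for $k\ge 3$).

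First I would set up the diagonalization precisely. Write $U$ as a sequence of time slices; by hypothesis the $CT$ mismatch between any two slices is zero, which (as in the proof of Theorem~\ref{thm:2}) means all slices commute and, more, that no wire carrying a target of some $C^m(X)$ ever carries a control of another, and conversely. Partition the wires into ``target wires'' $T$ (those carrying a target in some slice) and the rest, and set $C_H \triangleq \bigotimes_{w\in T} H_w$. Conjugating by $C_H$ sends each $C^m(X)$ gate — whose target lies in $T$ and whose controls lie outside $T$ — to $H_w C^m(X) H_w = C^m(Z)$, a diagonal gate; hence $D \triangleq C_H U C_H^\dagger$ is diagonal and equals the product of these $C^m(Z)$ factors (in circuit order, though they all commute).

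Next I would pin down the level. Let $m_{\max}$ be the largest number of controls occurring in the circuit, so the highest-level gate, $C^{m_{\max}}(X)$, is at level $k \triangleq m_{\max}+1$. Each factor $C^m(Z)$ of $D$ lies in $\mathcal{D}_{m+1}$: this is immediate from the generating set of $\mathcal{D}_{m+1}$ (which contains $\Lambda^{m}(Z[\tfrac{\pi}{2}])$), or equivalently from the standard expansion of an $m$-controlled $Z$ as a product of $\tfrac{\pi}{2^{m+1}}$-rotations about Pauli-$Z$ strings, modulo the lower-level diagonal-Clifford bookkeeping attached to controlled rotations (the convention caveat noted when fixing notation). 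Since $\mathcal{D}_{m+1}\subseteq \mathcal{D}_k$ for all $m\le m_{\max}$ and $\mathcal{D}_k$ is a group \cite{Cui2017DiagonalHierarchy}, the product $D$ lies in $\mathcal{D}_k\subseteq\mathcal{CH}_k$ (one could instead invoke Lemma~\ref{l1} and bound the reduced denominators, but the group structure is cleaner). Finally $U = C_H^\dagger D C_H$, so by Lemma~\ref{lem:CliffEquiv} $U$ is in $\mathcal{CH}$ at the same level as $D$, i.e.\ $U\in\mathcal{CH}_k$, which is exactly the level of the highest-level gate.

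I do not expect a genuine obstacle — the argument is short — but the step that must be handled carefully, and the reason the naive ``each gate is at level $\le k$, so $U$ is at level $\le k$'' reasoning is invalid, is closure under multiplication: $\mathcal{CH}_k$ is not a group for $k\ge 3$, so one really must first land inside the diagonal gates, where closure does hold. The only other point needing attention is the mismatch between the $R_Z$-rotation convention and the circuit-diagram convention for controlled gates: one should check that $H_w C^m(X) H_w$ is precisely the $\mathrm{Diag}(1,\dots,1,-1)$ gate and that this sits in $\mathcal{D}_{m+1}$, both routine. It is also worth remarking that the bound is tight — e.g.\ a circuit consisting of a single $C^{m_{\max}}(X)$ is at level exactly $m_{\max}+1$.
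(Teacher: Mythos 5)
Your proof is correct, but it reaches the conclusion by a different route than the paper. The paper's proof of Theorem~\ref{MM0} stays in the permutation picture: it invokes Corollary~\ref{Xonly} to reduce the membership check to Pauli $X$ strings, then uses Lemma~\ref{passthru} to pass each $p_X$ through the circuit, producing only extra $C^{m-1}(X)$ gates with the same control/target placement (hence still commuting with everything), so that each gate can be handled separately and the circuit inherits the level of its highest-level gate. You instead recycle the diagonalization from the proof of Theorem~\ref{thm:2} — conjugate by Hadamards on the target wires, which the zero-mismatch hypothesis guarantees are disjoint from all control wires — and then track the level of the resulting diagonal gate: each $\Lambda^m(Z)$ factor lies in $\mathcal{D}_{m+1}\subseteq\mathcal{D}_{k}$, closure of the group $\mathcal{D}_k$ (Cui \etal) gives $D\in\mathcal{D}_k\subseteq\mathcal{CH}_k$, and Lemma~\ref{lem:CliffEquiv} transports this back to $U$. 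Your version is shorter and makes explicit where the danger lies (non-closure of $\mathcal{CH}_{k\ge3}$ under products, which the group structure of $\mathcal{D}_k$ repairs), and it exhibits the concrete diagonal gate $U$ is Clifford-isomorphic to; the paper's version avoids ever leaving the permutation/Pauli-conjugation framework, which is the machinery it reuses elsewhere in that appendix (e.g.\ for the conjecture about commuting Toffoli networks). Both arguments establish membership in $\mathcal{CH}_k$ rather than exactness of the level, which is the right reading of the statement (a circuit of two identical Toffolis is the identity), and your closing remark on tightness correctly addresses this. The one step you flag as routine — that the circuit-convention $\Lambda^m(Z)=\mathrm{diag}(1,\dots,1,-1)$ sits in $\mathcal{D}_{m+1}$ despite the paper's generators being phrased as $\Lambda^{m}(Z[\tfrac{\pi}{2}])$ — is indeed routine, since the two differ by a lower-level diagonal gate absorbed by the group $\mathcal{D}_{m+1}$.
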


\begin{proof}
It is well known that $C^n(X)$ is in $\mathcal{CH}$ at level $n+1$ \cite{Zeng2008}. We only need to check Pauli $X$ strings by Lemma \ref{Xonly}. By Lemma \ref{passthru} we see that we can pass Pauli $X$ strings through permutations and the resulting permutations must commute with all other permutations. This allows us to treat each time slice separately, as well as each gate in a time slice if there is more than one. Then, the permutation circuit is in $\mathcal{CH}$ at the level of the highest-level individual gate, as claimed. 
\end{proof}

We close this section with some open problems and a cautionary tale.

\begin{conjecture}
A permutation is in $\mathcal{CH}$ at the third level iff it can be written as a circuit of commuting Toffoli ($C^2(X)$) gates (possibly preceded and followed by Clifford permutations).   
\end{conjecture}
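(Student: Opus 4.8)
The plan is to prove the two implications separately. The reverse implication is immediate from the tools already in hand: a circuit of commuting Toffolis has $CT$ mismatch zero, so Theorem~\ref{MM0} places it in $\mathcal{CH}_3$, and by Lemma~\ref{lem:CliffEquiv} left/right multiplication by Clifford permutations does not change the level. Everything nontrivial is in the forward implication, so I would assume $\pi\in\mathcal{CH}_3$ and view $\pi$ as a bijection of $\mathbb{F}_2^n$.

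First I would turn $\pi\in\mathcal{CH}_3$ into two combinatorial conditions. Conjugating a Pauli $Z$-string $Z_S$ gives $\mathrm{diag}\!\big((-1)^{\sum_{i\in S}(\pi^{-1})_i(x)}\big)$; by the reasoning behind Corollary~\ref{Xonly} this is always a diagonal gate in $\mathcal{CH}$, and for $\pi\in\mathcal{CH}_3$ it must lie in $\mathcal{CH}_2$, which (a standard consequence of the classification of \cite{Cui2017DiagonalHierarchy}) forces its phase polynomial to have $\mathbb{F}_2$-degree $\le 2$; hence (a) every coordinate of $\pi^{-1}$ has degree $\le 2$. Conjugating a Pauli $X$-string $X_S$ gives the permutation $\pi X_S\pi^\dagger$, which must lie in $\mathcal{CH}_2$; since a permutation matrix is Clifford iff its underlying map is affine over $\mathbb{F}_2$, this forces (b) $\pi X_S\pi^\dagger$ to be affine for every $S$. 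Conversely (a)+(b) give $\pi\in\mathcal{CH}_3$, so these two conditions are exactly the target.

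Condition (b) says the $2^n$ permutations $\{\pi X_S\pi^\dagger\}$ form a regular elementary-abelian group of \emph{affine} transformations of $\mathbb{F}_2^n$. Such groups are classified by commutative nilpotent $\mathbb{F}_2$-algebras $A$ in which every element squares to $0$: after a Clifford change of coordinates, $\pi$ becomes the canonical group isomorphism $(\mathbb{F}_2^n,+)\to(\mathbb{F}_2^n,\circ)$, $v\mapsto v+\phi_2(v)+\phi_3(v)+\cdots$, attached to such an $A$ (where $u\circ w=u+w+uw$). When $A$ is $2$-step nilpotent ($A^3=0$) one has $A^2\subseteq\mathrm{Ann}(A)$, so in a basis adapted to $A=(A/A^2)\oplus A^2$ the map $\phi$ is exactly a circuit of commuting Toffolis, with control qubits spanning $A/A^2$ and target qubits spanning $A^2$; the conjecture holds in this case. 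The remaining step — excluding $A^3\neq 0$ — is the one I expect to be the real obstacle.

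That step is, I suspect, where the conjecture actually fails. A short computation with the equations defining $\phi$ gives $\phi_3(u)=u\cdot\phi_2(u)$, so the degree-$3$ part of $\pi^{-1}$ works out to $\phi_3(u)+u\cdot\phi_2(u)=0$: condition (a) ($\pi^{-1}$ of degree $\le 2$) does \emph{not} see the higher nilpotency, and $\pi$ itself can then have degree $3$. Concretely, $A=\mathbb{F}_2[t_1,t_2,t_3]/(t_1^2,t_2^2,t_3^2)$ yields, on $n=7$ qubits, a permutation $\pi\in\mathcal{CH}_3$ whose seventh output coordinate is $v_7+v_1v_6+v_2v_5+v_3v_4+v_1v_2v_3$; its inverse has degree $2$, but $\pi$ has degree $3$ and no Clifford permutation on either side can lower that, so $\pi$ is not a commuting Toffoli circuit. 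If this is right, the conjecture should be weakened — for instance to: a permutation of $\mathbb{F}_2$-degree $\le 2$ lies in $\mathcal{CH}_3$ iff it is a circuit of commuting Toffolis up to Clifford permutations — and a proof of the original statement would require a structural constraint beyond (a)+(b) ruling out $A^3\neq 0$.
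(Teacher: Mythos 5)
First, a point of calibration: the paper does not actually prove this statement. It is posed as a conjecture, and the surrounding discussion establishes only the backward direction (a circuit of commuting Toffolis has CT mismatch zero, hence lies in $\mathcal{CH}_3$ by Theorem~\ref{MM0}, and Clifford multiplication preserves the level by Lemma~\ref{lem:CliffEquiv}); the forward direction is explicitly left open there. Your treatment of the backward direction coincides with the paper's.

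For the forward direction you go well beyond the paper, and your conclusion --- that the conjecture as stated is false --- appears to be correct; I can verify your candidate counterexample directly, independently of the Caranti--Dalla Volta--Sala-type classification you invoke. Your reduction is sound: for a permutation $\pi$ of $\mathbb{F}_2^n$, $\pi Z_S\pi^\dagger=\mathrm{diag}\bigl((-1)^{S\cdot\pi^{-1}(y)}\bigr)$ is Clifford iff every coordinate of $\pi^{-1}$ has algebraic degree at most $2$, $\pi X_S\pi^\dagger$ is Clifford iff it is affine, and the two conditions together are equivalent to $\pi\in\mathcal{CH}_3$. Writing $e_1,\dots,e_7$ for $t_1,t_2,t_3,t_1t_2,t_1t_3,t_2t_3,t_1t_2t_3$ in the augmentation ideal of $\mathbb{F}_2[t_1,t_2,t_3]/(t_1^2,t_2^2,t_3^2)$ and defining $\sigma$ by $1+\sigma(v)=\prod_i(1+e_i)^{v_i}$ gives
\begin{equation*}
\sigma(v)=\bigl(v_1,\;v_2,\;v_3,\;v_4+v_1v_2,\;v_5+v_1v_3,\;v_6+v_2v_3,\;v_7+v_1v_6+v_2v_5+v_3v_4+v_1v_2v_3\bigr).
\end{equation*}
Since each $(1+e_i)^2=1$ and the factors commute, $\sigma(u+v)=\sigma(u)\circ\sigma(v)$, so $\sigma X_S\sigma^\dagger\colon w\mapsto w+w\,\sigma(S)+\sigma(S)$ is affine, hence Clifford; and the cubic terms cancel in the inverse, whose seventh coordinate is $w_7+w_1w_6+w_2w_5+w_3w_4$, so every coordinate of $\sigma^{-1}$ has degree at most $2$ and all $\sigma Z_S\sigma^\dagger$ are Clifford. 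Thus $\sigma\in\mathcal{CH}_3$. On the other hand, any circuit of pairwise commuting Toffolis has CT mismatch zero, so each target coordinate is shifted by a quadratic in the (unmodified) control coordinates and the map has degree $2$; algebraic degree, taken as the maximal degree over nonzero $\mathbb{F}_2$-linear combinations of output coordinates, is invariant under affine pre- and post-composition, so the degree-$3$ map $\sigma$ is not of the conjectured form. So this is not a gap in your argument but a refutation of the statement, and your proposed weakening (restricting to permutations of degree at most $2$, equivalently ruling out $A^3\neq0$) is the right repair. As a side remark, the same example appears to falsify the paper's companion conjecture that $\pi\in\mathcal{CH}_k$ implies $\pi^\dagger\in\mathcal{CH}_k$: one has $\sigma\in\mathcal{CH}_3$, yet $\sigma^{-1}Z_{e_7}\sigma=\mathrm{diag}\bigl((-1)^{\sigma_7(y)}\bigr)$ has a cubic phase polynomial and is not Clifford, so $\sigma^\dagger=\sigma^{-1}\notin\mathcal{CH}_3$. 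I would encourage you to write this up carefully, including an explicit check of $\sigma\in\mathcal{CH}_3$ on Pauli generators.
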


From Thm.~\ref{MM0}, any network of commuting (CT mismatch zero) Toffoli gates is in $\mathcal{CH}_3$. Also, from Thm.~\ref{MM0} we see that if the permutation gates commute, then no $C^n(X)$ gates with $n\ge 3$ are permitted since they belong to $\mathcal{CH}_{k>3}$. What remains to be proven is that any circuit of $C^m(X)$ gates with $m\ge 2$ and CT mismatch greater than zero (non-commuting permutations) is not in $\mathcal{CH}_3$.

\begin{conjecture}
If a permutation $\pi$ is in $\mathcal{CH}_k$, then $\pi^\dagger$ is in $\mathcal{CH}_k$.
\end{conjecture}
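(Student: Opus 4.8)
The plan is to argue by induction on the level $k$. The base cases $k=1,2$ are immediate, since $\mathcal{CH}_1=\mathcal{P}_n$ and $\mathcal{CH}_2$ (the Clifford group) are \emph{groups} and hence closed under inverses. For the inductive step, assume the statement at all levels below $k$, let $\pi$ be a permutation in $\mathcal{CH}_k$, and aim to show $\pi^\dagger p\pi\in\mathcal{CH}_{k-1}$ for every Pauli string $p$. Writing $p=p_Xp_Z$ (a global phase is irrelevant) gives $\pi^\dagger p\pi=(\pi^\dagger p_X\pi)(\pi^\dagger p_Z\pi)$, where $\pi^\dagger p_X\pi$ is again a permutation matrix and $\pi^\dagger p_Z\pi$ is a $\{\pm1\}$-diagonal matrix (so it lies in $\mathrm{Diag}^n_1$). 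A preliminary step is to establish a level-refined version of Corollary~\ref{Xonly}, so that it is enough to control the permutation factor $\pi^\dagger p_X\pi$; the diagonal factor $\pi^\dagger p_Z\pi$ should be absorbed by a parallel sub-lemma (automatic when $k-1\ge n$, since $\mathrm{Diag}^n_1\subseteq\mathcal{CH}_n$, and otherwise argued analogously to the permutation factor).

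To control $\pi^\dagger p_X\pi$ I would exploit that $\Pi^n$ is a \emph{finite} group: $\pi$ has finite order $m$, so $\pi^\dagger=\pi^{m-1}$ and $\pi^\dagger p_X\pi=\pi^{m-1}p_X\pi^{-(m-1)}$. It therefore suffices to prove, for all $j\ge 0$ and all Pauli $X$ strings $p_X$, that $\pi^{j}p_X\pi^{-j}\in\mathcal{CH}_{k-1}$; an inner induction on $j$ handles $j=0$ (where $p_X\in\mathcal{CH}_1$) and $j=1$ (Lemma~\ref{lem:sauce}, using $\pi\in\mathcal{CH}_k$), and the step $j\to j+1$ requires conjugating the permutation $\tau:=\pi^{j}p_X\pi^{-j}\in\mathcal{CH}_{k-1}$ by $\pi\in\mathcal{CH}_k$ and staying in $\mathcal{CH}_{k-1}$. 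So the entire statement reduces to a ``generalized Lemma~\ref{lem:sauce}'': if $\pi$ is a permutation in $\mathcal{CH}_k$ and $\tau$ a permutation in $\mathcal{CH}_{k-1}$, then $\pi\tau\pi^\dagger\in\mathcal{CH}_{k-1}$.

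The natural attack on this reduced claim is again induction on $k$ combined with Corollary~\ref{Xonly}: $\pi\tau\pi^\dagger\in\mathcal{CH}_{k-1}$ iff $\pi\tau(\pi^\dagger p_X\pi)\tau^\dagger\pi^\dagger\in\mathcal{CH}_{k-2}$ for all $p_X$, and here the factor $\pi^\dagger p_X\pi$ is itself a backward conjugate of $\pi$ --- exactly the object we are trying to control. \textbf{This circularity is the main obstacle: the recursive definition of $\mathcal{CH}_k$ only constrains \emph{forward} conjugation $\pi(\cdot)\pi^\dagger$, while every statement about $\pi^\dagger$ unavoidably involves \emph{backward} conjugation $\pi^\dagger(\cdot)\pi$, and there is no a priori reason the two behave alike.} I would try to break it with a single \emph{simultaneous} induction on the level $\ell$ proving both (a) ``permutations in $\mathcal{CH}_\ell$ are closed under $\dagger$'' and (b) ``conjugation by a $\mathcal{CH}_\ell$-permutation maps $\mathcal{CH}_{\ell-1}$-permutations into $\mathcal{CH}_{\ell-1}$'', arranged so that (a) at level $\ell-1$ supplies precisely the control on backward conjugation needed to push (b) through at level $\ell$; the delicate point is checking that the logical dependencies across levels are well-founded.

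Two fallback observations. First, at $k=3$ the conjecture immediately preceding this one --- that a permutation in $\mathcal{CH}_3$ is a circuit of commuting Toffoli gates up to Clifford permutations --- already implies the result, since reversing a commuting network leaves it invariant and any gate of the form (Clifford)$\cdot$(commuting $C^m(X)$'s)$\cdot$(Clifford) is visibly closed under inverse; a level-$k$ analogue of that structural classification would likewise settle the general case via Theorem~\ref{MM0}. Second, one need not route everything through the $X$-only reduction: if $\pi^\dagger$ can first be shown to lie in $\mathcal{CH}$ at \emph{some} level (a weaker claim, plausibly reachable by the chain argument in the proof of Theorem~\ref{thm:noPi}), one is then left only to pin down the level, e.g.\ by a descent argument as in the proof of Corollary~\ref{cor:pi_k}. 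I expect the real work to be the bookkeeping in the simultaneous induction --- equivalently, establishing enough of the structure of the permutations that lie in $\mathcal{CH}$.
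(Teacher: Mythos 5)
First, be aware that the paper offers no proof of this statement: it appears there as a conjecture and is explicitly left open, so there is no argument of the author's to match yours against. Your write-up is, by your own admission, a plan rather than a proof; it correctly isolates the real difficulty --- the recursive definition of $\mathcal{CH}_k$ constrains only forward conjugation $\pi(\cdot)\pi^\dagger$ of Paulis, while any statement about $\pi^\dagger$ forces you to control backward conjugation --- but it does not close the gap, and one of its load-bearing steps reduces the conjecture to a statement that is false.

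Concretely, the ``generalized Lemma~\ref{lem:sauce}'' to which you reduce everything --- if $\pi$ is a permutation in $\mathcal{CH}_k$ and $\tau$ a permutation in $\mathcal{CH}_{k-1}$, then $\pi\tau\pi^\dagger\in\mathcal{CH}_{k-1}$ --- fails already at $k=3$. Take $\pi=\Lambda^2_{12,3}(X)$ (Toffoli, in $\mathcal{CH}_3$) and $\tau=\mathrm{SWAP}_{1,3}$ (Clifford, hence in $\mathcal{CH}_2$). As a permutation of bit strings, $\pi\tau\pi^\dagger$ sends $(a,b,c)\mapsto(c\oplus ab,\,b,\,a\oplus ab\oplus bc)$, which is quadratic rather than affine; equivalently, it conjugates $Z_1$ to $Z_3\cdot CZ_{1,2}$, so it is not Clifford and therefore not in $\mathcal{CH}_2$. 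Lemma~\ref{lem:sauce} is special to Pauli inputs and does not extend to arbitrary lower-level permutations, so your inner induction on $j$ would have to exploit the very particular form $\tau=\pi^{j}p_X\pi^{-j}$; as stated, the reduction replaces an open problem with a false one. Two further soft spots: (i) the ``level-refined Corollary~\ref{Xonly}'' you take as a preliminary is itself open --- the paper proves only that $\pi p_Z\pi^\dagger$ lands in $\mathcal{CH}$ at \emph{some} level, and its own circuit identity (Toffoli and $T$, both in $\mathcal{CH}_3$, with product in $\mathcal{CH}_4$) shows that a permutation and a diagonal gate each at a given level can have a product at a strictly higher level, so controlling the two factors of $\pi^\dagger p\pi$ separately at level $k-1$ would not by itself suffice; (ii) your first fallback assumes the preceding, equally unproven, conjecture on commuting-Toffoli normal forms, and your second fallback needs the weaker claim that $\pi^\dagger\in\mathcal{CH}$ at some level, which Theorem~\ref{thm:noPi} does not supply (it concerns products $\pi d$ with $\pi\notin\mathcal{CH}$, not inverses). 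The simultaneous-induction skeleton may well be the right shape for an eventual proof, but the conjugation statement you would feed into it must be formulated far more narrowly than the general claim above, or it will simply be false.
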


We have been careful to use the phrase ``if a permutation can be written as a circuit..." We warn the reader that determining if a permutation can be written in this manner is likely to be difficult, at least in general. Consider the fact that permutations are universal for reversible classical computation, and that the problem of determining whether or not a given reversible circuit is equivalent to the identity is known to be NP-Hard. Given this it is highly unlikely that an efficient algorithm exists to determine the level in $\mathcal{CH}$ of a general permutation. 

\section{Permutation gates with support on $\mathcal{C}$ qubits}\label{PermsOnC}

In this section we look at adding permutation gates which have support on both the Clifford and non-Clifford qubits and the additional restrictions that arise in these cases. These extensions are only relevant for groups such as $\langle \Pi^n, \mathcal{D}_l^{n-m} \times C^m \rangle$ which have a non-Clifford and an (unconstrained) Clifford component. We write all gates in this section in terms of permutations, diagonal gates and Clifford gates, but note that the restrictions below are valid for other groups Clifford-isomorphic to these groups. We also assume that permutations with support on both types of qubits are compatible with arbitrary diagonal gates in $\mathcal{CH}$ on the non-Clifford qubits and arbitrary Clifford gates on the Clifford qubits. This means that some groups with restricted diagonal gates and/or restricted Clifford gates may exist in $\mathcal{CH}$ with a larger set of permutation gates between the Clifford and non-Clifford qubits. Though we have evidence that this does not occur. 

First, note that if all elements in $\Pi^n$ have support exclusively on the $n-m$ qubits with $\mathcal{D}_l^{n-m}$ gates (hereafter referred to as non-Clifford qubits), then these two groups never `interact' and no additional restrictions are needed. Additionally, if $\Pi^n$ has elements that are Clifford permutations which have support exclusively on the $m$ Clifford qubits, no additional restrictions are needed. If either (or both) of these cases occur, we can fully realize the group (or any subgroup of) $(\Pi^{n-m} \ltimes \mathcal{D}_l^{n-m}) \times C^m$ in $\mathcal{CH}$. Since Clifford permutations are part of the Clifford group, they are not written as separate permutations.  

We can immediately rule out a large class of permutations. Permutation gates with target (and potentially, control(s)) on the non-Clifford qubits and with at least one control on the Clifford qubits, will cause the number of non-Clifford qubits to increase. This is true even when the permutation is a $CNOT$ gate as shown below:

\begin{equation}
\begin{array}{ccccc}
\Qcircuit @C=0.5em @R=1.8em {
&\qw & \targ & \gate{T} & \targ & \qw \\
&\qw & \ctrl{-1} & \qw & \ctrl{-1} & \qw \\
}
\end{array}
=
\begin{array}{ccccc}
\Qcircuit @C=0.5em @R=1.3em {
&\qw & \gate{T} & \ctrl{1} & \qw \\
&\qw & \gate{T} & \gate{S^\dagger} & \qw \\
}
\end{array}.
\end{equation}
Here the top qubit is a non-Clifford qubit and the bottom qubit is (was) a Clifford qubit. Even if the non-Clifford diagonal gates are severely restricted to some subgroup such that no non-Clifford gates can `propagate down', Clifford gates such as $\sqrt{X}\triangleq HSH$ can still `propagate up' to the non-Clifford gates. $\sqrt{X}$ is neither a permutation nor a diagonal non-Clifford gate and would create (via multiplication of elements in $G$) an element which is not generalized semi-Clifford. We conclude that permutation gates with any target on the non-Clifford qubits and at least one control on the Clifford qubits are not permitted, given our assumption of arbitrary diagonal gates on the non-Clifford qubits\footnote{As we see from the circuit diagram arbitrary diagonal gates are not necessary and simply a $T$ gate on each qubit is sufficient to restrict this type of permutation. Many other non-Clifford diagonal gate are also sufficient.}. Since this implies that the permutation gates that are (potentially) allowed only have targets on the Clifford qubits, we denote this group on $n$ qubits as $\Pi^n_{\downarrow}$.

Note that any diagonal element on the non-Clifford qubits must commute with any element $\pi\in\Pi^n_{\downarrow}$. That is for any $d\otimes I_m\in D_l^{n-m}\times C^m$ we have $\pi d = d \pi$. Also, for any permutation, $\pi \in \Pi^{n-m}$, with support only on the non-Clifford qubits and $\pi_{\downarrow} \in \Pi^n_{\downarrow}$, we have that $\pi \pi_{\downarrow} \pi^\dagger = \pi_{\downarrow}'$. In other words, elements of $\Pi^{n-m}$ are in the normalizer of $\Pi^n_{\downarrow}$. This is easy to show since the support of any element in $\Pi^n_{\downarrow}$ is equivalent to a diagonal gate on the $n-m$ non-Clifford qubits. We conclude that elements in $\Pi^n_{\downarrow}$ place no additional constraints on elements in $\Pi^{n-m} \ltimes D_l^{n-m}$.

Some elements of $\Pi^n_{\downarrow}$, however, must be restricted when acting on Clifford qubits. We give the following cases which require restrictions on group(s) $\Pi^n_{\downarrow}$ and/or $C^m$:

(1) A permutation gate, $C^t(X) \in \Pi^n_{\downarrow}$, with  $t\ge2$ (a non-Clifford
permutation) which has $supp(C^t(X))=t+1\ge 3$ on the Clifford qubits. Then, since the Clifford group is a maximal finite group and this new gate is non-Clifford, the full 3-qubit Clifford group and this element must generate a infinite group which clearly has elements not in $\mathcal{CH}$. Additional restrictions on one (or both) of the groups must be made. 

(2) A gate, $C^t(X) \in \Pi^n_{\downarrow}$, for $t\ge2$ (a non-Clifford
permutation) has $supp(C^t(X))=t+1\ge 2$ (target and at least one control) on the Clifford qubits. Then, since we assumed that any element of the 2-qubit Clifford group is present on these qubits, we can use products of Hadamards on the Clifford qubits with $C^t(X)$ to implement a permutation not in $\mathcal{CH}$. This group is still finite, but it contains permutations not in $\mathcal{CH}$. See circuit below\footnote{Even if the Clifford group is restricted to $C^1 \times C^1$, a permutation not in $\mathcal{CH}$ can still be implemented (as shown in the circuit). Also, if a subgroup of the 2-qubit Cliffords is used which does not contain any $H$ (Hadamard) gates, gates such as $SWAP$ can also be used to implement permutations not in $\mathcal{CH}$.}:

\begin{equation}\label{No2QCliff}
\begin{tabular}{c}
\Qcircuit @C=0.5em @R=1.3em {
&\qw & \ctrl{1} & \qw & \qw & \ctrl{1} & \qw & \qw \\
&\qw & \ctrl{1} &\qw & \gate{H} & \ctrl{1} & \gate{H} & \qw \\
&\qw & \targ & \qw & \gate{H} & \targ & \gate{H} & \qw 
}
\end{tabular}
=
\begin{tabular}{c}
\Qcircuit @C=0.5em @R=1.7em {
&\qw & \ctrl{1} & \ctrl{1} & \qw \\
&\qw & \ctrl{1} & \targ & \qw \\
&\qw & \targ & \ctrl{-1} & \qw 
}
\end{tabular}
\notin \mathcal{CH} 
\end{equation}
This also requires that one (or both) of the groups is subject to additional restrictions. Since we have made the choice to keep the full Clifford group we restrict the group $\in \Pi^n_{\downarrow}$. If we restrict each element of $\Pi^n_{\downarrow}$ to have support on only one qubit among the Clifford qubits (it must be a target by the discussion above), then we find that any element of this subgroup of $\Pi^n_{\downarrow}$ can be combined with $(\Pi^{n-m} \ltimes \mathcal{D}_l^{n-m}) \times C^m$ without any (additional) constraints. We will denote this subgroup of $\Pi^n_{\downarrow}$ with support equal to one on the Clifford qubits as $\Pi^n_{\downarrow,1}$. Note that elements of $\Pi^n_{\downarrow,1}$ can have support on different (or the same, for that matter) Clifford qubits; however each element has support on only one Clifford qubit. It is easy to show that $\Pi^n_{\downarrow,1}$ is abelian.

Now we can give a (slightly) larger classification of groups in $\mathcal{CH}$. If we demand that permutations in $\Pi^n$ are compatible with arbitrary diagonal gates in $\mathcal{CH}$ on the non-Clifford qubits and arbitrary Clifford gates on the Clifford qubits, then the following are necessary and sufficient conditions. 

Groups of semi-Clifford elements in $\mathcal{CH}$ are Clifford-isomorphic to a subgroup of the following:
\begin{equation*}\label{struct1FULL}
    C_{\Pi}^n \ltimes \mathcal{D}_l^n \mbox{ or } \Pi^n_{C \downarrow,1}( (C_{\Pi}^{n-1} \ltimes \mathcal{D}_l^{n-1}) \times C^1)  \mbox{ or } ... \mbox{ or } \Pi^n_{C \downarrow,1}((C_{\Pi}^1 \ltimes \mathcal{D}_l^{1}) \times C^{n-1}) \mbox{ or } C^n.
\end{equation*}

Here, $\Pi^n_{C \downarrow,1}$ denotes a Clifford element in $\Pi^n_{\downarrow,1}$. These elements are all $CNOT$ gates.

Groups of generalized semi-Clifford elements in $\mathcal{CH}$ are Clifford-isomorphic to a subgroup of the following:
\begin{equation*}\label{struct2FULL}
    \tilde{\Pi}^n \ltimes \mathcal{D}_l^n \mbox{ or } \Pi^n_{\downarrow,1}((\tilde{\Pi}^{n-1} \ltimes \mathcal{D}_l^{n-1}) \times C^1)  \mbox{ or } ... \mbox{ or } \Pi^n_{\downarrow,1}((\tilde{\Pi}^1 \ltimes \mathcal{D}_l^{1}) \times C^{n-1}) \mbox{ or } C^n.
\end{equation*}

We have looked at relaxing the full Clifford group requirement in Equ.~\ref{No2QCliff} to see when a Toffoli gate with target and one control on the (now relaxed) Clifford qubits and at least one control on the non-Clifford gates can be performed. In other words we want to know what Clifford gates can appear in $C_?$ (in the circuit below, see Equ.~\ref{C?}) such that the group generated by these gates and the Toffoli gate are in $\mathcal{CH}$. 

\begin{equation}\label{C?}
\Qcircuit @C=0.5em @R=1.3em {
& \ctrl{1} & \qw & \qw  & \qw \\
& \ctrl{1} & \qw & \multigate{1}{C_{?}} & \qw \\
& \targ    & \qw & \ghost{\mathcal{F}} & \qw \\
}
\end{equation}

In this case we have strong evidence that the $C_?$ must be of the form
\begin{equation*}
\begin{array}{ccc}
\Qcircuit @C=0.5em @R=1.3em {
& \multigate{1}{C_{?}} & \qw \\
& \ghost{\mathcal{F}} & \qw \\
}
\end{array}
= 
\begin{array}{ccc}
\Qcircuit @C=0.5em @R=1.3em {
& \ctrl{1} & \gate{C_{\Pi}} & \qw \\
& \targ & \gate{C^1} & \qw \\
}
\end{array},
\end{equation*}
but this is a subgroup of the $\Pi^2_{\downarrow,1}(C_{\Pi}^1 \ltimes D_l^1)\times C^1)$ where $D_l^1$ is reduced to the trivial subgroup. This is already contained in our classification above. For this classification to be complete we need to show that any non-Clifford permutations acting on the Clifford qubits always restricts the Clifford gates in this manner. (1) and (2) above show that the Clifford gates must be restricted, but we have not proven that they must be restricted to the groups in our classification above. We leave this as an open problem.   

\section[Section Title. Section Subtitle]{Diagonal Gate Groups in $\mathcal{CH}$\\ {\it In which three ways of looking at the same thing are presented.}} \label{sec:diag}

\begin{figure}[!h]
     \centering
     \begin{subfigure}[b]{0.45\textwidth}
         \centering
         \includegraphics[scale=0.75]{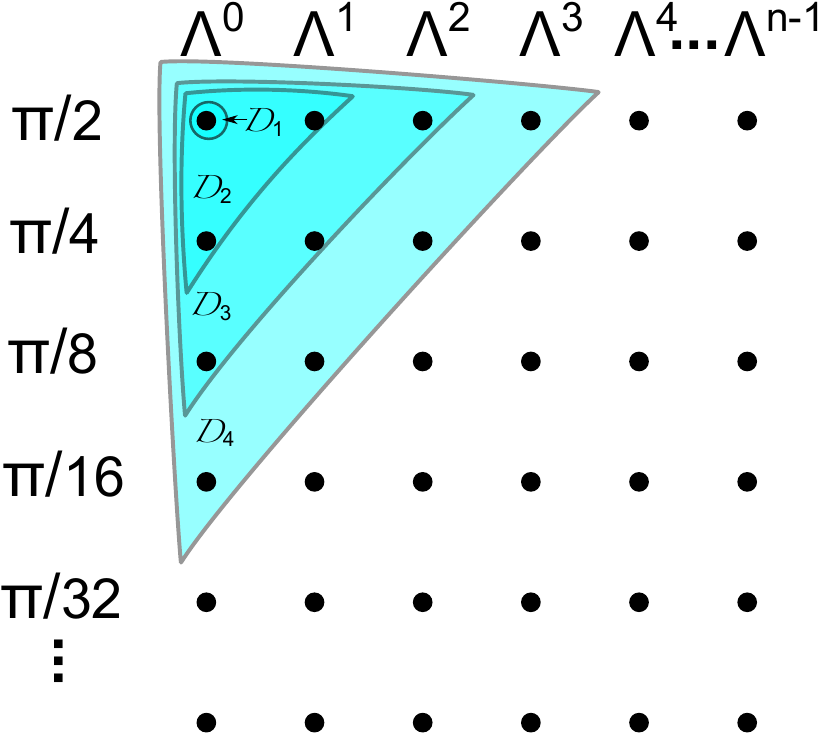}
         \caption{}
         \label{fig:y equals x}
     \end{subfigure}
     \hfill
     \begin{subfigure}[b]{0.45\textwidth}
         \centering
         \includegraphics[scale=0.75]{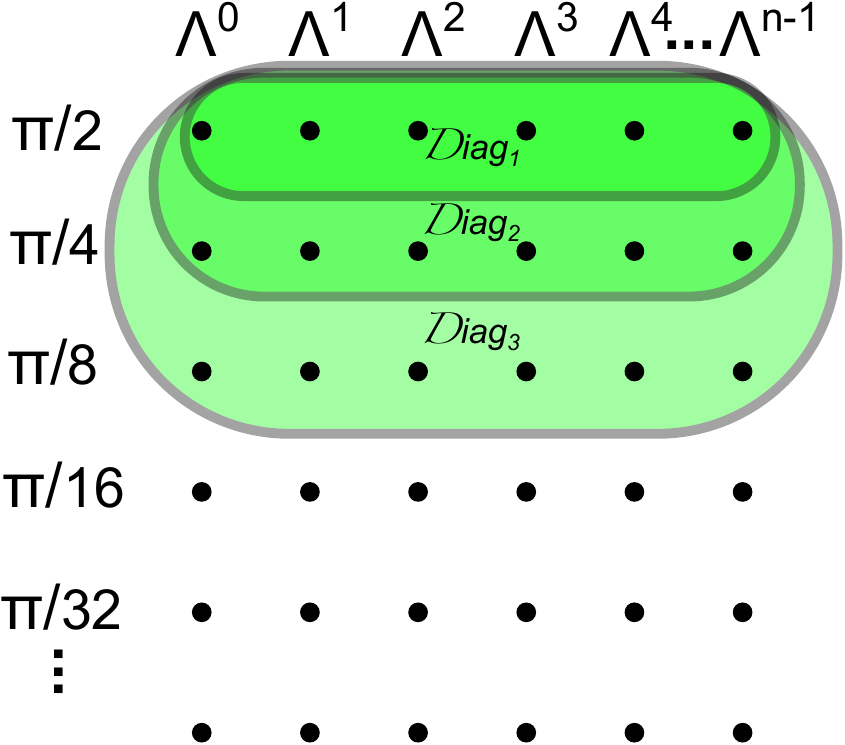}
         \caption{}
         \label{fig:three sin x}
     \end{subfigure}
        \caption{Comparison of diagonal gate groups in $\mathcal{CH}$. Points in figures correspond to controlled-$Z$ rotations by angle $\pi/2^k$. The number of controls is labeled across the top and the rotation angle is on the left. The groups are generated by these multi-controlled rotations. The figure on the left shows the family of groups, $\mathcal{D}_k$, which are the diagonal gates at level $k$ in $\mathcal{CH}$. The figure on the right shows the groups, $Diag_k$, of matrices with entries in the $2^k$th root of unity. Here the gates act on up to $n$ qubits and the generators are on all appropriate sets of qubits. For example, the $\Lambda^2(Z\left[\frac{\pi}{2}\right])$ ($CCZ$ gate) generators are on all $\binom{n}{3}$ qubits. Each group can be generated by the gates (on all appropriate sets of qubits) which are the `lowest' points in each column in the shaded region corresponding to that group.}
        \label{fig:diagGroups}
\end{figure}

Note this section is not necessary for any of the results contained in this paper. I found this way of looking at the diagonal gates in $\mathcal{CH}$ useful and decided to share it.

In Cui \etal \cite{Cui2017DiagonalHierarchy} they classified all the diagonal gates in the Clifford Hierarchy for qubits and for qudits as well. We find that it is helpful to view these gates via three complementary pictures. The first is the $Z$ rotation picture which readily shows what level in $\mathcal{CH}$ an element or group of elements is in. In the $Z$ rotation picture, we write a diagonal element, $d$, as a product of $Z$ rotations as follows:

\begin{equation}\label{diagForm}
    d=\prod_j\exp \left( i\frac{\alpha_j \pi}{2^{k_j}}Z_j \right).
\end{equation}
Here $\alpha_j$ is an integer, $k_j$ is a positive integer, and $Z_j$ is a Pauli $Z$ string. A diagonal gate is in $\mathcal{CH}$ iff it can be written this way and $d$ is in the $k=\max k_j$ level in the Hierarchy. In Cui \etal, they showed that all $n$-qubit $Z$ rotations by a fixed angle $\pi/2^k$ generate the group $\mathcal{D}_k^n$. This is the group of diagonal gates on $n$ qubits in $\mathcal{CH}_k$. Note that all $Z$ rotations by a fixed angle do not necessarily correspond to distinct elements in the group. The group properties are most easily seen in this picture, and we include some results pertaining to the group $\mathcal{D}_k^n$ here. 

\begin{lemma}
For any $d \in \mathcal{D}_k^n$, $d^2 \in \mathcal{D}_{k-1}^n$. 
\end{lemma}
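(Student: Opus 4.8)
The plan is to work entirely in the $Z$-rotation picture of Eqn.~\ref{diagForm}. First I would recall (from Cui \etal~\cite{Cui2017DiagonalHierarchy}, as restated earlier in this excerpt) that $\mathcal{D}_k^n$ is generated by the rotations $Z_j[\tfrac{\pi}{2^k}]=\exp\left(i\tfrac{\pi}{2^k}Z_j\right)$ ranging over all non-identity Pauli $Z$ strings $Z_j$. Since every generator is diagonal, the generators all commute, so any $d\in\mathcal{D}_k^n$ is a product of integer powers of them; collecting the powers belonging to each distinct Pauli $Z$ string gives the normal form
\begin{equation*}
    d=\prod_j \exp\left( i\frac{\alpha_j \pi}{2^{k}}Z_j \right),\qquad \alpha_j\in\mathbb{Z},
\end{equation*}
where the product runs over (a subset of) the non-identity Pauli $Z$ strings.

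Next I would simply square this expression. Because all the factors commute,
\begin{equation*}
    d^2=\prod_j \exp\left( i\frac{2\alpha_j \pi}{2^{k}}Z_j \right)=\prod_j \exp\left( i\frac{\alpha_j \pi}{2^{k-1}}Z_j \right)=\prod_j (Z_j[\tfrac{\pi}{2^{k-1}}])^{\alpha_j}.
\end{equation*}
Each factor on the right-hand side is an integer power of a generator of $\mathcal{D}_{k-1}^n$, so the product lies in $\mathcal{D}_{k-1}^n$, which is exactly the claim. For the degenerate case $k=1$ one has $d^2=\pm\mathbb{I}$, the identity gate, so the statement still holds with the convention $\mathcal{D}_0^n=\{\mathbb{I}\}$; in the body of the paper the lemma is only invoked for $k\ge 2$.

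There is no genuine obstacle: the entire content is the observation that doubling the exponent $\alpha_j/2^{k}$ to $2\alpha_j/2^{k}=\alpha_j/2^{k-1}$ removes one factor of $2$ from the denominator. The only point that deserves a sentence of justification is the normal form in the first paragraph, namely that commutativity of the generators lets us combine all powers of $Z_j[\tfrac{\pi}{2^k}]$ for a fixed $Z_j$ into a single rotation; once that is granted the computation is immediate. One could alternatively route the argument through Lemma~\ref{conjp} and the recursive definition~\eqref{CHdef}, but the $Z$-rotation picture makes the bookkeeping trivial and is the natural setting given Eqn.~\ref{diagForm}.
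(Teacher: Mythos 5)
Your proposal is correct and is precisely the argument the paper intends: it proves the lemma by "looking at the product of $Z$ rotations in Equ.~\ref{diagForm}," i.e.\ squaring $\prod_j \exp\left(i\frac{\alpha_j\pi}{2^k}Z_j\right)$ and observing that the denominator drops from $2^k$ to $2^{k-1}$. You have merely written out explicitly the one-line computation the paper leaves to the reader, so there is nothing to add.
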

This can be easily seen by looking at the product of $Z$ rotations in Equ.~\ref{diagForm}. A similar observation was made in Campbell and Howard\cite{Campbell2017UnifiedCost}.

\begin{lemma}
For any diagonal gates $d'\notin \mathcal{CH}$ and $d\in\mathcal{CH}$, $d'd \notin \mathcal{CH}$. 
\end{lemma}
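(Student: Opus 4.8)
The plan is to argue by contradiction, leaning entirely on the characterization of diagonal gates in $\mathcal{CH}$ recalled in Lemma~\ref{l1} and Equ.~\ref{diagForm} (equivalently, on the fact that the diagonal gates at each level form the group $\mathcal{D}_k^n$). Suppose, for contradiction, that $d'd \in \mathcal{CH}$. The product of two diagonal matrices is diagonal, so $d'd$ is a diagonal gate in $\mathcal{CH}$; hence $d'd \in \mathcal{D}_{k'}^n$ for some finite level $k'$, and by hypothesis $d \in \mathcal{D}_k^n$ for some finite level $k$.

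The key step is then to run the group law backwards. Since each $\mathcal{D}_k^n$ is a group (Cui \etal~\cite{Cui2017DiagonalHierarchy}), $d^{-1}=d^\dagger \in \mathcal{D}_k^n \subseteq \mathcal{CH}$, and by the nesting $\mathcal{D}_1^n \subset \mathcal{D}_2^n \subset \cdots$ both $d'd$ and $d^{-1}$ lie in $\mathcal{D}_{\max(k,k')}^n$. Therefore $d' = (d'd)\,d^{-1} \in \mathcal{D}_{\max(k,k')}^n \subseteq \mathcal{CH}$, contradicting $d'\notin \mathcal{CH}$; so $d'd \notin \mathcal{CH}$. If one prefers not to invoke the group structure as a black box, the same conclusion follows by writing $d$ and $d'd$ in the $Z$-rotation form of Equ.~\ref{diagForm}: because all Pauli $Z$ strings commute, $d' = (d'd)\,d^\dagger$ is again a product of commuting $Z$-rotations, and combining the two contributions about each common axis and reducing the resulting dyadic fractions leaves each angle of the form $\beta_j\pi/2^{m_j}$ with $\beta_j$ an integer and $m_j$ a positive integer — precisely the form that certifies membership in $\mathcal{CH}$, again a contradiction.

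There is no real obstacle here; the argument is a two-line application of closure of the $\mathcal{CH}$ diagonal gates under multiplication and inversion. The only point deserving a word of care is that the product of the two $\mathcal{CH}$ diagonal gates stays within a single finite level, so that one may legitimately conclude $d'\in\mathcal{CH}$ — and this is immediate from the chain $\mathcal{D}_k^n \subset \mathcal{D}_{k+1}^n$. (The statement is the specialization of Theorem~\ref{thm:5} to a trivial permutation part, now phrased purely in terms of diagonal gates.)
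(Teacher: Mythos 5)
Your proposal is correct and follows essentially the same route as the paper: assume $d'd\in\mathcal{CH}$, write $d'=(d'd)\,d^\dagger$, and invoke closure of the diagonal gate group $\mathcal{D}^n_k$ (at a large enough common level) to conclude $d'\in\mathcal{CH}$, a contradiction. The extra care you take about placing both factors in $\mathcal{D}_{\max(k,k')}^n$ is a reasonable elaboration of the same argument, not a different one.
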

To see this, assume that $d'd\in\mathcal{CH}$. Then, $d'd=d_2 \implies d'=d_2 d^\dagger$. And since the diagonal gates on the right-hand side are in $\mathcal{CH}$ via group closure of $\mathcal{D}^n_k$, we have that $d' \in \mathcal{CH}$, a contradiction. 

\begin{lemma}
For any $d\notin \mathcal{CH}$, we have that $\pi d \pi^\dagger \notin \mathcal{CH}$ for any permutation $\pi$. 
\end{lemma}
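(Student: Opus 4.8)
The plan is to argue by contradiction, exploiting the fact that conjugating a diagonal matrix by a permutation matrix merely permutes its diagonal entries, and that by the Cui \etal~classification membership of a diagonal gate in $\mathcal{CH}$ is controlled entirely by those entries.

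First I would record the elementary observation that, since $d$ is diagonal and $\pi\in\Pi^n$ is a permutation matrix, $\pi d\pi^\dagger$ is again a diagonal matrix, with the same multiset of diagonal entries as $d$ (equivalently, the same set of entry-ratios $d_{ii}/d_{jj}$). Then, assuming for contradiction that $\pi d\pi^\dagger\in\mathcal{CH}$, I would invoke the classification of diagonal gates in the Clifford Hierarchy (Cui \etal~\cite{Cui2017DiagonalHierarchy}; cf.\ Lemma~\ref{l1} and Eqn.~\ref{diagForm}): a diagonal gate lies in $\mathcal{CH}$ iff it can be written as $\prod_j\exp\!\big(i\frac{\alpha_j\pi}{2^{k_j}}Z_j\big)$, i.e.\ iff, after fixing the global phase, all of its entries are $2^l$-th roots of unity for some finite $l$. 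Hence $\pi d\pi^\dagger$ lies in $\mathrm{Diag}^n_l$ for some $l$.

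Finally I would apply Lemma~\ref{lem:diagConj} (or simply the fact that $\Pi^n$ normalizes $\mathrm{Diag}^n_l$): since $\pi^\dagger$ is itself a permutation matrix in $\Pi^n$, conjugating $\pi d\pi^\dagger\in\mathrm{Diag}^n_l$ by $\pi^\dagger$ yields a gate that is again in $\mathrm{Diag}^n_l\subseteq\mathcal{CH}$. But $\pi^\dagger(\pi d\pi^\dagger)(\pi^\dagger)^{-1}=d$, so $d\in\mathcal{CH}$, contradicting the hypothesis. (The same conclusion can be reached directly from the first paragraph: the set of entry-ratios is a permutation invariant, and by the classification a diagonal gate is in $\mathcal{CH}$ iff every such ratio is a $2^l$-th root of unity, so $d$ and $\pi d\pi^\dagger$ are either both in $\mathcal{CH}$ or both not.)

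I do not expect a genuine obstacle here; the only point requiring care is that the argument uses the \emph{converse} half of the Cui \etal~classification---that a diagonal gate in $\mathcal{CH}$ must have $2^l$-th-root-of-unity entries (equivalently, lies in $\mathrm{Diag}^n_l$)---rather than just Lemma~\ref{l1}. It is also worth noting explicitly that no symmetry or inverse-closure assumption on $\pi$ is needed, since $\pi^\dagger=\pi^{-1}$ is automatically a permutation matrix in $\Pi^n$ even though it need not belong to $\mathcal{CH}$.
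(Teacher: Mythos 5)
Your proposal is correct and follows essentially the same route as the paper: both arguments rest on the facts that conjugation by a permutation matrix preserves the (multiset of) diagonal entries of $d$ and that, by the Cui \etal~classification, a diagonal gate is in $\mathcal{CH}$ iff those entries are $2^l$-th roots of unity. Your extra care about gauge-fixing via entry ratios and the explicit conjugate-back step are harmless refinements of the paper's more terse eigenvalue argument, not a different approach.
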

For any diagonal gate, $d$; conjugation by permutations does not change its eigenvalues. A diagonal gate is in $\mathcal{CH}$ (see Equ.~\ref{diagForm}) iff it has eigenvalues which are $2^k$ roots of unity. A diagonal gate $d\notin \mathcal{CH}$ must therefore have at least one eigenvalue which is not of this form. Since conjugation by permutations does not change this eigenvalue and maps diagonal gates to diagonal gates, we conclude that $\pi d \pi^\dagger \notin \mathcal{CH}$.

\begin{lemma}
For any $d\notin \mathcal{CH}$, we have that $d^2 \notin \mathcal{CH}$. 
\end{lemma}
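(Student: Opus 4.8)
The plan is to reduce the statement to the classification of diagonal gates in $\mathcal{CH}$ that is already being used throughout this section: a diagonal gate lies in $\mathcal{CH}$ if and only if, after fixing the global phase so that its $(1,1)$ entry equals $1$, every diagonal entry is a $2^k$-th root of unity for some common $k$ (equivalently, a root of unity whose order is a power of $2$). This is the characterization of the diagonal part of $\mathcal{CH}$ due to Cui \etal~\cite{Cui2017DiagonalHierarchy}, and it is exactly what is invoked in the proof of the preceding lemma (``a diagonal gate $d\notin\mathcal{CH}$ must have at least one eigenvalue which is not of this form''). Everything beyond that is elementary arithmetic of roots of unity; this lemma may also be viewed as the ``outside $\mathcal{CH}$'' companion of the earlier observation that $d\in\mathcal{D}^n_k$ implies $d^2\in\mathcal{D}^n_{k-1}$.

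The argument I would give proceeds by contraposition. Fix a representative of $d$ with $d_{11}=1$, so $d = \mathrm{diag}(\zeta_1,\dots,\zeta_{2^n})$ with $\zeta_1=1$; then $d^2 = \mathrm{diag}(\zeta_1^2,\dots,\zeta_{2^n}^2)$ still has $(1,1)$ entry $1$ and so is the corresponding gauge-fixed representative of $d^2$. Suppose $d^2\in\mathcal{CH}$. By the classification there is a $k$ with each $\zeta_j^2$ a $2^k$-th root of unity, i.e. $\zeta_j^{2^{k+1}} = (\zeta_j^2)^{2^k} = 1$ for every $j$. Hence every $\zeta_j$ is a $2^{k+1}$-th root of unity, and the classification applied in the other direction gives $d\in\mathcal{CH}$. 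Contrapositively, $d\notin\mathcal{CH}$ forces $d^2\notin\mathcal{CH}$, which is the claim.

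I do not expect a substantial obstacle; the only point needing care is the global-phase bookkeeping, since the Cui \etal~classification is naturally a statement about gates (unitaries modulo $U(1)$). This is harmless here because squaring is compatible with the normalization $\zeta_1=1$ (we still have $\zeta_1^2=1$), so no spurious global phase is introduced in passing from $d$ to $d^2$; if one prefers, the same argument runs gauge-invariantly in terms of the ratios $\zeta_i/\zeta_j$, using once more that a finite-order element whose square has $2$-power order itself has $2$-power order. With that settled the lemma follows with no new idea beyond the classification of diagonal gates in $\mathcal{CH}$.
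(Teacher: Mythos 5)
Your proof is correct. It differs from the paper's in two small but real ways: the paper argues directly in the $Z$-rotation picture --- if $d\notin\mathcal{CH}$ it must contain a rotation by an angle $\pi/r$ with $r$ not a power of $2$, and squaring sends $\pi/r$ to $2\pi/r$, which still has a nontrivial odd part in its denominator --- whereas you argue contrapositively in the eigenvalue (matrix) picture, using the fact that if $\zeta^2$ is a $2^k$-th root of unity then $\zeta$ is a $2^{k+1}$-th root of unity. Both rest on the same Cui~\etal~classification, which the paper itself restates in this section (``a diagonal gate is in $\mathcal{CH}$ iff it has eigenvalues which are $2^k$ roots of unity''), so neither imports anything new. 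Your version is slightly more watertight: the paper's direct argument implicitly assumes the offending rotation angle is a rational multiple of $\pi$ (it writes $r=2^kp$), while your contrapositive handles arbitrary diagonal entries, including irrational phases, without any case split. Your gauge-fixing remark is also a point the paper glosses over, and your handling of it (squaring preserves the normalization $\zeta_1=1$) is right.
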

If $d\notin \mathcal{CH}$, it must have a $Z$ rotation about some angle $\frac{\pi}{r}$ with $r \ne 2^k$ for any $k$. We will write this as $r=2^k p$ where $p\ne 2^k$. Then $d^2$ must have a $Z$ rotation about $\frac{2\pi}{r}$ and since $p$ does not divide 2, we have that $d^2 \notin \mathcal{CH}$. Note that this does not extend to $d^3$ since $\frac{\pi}{3}$ rotations cubed are proportional to Identity which is clearly in $\mathcal{CH}$.  

In the gate picture, we write a diagonal element, $d$, as a product of well-known diagonal gates. Any diagonal gate in $\mathcal{CH}$ on $n$ qubits can be written as some combination of gates:

\begin{equation*}
    \Lambda^{n-1}(Z^{1/2^{k_j}}), \Lambda^{n-2}(Z^{1/2^{k_j}}), \cdots , \Lambda^{1}(Z^{1/2^{k_j}}), Z^{1/2^{k_j}}.
\end{equation*}

With 
\begin{equation*}
Z^{1/2^{k_j}} \triangleq \left [ \begin{array}{lc} 1 & 0 \\ 0 & e^{i\pi/2^{k_j}}\\ \end{array} \right ].    
\end{equation*}

A gate $\Lambda^{m}(Z^{1/2^{l}})$ is in $\mathcal{CH}_{l+m}$. All gates on $n$ qubits with $l+m\le k$ form the group, $\mathcal{D}_k^n$. In the circuit picture all elements are distinct which allows us to count the number of elements in $\mathcal{D}_k^n$.

\begin{equation}
    |\mathcal{D}_k^n| = \prod_{j=0}^{\min (k-1,n-1)}(2^{k-j})^{\binom{n}{j+1}}.
\end{equation}

Finally, in the matrix picture, we have all $2^n \times 2^n$ diagonal matrices with entries in the $2^k$th root of unity. To fix the $U(1)$ gauge, we will always choose the upper-left entry in the matrix to be $1$. We refer to these groups as $Diag_k^n$. We have the following relation between groups: $\mathcal{D}_{k}^n \subseteq Diag_{k}^n \subseteq \mathcal{D}_{k+n}^n \subseteq Diag_{k+n}^n$. In other words, $\mathcal{D}_{k2}^n$ can always be chosen to be large enough to contain all elements of $Diag_{k1}^n$. Therefore, both $Diag^n$ and $\mathcal{D}^n$ contain all diagonal elements in $\mathcal{CH}$. The group $Diag^n_k$ has the property that it is preserved under conjugation by any permutation matrix. 

\begin{lemma}
For any $d \in Diag_k^n$, $d^2 \in Diag_{k-1}^n$. 
\end{lemma}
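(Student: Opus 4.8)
The plan is to prove this by a direct entrywise computation, exactly parallel to the earlier lemma that $d\in\mathcal{D}_k^n$ implies $d^2\in\mathcal{D}_{k-1}^n$, but now in the matrix picture. First I would write an arbitrary $d\in Diag_k^n$ as $d=\operatorname{diag}(\zeta_0,\zeta_1,\dots,\zeta_{2^n-1})$, where each diagonal entry $\zeta_m$ is a $2^k$-th root of unity; by our gauge-fixing convention $\zeta_0=1$. Hence each entry can be written as $\zeta_m=\exp(2\pi i\,a_m/2^k)$ for some integer $a_m$, with $a_0=0$.

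Next I would square. Since $d$ is diagonal, $d^2=\operatorname{diag}(\zeta_0^2,\dots,\zeta_{2^n-1}^2)$, and $\zeta_m^2=\exp(2\pi i\,a_m/2^{k-1})$, which is a $2^{k-1}$-th root of unity. The top-left entry remains $\zeta_0^2=1$, so the gauge is preserved; therefore $d^2$ is a $2^n\times 2^n$ diagonal matrix all of whose entries are $2^{k-1}$-th roots of unity with top-left entry $1$, i.e.\ $d^2\in Diag_{k-1}^n$. Equivalently, under the isomorphism $Diag_k^n\cong(\mathbb{Z}/2^k\mathbb{Z})^{2^n-1}$ sending $d$ to the tuple of exponents $(a_m)_{m\ge 1}$, squaring corresponds to multiplication by $2$, whose image lies in $2\mathbb{Z}/2^k\mathbb{Z}\cong\mathbb{Z}/2^{k-1}\mathbb{Z}$.

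There is essentially no obstacle here; the only point meriting a word of care is the base case $k=1$, where $Diag_0^n$ should be read as the trivial group $\{I\}$ (its allowed entries are $2^0=1$-st roots of unity, hence all equal to $1$). Since every $d\in Diag_1^n$ has entries $\pm 1$, we indeed get $d^2=I$, consistent with the statement.
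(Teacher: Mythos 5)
Your proof is correct. The paper's own justification is a one-line remark in the generator picture: $Diag_k^n$ is generated by commuting multi-controlled $Z$ rotations by $\pi/2^k$ (the points in Fig.~\ref{fig:diagGroups}), so the square of a product is the product of the squares, and each squared generator is a rotation by $\pi/2^{k-1}$, hence a generator of $Diag_{k-1}^n$. You instead argue in the matrix picture, squaring the diagonal entries directly: a $2^k$-th root of unity squares to a $2^{k-1}$-th root of unity and the gauge (top-left entry equal to $1$) is preserved. Since the paper defines $Diag_k^n$ precisely as the group of diagonal matrices with $2^k$-th root of unity entries, your argument is the more immediate of the two --- it needs neither a generating set nor the commutativity observation, and it makes the group-theoretic content transparent via the exponent map to $(\mathbb{Z}/2^k\mathbb{Z})^{2^n-1}$. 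The paper's version is convenient when one wants to track which generators survive squaring; yours is self-contained from the definition. Your aside on the degenerate case $k=1$, reading $Diag_0^n$ as the trivial group, is a sensible clarification the paper does not bother to make.
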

This follows by noting that all elements in Fig.~\ref{fig:diagGroups} commute and the square of any element in $Diag_k^n$ is an element in $Diag_{k-1}^n$. 

\section{Diagonal Gates in the Qubit Clifford Hierarchy}\label{shortProof}

In an effort to make this paper more self-contained, we provide a simple proof of the diagonal gates in the qubit Clifford Hierarchy. A more general proof classifying all diagonal gates in any prime qudit Clifford Hierarchy is provided in \cite{Cui2017DiagonalHierarchy}. 

First, note that any $2^n \times 2^n$ unitary diagonal matrix, $d$, can be written as a product of Pauli $Z$ rotations 

\begin{equation}\label{equ:genDiag}
    d=\prod_j\exp \left( i\theta_j Z_j \right)
\end{equation}
with $Z_j$ running over all distinct Pauli $Z$ strings and $\theta_j \in [0,2\pi)$. Each Pauli string is unique and there are $2^n$ such strings (one for each diagonal element in a $2^n\times 2^n$ matrix). Note that the `all Identity' string is included in this count, even though it only applies a global (trivial) phase. 

It is well known that an element is in the Clifford group if and only if it can be written as a product of $\pi/4$ Pauli rotations. We can, therefore, write an arbitrary diagonal Clifford element as
\begin{equation*}
    C_d=\prod_j\exp \left( i\frac{\alpha_j\pi}{4} Z_j \right)
\end{equation*}
with $\alpha_j = \{0,\pm 1,\pm 2, \pm 3, \pm 4\}$. Here the $Z_j$ are distinct Pauli $Z$ strings. Note that a product of distinct $\pi/4$ Pauli $Z$ rotations can equal the identity. This is a nuance to consider when counting distinct gates but will not pose a problem here. 

Now for a diagonal gate to be in the third level of $\mathcal{CH}$ it must, under conjugation, take all Pauli strings to a Clifford element. We can express this requirement as  

\begin{equation*}
    d P_i d^\dagger = C_d P_i \hspace{1ex}\forall P_i
\end{equation*}
since
\begin{equation*}
    \begin{cases}
      \exp \left( i\theta_j Z_j \right) P_1 \exp \left( -i\theta_j Z_j \right) = P_1, & \text{if}\ [P_1, Z_j]=0 \\
      \exp \left( i\theta_j Z_j \right) P_1 \exp \left( -i\theta_j Z_j \right) = \exp \left( i2\theta_j Z_j \right) P_1 , & \text{if}\ \{P_1, Z_j\}=0
    \end{cases}
\end{equation*}
for each $Z_j$ noting that $d$ is a product of $Z_j$ rotations. 

Finally, since each $Z_j$ anticommutes with some Pauli string (actually many), we have the following requirement:
\begin{equation*}
   \exp \left( i2\theta_j Z_j \right) = \exp \left( i\frac{\alpha_j \pi}{4} Z_j \right).
\end{equation*}
 We conclude that $\theta_j$ must be a multiple of $\pi/8$ for all $j$. Now that we have classified the diagonal $\mathcal{CH}_3$ gates, we can proceed to find the diagonal gates in the $4$th level of $\mathcal{CH}$ which, under conjugation, take all Pauli strings to a diagonal gate in $\mathcal{CH}_3$ times a Pauli string. A similar proof shows that the diagonal gates in $\mathcal{CH}_4$ must be products of diagonal rotations with $\theta_j \propto \pi/16$, and more generally, that a diagonal gate element in $\mathcal{CH}_k$ must be a product of diagonal rotations with $\theta_j \propto \pi/2^k$. We have sketched a proof outline here, but note that a straightforward proof by induction is required to make this proof outline rigorous.
 
 \section{Some properties of Generalized Symmetric Groups pertaining to this work}\label{sec:GSG}
 
 Generalized symmetric groups can be expressed as semi-direct products of the symmetric group on $N$ elements, $S_N$, and $N$ copies of a cyclic group $\mathbb{Z}_M$. $\mathbb{Z}_l^N$ has a unitary representation as $N\times N$ diagonal matrices with entries in $\mathbb{Z}_m$ expressed as complex $M$ roots of unity ($e^{i 2\pi/M}$), and $S_N$ has a standard representation as $N\times N$ permutation matrices. We denote these generalized symmetric groups as $S(M,N)$. For finite $M, N$, these groups are finite and have size $|S(M,N)|=M^NN!$.
 
 Here all the representations correspond to unitary gates on $n$ qubit which are sized $2^n \times 2^n$ matrices and $N=2^n$. Additionally, due to the constraints on diagonal gates in $\mathcal{CH}$, we always have $M=2^k$, expressing the fact that diagonal gates here must have entries which are some $2^k$th root of unity. These groups $S(2^k,2^n)$ are of order $(2^k)^{2^n}(2^n!)=2^{k2^n}(2^n!)$. 
 
 Elements of $S(2^k,2^n)$ can always be expressed as a product of a permutation, $\pi$, with a diagonal matrix, $d$. We can write this as $g=(\pi,d)=\pi d$. Then, it is easy to show that for any two elements we have $g_1 g_2 = (\pi_1,d_1)(\pi_2,d_2)= \pi_1d_1\pi_2d_2 = \pi_1\pi_2 (\pi_2^{-1} d_1 \pi_2 d_2) = (\pi_1\pi_2, \pi_2^{-1} d_1 \pi_2 d_2) = g_3$.
 
 We are interested in subgroups of $S(2^k,2^n)$ since these all correspond to groups in $\mathcal{CH}$. Generally finding all such subgroups seems difficult, but we note that for all subgroups of $S(2^k,2^n)$, the `permutation part' of each element (the $\pi$'s in each $g=(\pi, d)$) must form a subgroup of the permutation group. 
 
 We sidestep much of the difficultly of finding all such subgroups by placing constraints on generators, such that any set of generators satisfying the constraints listed in this paper will generate a valid subgroup of $S(2^k,2^n)$. Furthermore, since this is a finite group and, therefore, finitely generated, we can express all such subgroups this way. 
 
 By examination of the product of an element $g$ with itself we see that if $g=(\pi,d)$ and $\pi^m = I$, then $g^m$ is a purely diagonal entry. When these diagonal entries are non-trivial, we have (non-trivial) purely diagonal subgroups within the subgroup of $S(2^k,2^n)$.

\bibliographystyle{quantum}

\begin{thebibliography}{10}

\bibitem{Gottesman1999}
Daniel Gottesman and Isaac~L. Chuang.
\newblock ``Demonstrating the viability of universal quantum computation using
  teleportation and single-qubit operations''.
\newblock \href{https://dx.doi.org/10.1038/46503}{Nature {\bf 402},
  390--393}~(1999).

\bibitem{Bravyi:2005a}
Sergey Bravyi and Alexei Kitaev.
\newblock ``Universal quantum computation with ideal clifford gates and noisy
  ancillas''.
\newblock \href{https://dx.doi.org/10.1103/PhysRevA.71.022316}{Physical Review
  A {\bf 71}, 022316}~(2005).

\bibitem{Bravyi2012}
Sergey Bravyi and Jeongwan Haah.
\newblock ``Magic-state distillation with low overhead''.
\newblock \href{https://dx.doi.org/10.1103/PhysRevA.86.052329}{Physical Review
  A {\bf 86}, 052329}~(2012).

\bibitem{Meier:2012a}
Adam~M. Meier, Bryan Eastin, and Emanuel Knill.
\newblock ``Magic-state distillation with the four-qubit code''.
\newblock Quantum Info. Comput. {\bf 13}, 195–209~(2013).
\newblock  url:~\url{http://arxiv.org/abs/1204.4221}.

\bibitem{Howard:2017a}
Mark Howard and Earl Campbell.
\newblock ``Application of a resource theory for magic states to fault-tolerant
  quantum computing''.
\newblock \href{https://dx.doi.org/10.1103/PhysRevLett.118.090501}{Physical
  Review Letters {\bf 118}, 090501}~(2017).

\bibitem{OConnor:2018a}
Tomas Jochym-O’Connor, Aleksander Kubica, and Theodore~J. Yoder.
\newblock ``Disjointness of stabilizer codes and limitations on fault-tolerant
  logical gates''.
\newblock \href{https://dx.doi.org/10.1103/PhysRevX.8.021047}{Physical Review X
  {\bf 8}, 021047}~(2018).

\bibitem{Anderson:2016a}
Jonas~T. Anderson and Tomas Jochym-O'Connor.
\newblock ``Classification of transversal gates in qubit stabilizer codes''.
\newblock \href{https://dx.doi.org/10.26421/QIC16.9-10-3}{Quantum Information
  and Computation {\bf 16}, 771--802}~(2016).

\bibitem{Englbrecht2020}
Matthias Englbrecht and Barbara Kraus.
\newblock ``Symmetries and entanglement of stabilizer states''.
\newblock \href{https://dx.doi.org/10.1103/PhysRevA.101.062302}{Physical Review
  A {\bf 101}, 062302}~(2020).

\bibitem{Frembs2022}
Markus Frembs, Sam Roberts, Earl~T. Campbell, and Stephen~D. Bartlett.
\newblock ``Hierarchies of resources for measurement-based quantum
  computation''~(2022).

\bibitem{Zeng2008}
Bei Zeng, Xie Chen, and Isaac~L. Chuang.
\newblock ``Semi-clifford operations, structure of $c_k$ hierarchy, and gate
  complexity for fault-tolerant quantum computation''.
\newblock \href{https://dx.doi.org/10.1103/PhysRevA.77.042313}{Physical Review
  A {\bf 77}, 042313}~(2008).

\bibitem{Cui2017DiagonalHierarchy}
Shawn~X. Cui, Daniel Gottesman, and Anirudh Krishna.
\newblock ``Diagonal gates in the clifford hierarchy''.
\newblock \href{https://dx.doi.org/10.1103/PhysRevA.95.012329}{Physical Review
  A {\bf 95}, 012329}~(2017).

\bibitem{Beigi2010C3Operations}
S.~Beigi and P.W. Shor.
\newblock ``$c_3$, semi-clifford and genralized semi-clifford operations''.
\newblock \href{https://dx.doi.org/10.26421/QIC10.1-2-4}{Quantum Information
  and Computation {\bf 10}, 41--59}~(2010).

\bibitem{Bengtsson2014OrderHierarchy}
Ingemar Bengtsson, Kate Blanchfield, Earl Campbell, and Mark Howard.
\newblock ``Order 3 symmetry in the clifford hierarchy''.
\newblock \href{https://dx.doi.org/10.1088/1751-8113/47/45/455302}{Journal of
  Physics A: Mathematical and Theoretical {\bf 47}, 455302}~(2014).

\bibitem{deSilva2021EfficientDimensions}
Nadish de~Silva.
\newblock ``Efficient quantum gate teleportation in higher dimensions''.
\newblock \href{https://dx.doi.org/10.1098/rspa.2020.0865}{Proceedings of the
  Royal Society A: Mathematical, Physical and Engineering Sciences {\bf 477},
  20200865}~(2021).

\bibitem{Zanardi1997NoiselessCodes}
P.~Zanardi and M.~Rasetti.
\newblock ``Noiseless quantum codes''.
\newblock \href{https://dx.doi.org/10.1103/PhysRevLett.79.3306}{Physical Review
  Letters {\bf 79}, 3306--3309}~(1997).

\bibitem{Lidar1998a}
D.~A. Lidar, I.~L. Chuang, and K.~B. Whaley.
\newblock ``Decoherence-free subspaces for quantum computation''.
\newblock \href{https://dx.doi.org/10.1103/PhysRevLett.81.2594}{Physical Review
  Letters {\bf 81}, 2594--2597}~(1998).

\bibitem{Choi2006MethodSubsystems}
Man-Duen Choi and David~W. Kribs.
\newblock ``Method to find quantum noiseless subsystems''.
\newblock \href{https://dx.doi.org/10.1103/PhysRevLett.96.050501}{Physical
  Review Letters {\bf 96}, 050501}~(2006).

\bibitem{Nielsen:2000a}
Michael~A Nielsen and Isaac~L Chuang.
\newblock ``Quantum computation and quantum information''.
\newblock Cambridge University Press. ~(2000).

\bibitem{Gottesman:1997a}
Daniel Gottesman.
\newblock ``Stabilizer codes and quantum error correction''~(1997).
\newblock
  \href{http://arxiv.org/abs/quant-ph/9705052}{arXiv:quant-ph/9705052}.

\bibitem{Heinrich2021Thesis}
Markus Heinrich.
\newblock ``On stabilizer techniques and their application to simulation and
  certification of quantum devices.''~(2021).

\bibitem{Osima1954OnGroup}
M.~Osima.
\newblock ``On the representations of the generalized symmetric group.''.
\newblock Math. J. Okayama Univ. {\bf 4}, 39--56~(1954).

\bibitem{Puttaswamaiah1969UnitaryGroups}
B.~M. Puttaswamaiah.
\newblock ``Unitary representations of generalized symmetric groups''.
\newblock \href{https://dx.doi.org/10.4153/CJM-1969-003-3}{Canadian Journal of
  Mathematics {\bf 21}, 28--38}~(1969).

\bibitem{Newman2018LimitationsEncryption}
Michael Newman and Yaoyun Shi.
\newblock ``Limitations on transversal computation through quantum homomorphic
  encryption''.
\newblock \href{https://dx.doi.org/10.26421/QIC18.11-12-3}{Quantum Information
  and Computation {\bf 18}, 927--948}~(2018).

\bibitem{Newman2022PersonalCommunication}
Michael Newman.
\newblock ``Personal communication''~(2022).

\bibitem{Aaronson:2004a}
Scott Aaronson and Daniel Gottesman.
\newblock ``Improved simulation of stabilizer circuits''.
\newblock \href{https://dx.doi.org/10.1103/PhysRevA.70.052328}{Physical Review
  A {\bf 70}, 052328}~(2004).

\bibitem{Webster2022}
Mark~A. Webster, Benjamin~J. Brown, and Stephen~D. Bartlett.
\newblock ``The xp stabiliser formalism: a generalisation of the pauli
  stabiliser formalism with arbitrary phases''.
\newblock \href{https://dx.doi.org/10.22331/q-2022-09-22-815}{Quantum {\bf 6},
  815}~(2022).

\bibitem{Liu2021}
Zi-Wen Liu and Sisi Zhou.
\newblock ``Approximate symmetries and quantum error correction''.
\newblock \href{https://dx.doi.org/10.1038/s41534-023-00788-4}{npj Quantum
  Information {\bf 9}, 119}~(2023).

\bibitem{Eastin:2008a}
Bryan Eastin and Emmanuel Knill.
\newblock ``Restrictions on transversal encoded quantum gate sets''.
\newblock \href{https://dx.doi.org/10.1103/PhysRevLett.102.110502}{Phys. Rev.
  Lett. {\bf 102}, 110502}~(2009).

\bibitem{Hayden2007}
Patrick Hayden and John Preskill.
\newblock ``Black holes as mirrors: quantum information in random subsystems''.
\newblock \href{https://dx.doi.org/10.1088/1126-6708/2007/09/120}{Journal of
  High Energy Physics {\bf 2007}, 120--120}~(2007).

\bibitem{Yoshida2017}
Beni Yoshida and Alexei Kitaev.
\newblock ``Efficient decoding for the hayden-preskill protocol''~(2017).
\newblock  url:~\url{http://arxiv.org/abs/1710.03363}.

\bibitem{Leone2022}
Lorenzo Leone, Salvatore F.~E. Oliviero, Seth Lloyd, and Alioscia Hamma.
\newblock ``Learning efficient decoders for quasi-chaotic quantum
  scramblers''~(2022).
\newblock  \href{https://dx.doi.org/10.1103/PhysRevA.109.022429}{Physical Review A {\bf 109} 022429}~(2022).

\bibitem{Wirthmuller2011AutomorphismsCodes}
Klaus Wirthmüller.
\newblock ``Automorphisms of stabilizer codes''~(2011).
\newblock  url:~\url{http://arxiv.org/abs/1102.5715}.

\bibitem{OConnor:2021a}
Tomas Jochym-O'Connor and Theodore~J. Yoder.
\newblock ``Four-dimensional toric code with non-clifford transversal gates''.
\newblock \href{https://dx.doi.org/10.1103/PhysRevResearch.3.013118}{Physical
  Review Research {\bf 3}, 013118}~(2021).

\bibitem{Campbell2017UnifiedCost}
Earl~T. Campbell and Mark Howard.
\newblock ``Unified framework for magic state distillation and multiqubit gate
  synthesis with reduced resource cost''.
\newblock \href{https://dx.doi.org/10.1103/PhysRevA.95.022316}{Physical Review
  A {\bf 95}, 022316}~(2017).

\end{thebibliography}


\end{document}